\documentclass[11pt,draftcls,onecolumn]{IEEEtran}
\usepackage{amssymb}
\usepackage{cite}
\usepackage{multirow}
\usepackage{tabu}
\usepackage{lineno}
\usepackage{makecell}
\usepackage{xcolor}
\usepackage{tikz}
\usepackage{graphics}
\usepackage{setspace}
\usepackage{threeparttable}
\usepackage{algorithmic}
\usepackage{graphicx}
\usepackage{textcomp}

\ifCLASSOPTIONcompsoc
\usepackage[nocompress]{cite}
\else
\usepackage{cite}
\fi

\usepackage[cmex10]{amsmath}
\usepackage{amsthm}
\usepackage{booktabs}

\newtheorem{theorem}{Theorem}
\newtheorem{lemma}[theorem]{Lemma}

\theoremstyle{definition}%
\newtheorem{definition}{Definition}
\theoremstyle{remark}%
\newtheorem{remark}{Remark}
\newtheorem{example}{Example}

\newcommand{\fq}{\mathbb{F}_q}

\date{}

\renewcommand{\baselinestretch}{1.0}
\title{A Partial-Exclusion Repair Scheme for MDS Codes}

\author{Wei Zhao, Fang-Wei Fu and Ximing Fu
	\thanks{Wei Zhao is with the School of Mathematics, Foshan University, Guangdong, 528000, China	(e-mail: zhaowei@fosu.edu.cn; fuzicong@sina.cn).}
	\thanks{Fang-Wei Fu is with Chern Institute of Mathematics and LPMC, Nankai University, Tianjin, 300071, China  (email: fwfu@nankai.edu.cn).}
	\thanks{Ximing Fu is with the School of Computer Science and Technology, Harbin Institute of Technology, Shenzhen, China (email:fuximing@hit.edu.cn).}  
}

\begin{document}

\maketitle

\begin{abstract}
	For scalar maximum distance separable (MDS) codes, the conventional repair schemes that achieve the cut-set bound with equality for the single-node repair have been proven to require a super-exponential sub-packetization level.
	As is well known, such an extremely high level severely limits the practical deployment of MDS codes.
	To address this challenge, we introduce a partial-exclusion (PE) repair scheme for scalar linear codes. 
	In the proposed PE repair framework, each node is associated with an exclusion set. 
	The cardinality of the exclusion set is called the flexibility of the node. 
	The maximum value of flexibility over all nodes defines the \textit{flexibility} of the PE repair scheme. Notably, the conventional repair scheme is the special case of
	PE repair scheme where the flexibility is 1.

	Under the PE repair framework, for any valid flexibility, we establish a lower bound on the sub-packetization level of MDS codes that meet the cut-set bound with equality for single-node repair.
	This bound reveals an intrinsic trade-off among  the flexibility, the smallest possible sub-packetization level, and the minimum normalized repair bandwidth (\textit{i.e.}, the minimum number of bits transmitted per repaired bit). 
	By plotting the trade-off curve of MDS codes, it is shown that the PE repair scheme with the intermediate flexibility $t$ attains better trade-off compared to the two endpoints (\textit{i.e.}, the conventional case $t=1$ and the naive repair approach $t=n-k$).
	To realize MDS codes attaining the cut-set bound under the PE repair framework, we propose two generic constructions of Reed-Solomon (RS) codes. Moreover, we demonstrate that for a sufficiently large flexibility, the sub-packetization level of our constructions is strictly lower than the known lower bound established for the conventional repair schemes. 
	This implies that, from the perspective of sub-packetization level, our constructions outperform all existing and potential constructions designed for conventional repair schemes. 
	To validate this superiority, we explicitly construct a $(12,8)$ RS code with flexibility 3 via the first construction. 
	This code achieves a sub-packetization level 2310, which is substantially less than both the known lower bound of 510510 established for the conventional repair scheme and
	the level of up to $3\times 10^{14}$ required by prior constructions.
	Our second construction achieves even lower sub-packetization level, as exemplified by an explicit $(17,9)$ RS code with flexibility 7 and sub-packetization level 30. This level is drastically less than the known lower bound of 9699690 established for the conventional repair scheme and
	the prior achievable level of up to $2.75\times 10^{30}$.
	Finally, we implement the repair process for these codes as executable Magma programs, thereby exhibiting the practical efficiency of our constructions.
\end{abstract}
\begin{IEEEkeywords}
	Distributed storage, MDS codes, Reed-Solomon codes, Cut-Set bound, sub-packetization level.
\end{IEEEkeywords}
\section{Introduction}
Node failures are inevitable in large-scale distributed storage systems (DSSs). 
Maximum distance separable (MDS) codes constitute one of the most widely used families of erasure codes in DSSs \cite{dinh2022practical}. According to measurements from Facebook's production warehouse cluster, approximately 98.08\% of failure incidents involve a single failed node on average \cite{rashmi2013solution}. The problem of recovering the data stored in the failed node exactly is referred to as the \textit{(exact) repair problem}.

Guruswami and Wootters~\cite{guruswami2017} introduced a definition of the \textit{(linear)} repair scheme for a scalar linear code. In their approach, the erased symbol is recovered using the sub-symbols derived from the coded symbols, rather than using the coded symbols themselves. The field to which the sub-symbols belong is referred to as the \textit{base field} of a code. The \textit{repair bandwidth} is defined as the total number of symbols in the base field required to be transmitted to repair the failed nodes.
The \textit{sub-packetization level} of a code is referred to the extension degree of the symbol field over the base field. They demonstrated that, for (scalar) MDS codes, the explicit repair scheme they proposed can yield a strictly lower repair bandwidth compared to the naive repair method.

This work has stimulated extensive subsequent research on the repair problem of MDS codes. In this line of research, an MDS code construction referred to encompass a family of codes together with a specific repair scheme designed for them. The \textit{cut-set bound} establishes the minimum possible repair bandwidth for recovering failed nodes.
Moreover, we say that a construction or a code achieves (resp., asymptotically approaches) the cut-set bound if the repair bandwidth of its equipped repair scheme meets the cut-set bound with equality (resp., approaches the cut-set bound as the code length tends to infinity). 
If the equipped repair scheme of a construction or a code is designed according to the repair scheme definition given in \cite{guruswami2017} or other repair scheme definitions, we term that the construction or the code is \textit{under such a repair framework}.


For single-node repair of MDS codes, on one hand, considerable efforts have been devoted to constructing MDS codes with non-trivial repair schemes over a reasonably sized symbol field \cite{Berman2022RepairingRC,guruswami2017,Li2019OnTS,Xu2024CooperativeRO}. It should be noted that these constructions do not achieve the cut-set bound.
On the other hand, other research has focused on constructing MDS codes that achieve or asymptotically approach the cut-set bound, albeit at the cost of requiring an extremely large symbol field, see  \cite{Chowdhury2021ImprovedSF,ye2016explicit} for asymptotically optimal and \cite{tamo2017optimal, tamo2019repair} for optimal constructions.
For multiple-node repair of MDS codes, in
general, there are two models called centralized model \cite{cadambe2013asymptotic, rawat2018centralized, ye2017explicit, zorgui2017centralized} 
and cooperative model \cite{balaji2018erasure,ford2010availability,hu2013analysis,shum2013cooperative, vajha2018clay}. Another research direction focuses on the repair of MDS codes in the rack-aware storage model \cite{chen2019explicit,wang2023low,wang2023rack,yang2024transformation}.

For MDS codes achieving the cut-set bound under the conventional repair framework, it has been established in \cite{tamo2017optimal} that the asymptotic lower bound of sub-packetization level $L$ of such codes is $L\geq e^{(1+o(1))k\log k}$.
This lower bound reveals that the current repair framework for construction of MDS codes achieving the cut-set bound necessarily entail extremely high sub-packetization level. It is well recognized that such excessive sub-packetization level not only imposes significant challenges for practical implementation, but also restricts the minimum size of a file that can be stored. These limitations motivate the need for further investigation. In this paper, we focus on the scenario of the single-node repair.

\subsection{Our Results}
\textbf{1). A PE repair scheme.} We introduce a partial-exclusion (PE) linear repair scheme for $(n, k)$ scalar linear codes (see Definition~\ref{definition}).
For a PE repair scheme, each node is assigned an exclusion set. 
Throughout the paper we use the notation $[n]:=\{1,2,\ldots,n\}$.
For any failed node $i\in [n]$ with the associated exclusion set $A_i\subseteq [n]$, 
the helper nodes for repair must then be chosen from the set of surviving nodes outside $A_i$. 
Consequently, the allowable number $d_i$ of helper nodes for repairing the node $i$ is $k\leq d_i\leq n-t_i$, where $t_i$ is the cardinality of $A_i$
satisfies $1\leq t_i\leq \min\{k,n-k\}$. 
The local parameter $t_i$ for each $i\in [n]$ is referred to as the \textit{flexibility} of the node $i$. The maximum value among all $t_i$, denoted by $t$, is defined as the \textit{flexibility} of the PE repair scheme. 
In general, the values of $t_i$ may differ among different nodes. 
Note that a PE repair scheme reduces to the conventional repair scheme when $t_i=1$ for all $i\in[n]$, \textit{i.e.}, $t=1$. 
For any valid flexibility, we establish a lower bound of sub-packetization level for scalar MDS codes achieving the cut-set bound for the single-node repair under the PE repair framework (see Theorem~\ref{our_bound} and Theorem~\ref{cor_bound}). This bound is significantly reduced when the flexibility is increased.
%
%
%

\textbf{2). A trade-off.} There exists a trade-off governed by the flexibility of PE repair schemes.
For the sake of clarity and illustration, we assume a uniform value $t_i=t$ for all $i\in[n]$.
%
To precisely characterize the trade-off, 
for a repair scheme with repair bandwidth $\beta$ and sub-packetization level $L$, we define the \textit{normalized repair bandwidth} of the scheme as $\bar{\beta} = \beta / L$, representing the number of bits transmitted per repaired bit.
The \textit{minimum normalized repair bandwidth} for a PE repair scheme refers to the normalized repair bandwidth of repairing a failed node by accessing the maximum allowable number of helper nodes, which is $n-t$ for a given flexibility $t$. The \textit{smallest possible sub-packetization level} for MDS codes achieving the cut-set bound for the single-node repair
is defined as the lower bound of sub-packetization level for such MDS codes.
In section~\ref{tradeoffsec}, for MDS codes under the PE repair framework, we demonstrate that there exists a trade-off among the smallest possible sub-packetization level, the minimum normalized repair bandwidth, and the flexibility. 

For the $(14,10)$~RS codes over the base field $\mathbb{F}_2$, we present the corresponding trade-off curve (see Fig.~1). 
Notably, from the viewpoint of repair bandwidth, $t = 1$ corresponds to conventional repair schemes whose repair bandwidth matching the $(14,10)$ minimum storage regenerating (MSR) codes, while $t = 4$ corresponds to the naive repair approach for general MDS codes.
Fig.~1 shows that in practical scenarios, $t = 2$ and $t=3$ achieve a better trade-off than the two endpoints ($t=1$ and $t=4$). This is primarily because $t = 1$ requires an impractically high sub-packetization level ($2.23\times 10^{8}$) while offering only a marginal gain in the minimum normalized repair bandwidth. Meanwhile, $t=4$ requires a relatively high repair overhead for the single-node repair.
\begin{figure}[ht]%
	\centering
	\label{tradeoff_fig}
	\includegraphics[width=0.7\textwidth]{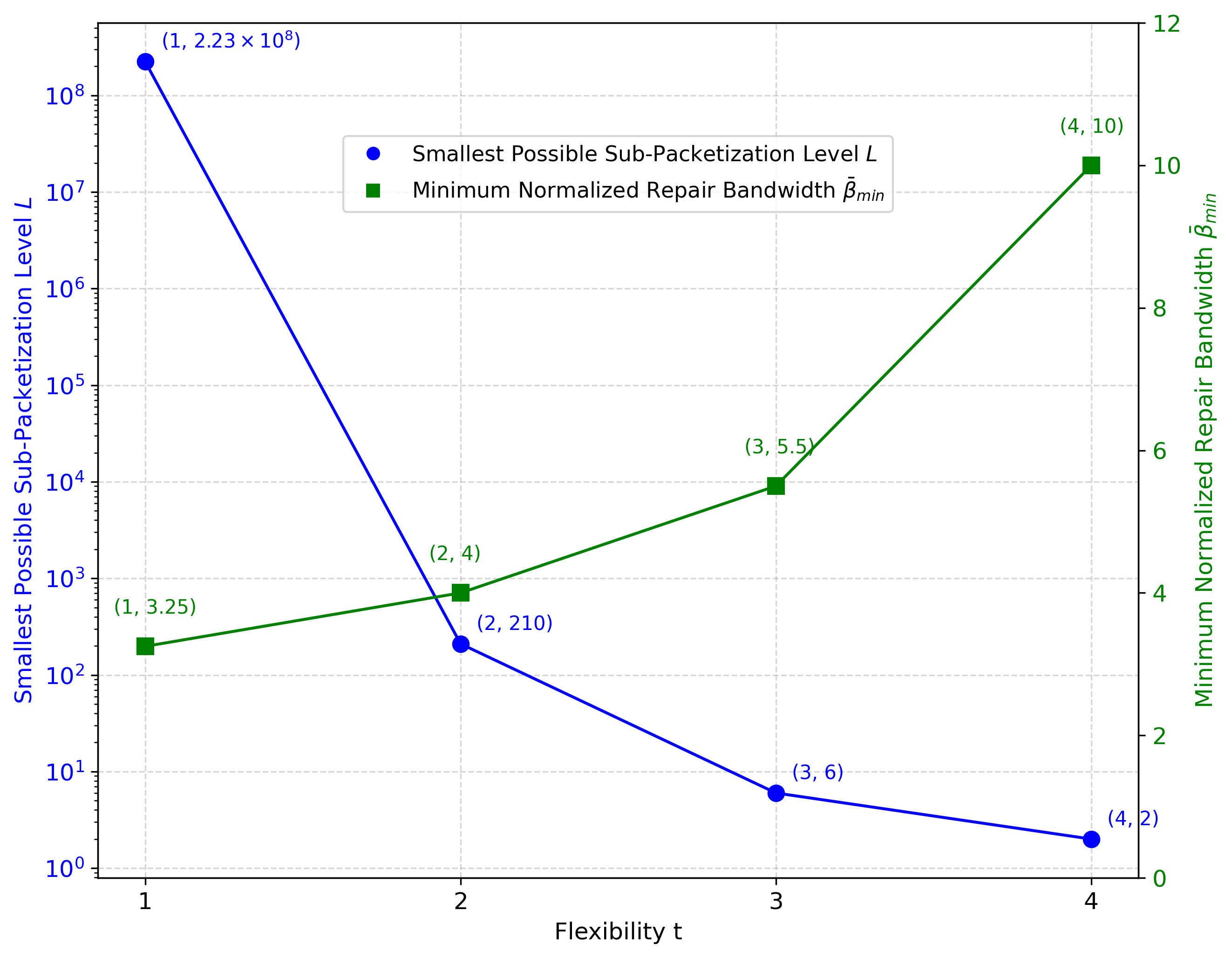}
	\caption{The trade-off between flexibility $t$, the smallest possible sub-packetization level $L$ and the minimum normalized repair bandwidth $\bar{\beta}_{min}$, for the $(14,10)$ RS codes over the base field $\mathbb{F}_2$.}
\end{figure}

\textbf{3). A generic construction of RS codes attaining the cut-set bound under the PE repair framework.}
For any positive integers $n\geq k\geq 1$ and $1\leq t\leq \min\{k,n-k\}$, we construct a class of $(n,k)$ RS codes that attain the cut-set bound under the PE repair framework for any valid flexibility $t$ (see Construction~1). When $t\geq \frac{n}{k-3}$, the sub-packetization level of the constructed RS codes is even strictly less than the sub-packetization lower bound established for the conventional repair schemes ($t=1$) \cite{tamo2019repair}. 
This indicates that, from the perspective of sub-packetization level, for RS codes achieving the cut-set bound, our construction outperforms all the existing and potential constructions designed for conventional repair schemes.
Given that $(12,8)$ RS codes are typical RS codes that adopted in industry Cloud Storage \cite{lai2015atlas}. 
To validate the superiority of our construction, we provide an explicit $(12,8)$ RS code over the base field $\mathbb{F}_{2}$ using the proposed construction (see Example~\ref{example1}).
Accordingly, Table~\ref{comparison} presents a comparison between the explicit $(12,8)$ RS codes derived from our construction and existing constructions that achieve or asymptotically approach the cut-set bound.
It is worth noting that our proposed code achieves the sub-packetization level 2310, while the minimum possible sub-packetization level for $(12,8)$ RS codes achieving the cut-set bound under conventional repair framework is $510510$. Meanwhile, the existing constructions designed for conventional repair schemes with the same normalized repair bandwidth as ours require a sub-packetization level approximately $ 3.04\times 10^{14}$.
This prohibitively high level pose a fundamental barrier to practical deployment.
Additionally, to verify the efficiency of our construction, we implement the repair program of the constructed $(12,8)$ RS code using Magma software on a regular commercial computer. 

\textbf{4). The second generic construction of RS codes attaining the cut-set bound under the PE repair framework.}
To further reduce the sub-packetization level of RS codes that achieves the cut-set bound under the PE repair framework, a new class of RS codes is presented with a relatively stringent constraint on $t$ (see Construction 2).
Accordingly, the sub-packetization level of the second construction can also strictly less than the sub-packetization lower bound for conventional repair schemes (see Remark~\ref{sp_compare_2}).
Furthermore, we provide an explicit $(17,9)$ RS code over the base field $\mathbb{F}_{4}$ 
with sub-packetization level $30$ and flexibility $7$ and detail its repair scheme  (see Example~\ref{example2}). 
It is worth noting that the minimum possible sub-packetization level for $(17,9)$ RS codes achieving the cut-set bound under the conventional repair framework is $9699690$. Meanwhile, as shown in Table~\ref{comparison}, even when the average minimum normalized repair bandwidth per node of our codes is set to be slightly less than the normalized repair bandwidth of existing constructions, the existing constructions designed for conventional repair schemes require a sub-packetization level approximately $ 2.75\times 10^{30}$.
Additionally, we also implemented the repair scheme of our proposed $(17,9)$ RS code using Magma software.
	\begin{table}[t]
	\centering
	\caption{Constructions of Explicit RS codes with optimal or asymptotically optimal repair bandwidth}
	\resizebox{\textwidth}{!}{
		\begin{tabular}{ccccccc}
			\toprule[0.8pt]
			\multirow{2}{*}{Constructions}  & \multirow{2}{*}{$(q,n,k)$} &			
			Sub-Packetization &Repair Bandwidth  & Repair & Normalized & Repair Meets \\
			& & Level &  (Bits) $^{\mathrm{a}}$&  Locality $^{\mathrm{b}}$ & Repair Bandwidth$^{\mathrm{c}}$ &  Cut-Set Bound\\
			\midrule
			\cite{Chowdhury2021ImprovedSF} & $(2,12,8)$ &  $8192$ & $47104$  & 11 & 5.75 &  Asymptotically \\
			\cite{ye2016explicit} & $(2,12,8)$ &  $16777216$ & $54525952$  &  11 & 3.25 &  Asymptotically \\
			\cite{tamo2017optimal,tamo2019repair} & $(2,12,8)$ &  $\approx3.04\times 10^{14}$ & $\approx 1.36\times 10^{15}$  &  9 & 4.5 &  Yes \\
			Construction 1 (Sec. IV) & $(2,12,8)$ &  $2310$ & $10395$  &  9 & 4.5 & Yes \\
			\hline
			\cite{tamo2017optimal,tamo2019repair} & $(4,17,9)$ &  $\approx 2.75\times 10^{30}$ & $\approx 2.02\times 10^{31} $  & 11 &  $11/3$ & Yes \\
			& & &  Nodes in $A_1$: 300 $^{\mathrm{d}}$& 10 & 5 &\\
			\multirow{-1}{*}{Construction 2 (Sec. V)} & \multirow{-1}{*}{$(4,17,9)$} & \multirow{-1}{*}{$30$} &Nodes in $A_2$: 220 \,\,\, & 11 & 11/3 & \multirow{-1}{*}{Yes}\\
		     & & &   Nodes in $A_3$: 156 \,\,\, & 13 & 2.6 &\\
			\bottomrule[0.8pt]\\
		\end{tabular}
	}
	\footnotesize
	\raggedright
	\textbf{Note:} a. The repair bandwidth column quantifies the total number of bits required to repair an erased coded symbol;
	b. The repair locality of a node is defined as the number of helper nodes accessed for the repair of the node;
	c. The normalized repair bandwidth is defined as the number of bits transmitted per repaired bit, given by $\frac{d}{d-k+1}$, where $d$ is the repair locality and $k$ is the dimension of codes;
	d. The nodes in the $(17,9)$ RS code via Construction 2 are partitioned into three disjoint groups, $A_1, A_2$ and $A_3$, where each group requires a different repair locality. Consequently, the repair bandwidth differs across the groups.
	\label{comparison}
\end{table}

\subsection{Organization of the Paper}
The rest of this paper is organized as follows. In Section \ref{section:2}, we review some notations and foundational results. In Section~\ref{repair}, we introduce a PE repair scheme and establish a lower bound of sub-packetization level of scalar MDS codes attaining the cut-set bound for the single-node repair under the PE repair framework. In Section~\ref{tradeoffsec}, we analyze the trade-off among the smallest possible sub-packetization level, the minimum normalized repair bandwidth, and the flexibility for scalar MDS codes under the PE repair framework.
In Section~\ref{construction1} and Section~\ref{construction2}, we present two generic constructions of RS codes that meet the cut-set bound with equality for the single-node repair under the PE repair framework. Conclusions is provided in Section~\ref{conclusion}.

\section{Preliminaries}\label{section:2}
\subsection{Reed-Solomon Codes}
Let $q$ be a prime power, $\fq$ be the finite field of $q$ elements. Denote the multiplicative group of the finite field $\fq$ as $\fq^{*}$. 
An $(n,k)$ (scalar) linear code $\mathcal{C}$ is defined as an $\fq$-linear subspace of $\fq^n$ with dimension $k$.
The \textit{dual code} $\mathcal{C}^{\perp}$ of $\mathcal{C}$ is the subspace of $\fq^n$ given by
\[\left\{(c_1,\ldots,c_n)\in \fq^n:\,\sum_{i=1}^{n}b_ic_i=0 \text{ for all } (b_1,\ldots,b_n)\in \mathcal{C}\right\}.\]
For two codewords $x=(x_1,\ldots,x_n)$ and $y=(y_1,\ldots,y_n)$ of $\mathcal{C}$, the \textit{(Hamming) distance} between $x$ and $y$ is defined as the number of coordinates in which they differ, \textit{i.e.},
$d_H(x,y):=|\{1\leq i\leq n :\, x_i\neq y_i\}.$
The \textit{minimum (Hamming) distance} of $\mathcal{C}$ is defined as
$d_H(\mathcal{C}):=\min_{x,y\in \mathcal{C}, \, x\neq y} d_H(x,y).$
An $(n,k)$ linear code $\mathcal{C}$ over a finite field $\fq$ with minimum distance $d_H$ is called a \textit{maximum distance separable (MDS) code} if it achieves the Singleton bound with equality, \textit{i.e.}, $d_H=n-k+1$.

Let $A=\{\alpha_1,\alpha_2,\ldots,\alpha_n\}$ be a subset of $\fq$ consisting of $n$ distinct elements in $\fq$. For a positive integer $k$ with $1\leq k\leq n$, the \textit{generalized Reed-Solomon code} $GRS(n,k,A,\mathbf{v})$ is defined as
\[\Bigl\{\bigl(v_1f(\alpha_1),\ldots,v_nf(\alpha_n)\bigr)\,:\,f(x)\in \fq[x],\, \deg(f)<k\Bigr\},\]
where $\mathbf{v}=(v_1,v_2,\ldots,v_n)\in\left(\fq^{*}\right)^{n}$, and $\deg(f)$ denotes the degree of $f(x)$. If $\mathbf{v}=(1,1,\ldots,1)$, then the GRS code is referred to as an \textit{Reed-Solomon (RS) code} and denoted as $RS(n,k,A)$.
It is well known that $GRS(n,k,A,\mathbf{v})$ is an MDS code with dimension~$k$. The dual code of $GRS(n,k,A,\mathbf{v})$ is a $GRS(n,n-k,A,\mathbf{v}^{*})$ for some vector $\mathbf{v}^{*}=(v_1^{*},v_2^{*},\ldots,v_n^{*})\in\left(\fq^{*}\right)^{n}$ \cite{lidl97}.
Consequently, for any polynomials $f(x),\, g(x) \in \fq[x]$ with degrees strictly less than $k$ and $n-k$, respectively, it holds that $$\sum_{i=1}^{n}v_i v_i^{*} f(\alpha_i)g(\alpha_i) = 0.$$ Such polynomials $f(x)$ and $g(x)$ are referred to as the \textit{encoding polynomial} and the \textit{parity-check polynomial} of $GRS(n,k,A,\mathbf{v})$, respectively.

Let $\mathbb{F}_{q^L}$ be the extension field of degree $L$ over $\fq$. 
The \textit{trace map} from $\mathbb{F}_{q^L}$ onto $\fq$
is defined as
\begin{equation*}
	Tr(\alpha):=\alpha+\alpha^q+\cdots+\alpha^{q^{L-1}}\in \fq,\,\alpha\in \mathbb{F}_{q^L},
\end{equation*}
which is an $\fq$-linear surjection.
For any positive divisor $m$ of $L$, the finite field $\mathbb{F}_{q^m}$ is a subfield of $\mathbb{F}_{q^L}$, and $\mathbb{F}_{q^L}$ may be viewed as a vector space of dimension $\frac{L}{m}$ over $\mathbb{F}_{q^m}$.
Considering $\mathbb{F}_{q^L}$ as an $L$-dimensional $\fq$-space, and suppose that $\{\alpha_1,\alpha_2,\ldots,\alpha_L\}$ is a basis of $\mathbb{F}_{q^L}$ over $\fq$. It follows from \cite[Page 58]{lidl97} that there exists a unique basis $\{\beta_1,\beta_2,\ldots,\beta_L\}$ of $\mathbb{F}_{q^L}$ over $\fq$ satisfying that
\begin{equation}
	Tr(\alpha_i\beta_j)=\left\{ \begin{matrix}
		1, \text{ \, for } i=j, \\
		0, \text{ \, for } i\neq j,  \\
	\end{matrix} \right.
\end{equation}
for all $1\leq i,j\leq L$. We refer to the set $\{\beta_1,\beta_2,\ldots,\beta_L\}$ as the \textit{(trace) dual basis} of  $\{\alpha_1,\alpha_2,\ldots,\alpha_L\}$. If two bases $\{\alpha_1,\alpha_2,\ldots,\alpha_L\}$ and $\{\beta_1,\beta_2,\ldots,\beta_L\}$ are a pair of dual bases of $\mathbb{F}_{q^L}$ over $\fq$, then for any $\alpha\in \mathbb{F}_{q^L}$, it holds that
\begin{equation}\label{dual_bases}
	\alpha=\sum_{j=1}^{L}Tr(\beta_j\alpha)\alpha_j.
\end{equation}
So each coordinate of $\alpha$ in the basis of $\{\alpha_1,\alpha_2,\ldots,\alpha_L\}$ is obtained by applying the trace map to the product of $\alpha$ with the corresponding dual basis element. 
Dual bases are widely used in the design of repair schemes for RS codes \cite{Berman2022RepairingRC,guruswami2017,Li2019OnTS,Xu2024CooperativeRO,Chowdhury2021ImprovedSF,ye2016explicit,tamo2017optimal}.

\subsection{Repair schemes for MDS codes}

Consider an $(n, k, L)$ (linear) array code $\mathcal{C}$ over a finite field $F$. Specifically, $\mathcal{C}$ is a collection of codewords $c= (c_1, \dots, c_{n})$, where $c_j = (c_{j,1}, \dots, c_{j,L})^T \in F^L$ for $j = 1, \dots, n $, and $\mathcal{C}$ forms a $(kL)$-dimensional linear subspace over $F$.  
The parameter $L$ is termed the \textit{sub-packetization level} of~$\mathcal{C}$.
Furthermore, if $\mathcal{C}$ is strengthened to be linear over $F^L$, then $\mathcal{C}$ is referred to as a \textit{scalar} code. Moreover, $\mathcal{C}$ is maximum distance separable (MDS) if each codeword in $\mathcal{C}$ can be recovered from any $k$ of its coordinates. 
For any codeword $c=(c_1,c_2,\ldots,c_n)$ of $\mathcal{C}$, where $c_i\in F^L$, the symbols of $c_i$ are stored in node $i$ for each $i\in [n]$.
We do not distinguish between nodes and the coordinates of a codeword.

Denote by $\mathcal{R} \subset [n]\setminus\{i\}$ the set of $d= |\mathcal{R}|$ helper nodes, where $k \leq d \leq n-1$. 
Suppose the failed node has index $i$. To recover it, we utilize the symbols from the helper nodes $c_e$ ($e \in \mathcal{R}$).
A repair scheme with $d$ helper nodes is formed of $d$ functions $f_t : F^{L} \to F^{\beta_t}$, $t= 1, \dots, d$ and a function $g : F^{\sum_{t=1}^{d} \beta_t}  \to F^{L}$. For each $e_t \in \mathcal{R}$ the function $f_t$ maps $c_{e_t}$ to some $\beta_t$ symbols of $F$. The function $g$ accepts these symbols from all the helper nodes as arguments, and returns the values of the failed nodes $c_{i}=g(\{f_t(c_{e_t}), e_t \in \mathcal{R}\})$
for all $c \in \mathcal{C}.$ If the function $f_t,g$ are $F$-linear, the repair scheme is said to be \textit{linear}. 
The quantity $\beta(\mathcal{R}) = \sum_{e \in \mathcal{R}} \beta_e$ is called the \textit{repair bandwidth} of recovering the failed node $i$ from the helper nodes in $\mathcal{R}$. The repair bandwidth of the scheme is defined as the maximum of $\beta(\mathcal{R})$ over all possible node failures and selections of helper nodes.
The cut-set bound \cite{cadambe2013asymptotic,dimakis2010network} shows that the repair bandwidth is lower bounded by
\begin{equation}\label{cut-set-bound}
	\beta(\mathcal{R})\geq\frac{|\mathcal{R}|L}{|\mathcal{R}|-k+1}.
\end{equation} 

The repair methodology for array codes also applies to scalar codes. To be precise, in a DSS employing a scalar linear $(n,k)$ MDS code $\mathcal{C}$ over a finite field $E$, the stored file is divided into $k$ blocks, each represented as an element of $E$.
These $k$ elements of $E$ are encoded into a codeword consisting of $n$ symbols of $E$ using $\mathcal{C}$, which are then distributed across $n$ nodes.
While a failed node can be recovered by downloading all data from any $k$ surviving nodes due to the MDS property, this naive repair strategy incurs prohibitively high repair bandwidth, limiting the practical applicability of scalar MDS (particularly RS) codes in DSS. To reduce bandwidth, one may regard $E$ as an $L$-dimensional vector space over a subfield $F$ of $E$, 
where $L=\log_{|F|} |E|$. This allows the scalar linear $(n,k)$ MDS code over $E$ to be seen as a linear $(n, k, L)$ array code over $F$.
Following the array code paradigm, a repair scheme may download only partial content (subsymbols over $F$) from helper nodes. The \textit{repair bandwidth} is the total number of
$F$-symbols downloaded in the worst case. Crucially, the cut-set bound \eqref{cut-set-bound} also applies to scalar MDS codes under this subsymbol-based repair model.

\section{A Partial-Exclusion Linear Repair Scheme}\label{repair}
Guruswami and Wootters conducted seminal research on the repair problem of scalar linear codes and introduced the concept of the linear repair scheme \cite{guruswami2017}. 
The repair scheme proposed in this paper is slightly different from that scheme. To better align with the conventional repair scheme, we introduce the definition of a partial-exclusion repair scheme by adding the property of flexibility into the conventional repair scheme. 
Following the treatment in \cite{guruswami2017}, a linear code is regarded as a set of functions evaluated at a set of evaluation points.
A code $\mathcal{C}$ of length $n$ over a field $E$ is a subset $\mathcal{C}\subseteq E^n$.
Let $\mathcal{F}$ be a collection of functions defined on $E$. Let $A\subseteq E$ be a set of evaluation points.
The code $\mathcal{C}\subseteq E^n$ determined by $\mathcal{F}$ and $A$ is 
\[\mathcal{C}=\{(f(\alpha_1),f(\alpha_2),\ldots,f(\alpha_n))\,:\,f\in \mathcal{F}\}.\]
In particular, the positions $1,\ldots,n$ of a codeword are indexed as the evaluation points
$\alpha_1,\ldots,\alpha_n$ in the following definition. 

\begin{definition}[A Partial-Exclusion Linear Exact Repair Scheme]\label{definition}
	Let $\mathcal{C}$ be a scalar linear code over $E$ of length $n$ and dimension $k$, given by a collection of function $\mathcal{E}$ and a set of evaluation points $A=\{\alpha_1,\ldots,\alpha_n\}$. 
	For each $i\in [n]$, the node $\alpha_i$ is associated with an exclusion set $A_i$ such that $\alpha_i\in A_i$ and the cardinality of $A_i$ is $t_i$, where $1\leq t_i\leq \min\{n-k,k\}$.
	A partial-exclusion (PE) linear exact repair scheme $\mathcal{S}$ for $\mathcal{C}$ over a subfield $K \leq E$ consists of the following.
	\begin{itemize}
		\item For each $\alpha_i \in A$, and each $\alpha \in A \setminus A_i$, a set of queries $Q_\alpha(\alpha_i) \subseteq E$.
		\item For each $\alpha_i \in A$, a linear reconstruction algorithm that computes
		\[
		f(\alpha_i) = \sum_{j=1}^{L} \lambda_j v_j
		\]
		for coefficients $\lambda_j \in K$ and a basis $v_1, \ldots, v_L$ for $E$ over $K$, so that the coefficients $\lambda_j$ are $K$-linear combinations of the queries
		\[
		\bigcup_{\alpha \in A \setminus A_i} \Bigl\{Tr_{E/K}(\gamma f(\alpha)) : \gamma \in Q_\alpha(\alpha_i) \Bigr\}.
		\]
	\end{itemize}
	The repair bandwidth $\beta$ of the failed node $\alpha_i$ of the exact repair scheme is the total number of sub-symbols in $K$ returned by each node $\alpha \in A\setminus A_i$:
	\[
	\beta = \max_{\alpha_i \in A} \sum_{\alpha \in A \setminus A_i} \bigl| Q_\alpha(\alpha_i) \bigr|.
	\]
	For each $\alpha_i \in A$, the \textit{ flexibility} of node $\alpha_i$ is the cardinality of the associated exclusion set $A_i$:
	\[t_i=\bigl| A_i \bigr|.\]
	The \textit{ flexibility} of the exact repair scheme is the maximum value of flexibility of nodes $\alpha_i$ for all $i\in [n]$: 
	\[
	t=\max_{i\in[n]} t_i.
	\]
	The repair locality $d_{i}$ for node $\alpha_i$ is the number of $\alpha$ which are required to respond:
	\[
	d_{i} = \sum_{\alpha \in A \setminus  A_i} 
	\mathbf{1}_{Q_\alpha(\alpha_i) \neq \emptyset},
	\]
	where $k\leq d_{i}\leq n-t_i$.
	The \textit{repair locality} $d$ of the exact repair scheme is the maximum value of $d_{i}$ for all $i\in [n]$:
	\[
	d = \max_{i \in [n]} d_{i},
	\]
	where $k\leq d\leq n-t$.
	Let
$
	L = \log_{|K|}(|E|)
$ be the dimension of $E$ as a vector space over $K$. Thus, we can view each symbol from $E$ as a vector of $L$ sub-symbols from $K$. We refer to the fields $E$ and $K$ as the \textit{symbol field} and the \textit{base field} of $\mathcal{C}$, respectively. The value $L$ is referred to as the \textit{sub-packetization level} of $\mathcal{C}$.
\end{definition}

\begin{remark}[Repair Locality]
	It is important to note that the PE repair scheme considered in this paper differs from the conventional repair scheme in a subtle yet significant manner. Conventional repair schemes typically assume a uniform allowable range of the repair locality $d_i$ for all nodes $i$ which is $k\leq d_i\leq n-1$ (see \cite{Chowdhury2021ImprovedSF,ye2016explicit,tamo2017optimal,tamo2019repair} for example). As has been demonstrated, this conventional setting forces a substantially high sub-packetization level if the repair bandwidth of such repair scheme achieves the cut-set bound for the single-node repair. In contrast, our scheme allows the range of the repair locality $d_i$ ($k \leq d_i \leq n - t_i$) to vary for each $i\in [n]$. The PE repair framework reduces to the conventional one in the case that $t_i=1$ for all $i\in[n]$, \textit{i.e.}, the flexibility $t=1$. We claim that the flexibility of PE repair schemes plays a crucial role in reducing the sub-packetization level of RS codes that achieve the cut-set bound.
\end{remark}

\begin{remark}
	To comprehend the intention of modifying the definition of the conventional repair scheme, one must recognize a certain fact.
	If a failed node can be recovered with the assistance of $d < n - 1$ nodes, then it can definitely be repaired with the help of $d + 1$ nodes. 
	However, if a failed node can be recovered with the help of $d < n - 1$ nodes and the repair bandwidth can reach the cut-set bound, it cannot be inferred that the repair with $d + 1$ helper nodes can also reach the cut-set bound.
	Specifically, the PE repair schemes focus on repair scenarios in which the repair locality $d_{\alpha_i}$ satisfy $k\leq d_{\alpha_i} \leq n - t_i$ when repairing the node $\alpha_i\in A$. 
	This is a design focus, rather than a restriction on recovery capability.	 
	Therefore, when we consider the PE repair schemes in the context of MDS codes achieving the cut-set bound, the design intention of the PE repair schemes becomes apparent.
\end{remark}

In general, the parameters $t_i$ ($1\leq i\leq n$) are not necessarily equal. 
We consider a specific situation where the evaluation set $A$ is divided into $l$ pairwise-disjoint subsets $A_1, \ldots, A_l$, where $1<l\leq n$. For each $i\in [l]$, the associated exclusion set of every node in $A_i$ is exactly $A_i$. 
Unless otherwise specified, all the PE repair schemes considered in this paper are confined to this scenario.
Thus, for a specific PE repair scheme $\mathcal{S}$, there exist the corresponding pairwise disjoint exclusion sets $A_1, \ldots, A_l$.

In the following, we demonstrate that for any scalar linear MDS code that meets the cut-set bound with equality under the PE repair framework, the lower bound of sub-packetization level is determined by the flexibility of nodes.
To simplify notation, we slightly overload the notation of exclusion sets $A_1, \ldots, A_l$, which can represent either the subsets of the evaluation set $ \{\alpha_1,\ldots,\alpha_n\}$ or the subsets of the index set $\{1,\ldots,n\}$. The specific meaning will be clear from context.

Let $n$, $k$, and $l$ be positive integers satisfying $n > k \ge 1$ and $n \ge l \ge 2$, and let $\fq$ be a finite field with an extension field $\mathbb{F}_{q^L}$.
Let $\mathcal{C}$ be an $(n,k)$ scalar linear MDS code over the symbol field $\mathbb{F}_{q^L}$ with the base field $\fq$.
Suppose the node set $[n]$ of $\mathcal{C}$ is partitioned into $l$ pairwise disjoint subsets $A_1, \ldots, A_l$, where each subset $A_i$ has size $t_i$ and $1 \le t_i \le \min\{n - k, k\}$ for all $ 1\leq i \leq l $.
Let $\mathcal{S}$ be a PE repair scheme of $\mathcal{C}$ with associated exclusion sets $A_1, \ldots, A_l$.
Therefore, each node from $A_i$ can be repaired by downloading data from the nodes indexed by $\mathcal{R}_i$, where $\mathcal{R}_i\subseteq [n]\setminus A_i$. 

Without loss of generality, due to the equivalence of linear codes, we assume throughout that 
$t_1\leq t_2\leq \cdots \leq t_l$, and accordingly define these pairwise disjoint subsets explicitly as:
\begin{equation}
	A_1 = \{1,\ldots,t_1\},\quad A_i = \left\{ \sum_{j=1}^{i-1} t_j + 1, \ldots, \sum_{j=1}^{i} t_j \right\} \text{ for } i = 2, \ldots, l.
\end{equation}
Let
\begin{equation}\label{ti}
	T_1=1,\qquad T_i=\sum_{j=1}^{i-1}t_j+1 \text{ for } i = 2, \ldots, l.
\end{equation}
Thus, the node indexed by $T_i$ corresponds to the first node in the group $A_i$ for $i\in [l]$.
Let $w$ be the greatest possible integer such that $\sum_{i=1}^{w}t_i\leq k$. For each $j=1,2,\ldots,w$,
the repair scheme of $\mathcal{C}$ for repairing the node $T_j$ shows that there exist $L$ codewords $$c_i=(c_{i,1},c_{i,2},\ldots,c_{i,n})\in \mathcal{C}^{\perp},$$
for $i=1,2,\ldots,L$
such that 
\begin{equation}\label{dim}
	\dim_{\fq}(c_{1,T_j},c_{2,T_j},\ldots,c_{L,T_j})=L
\end{equation}
and
\begin{equation}\label{dim2}
	\beta_{T_j}=\sum_{i\in\mathcal{R}_j} \dim_{\fq}(c_{1,i},c_{2,i},\ldots,c_{L,i})=\frac{(n-t_j)L}{n-t_j-k+1},
\end{equation}
where $\mathcal{R}_{j}\subseteq [n]\setminus A_j$ denotes the index set of helper nodes for repairing node $T_j$ and $\beta_{T_j}$ denotes the repair bandwidth for repairing node $T_j$. 
For each $j=1,2,\ldots,w$, we may assume that $\mathcal{R}_{j}= [n]\setminus A_j$ by puncturing the code to the coordinates corresponding to the helper nodes.
Let $r=n-k$ and $H\in E^{r\times n}$ be a parity-check matrix of $\mathcal{C}$.
Since $H$ is also the generator matrix of the dual code of $\mathcal{C}$, it follows that there exists an $L\times r$ matrix 
\begin{equation}\label{btj}
	B_{T_j}=[\mathbf{b}_1,\mathbf{b}_2,\ldots,\mathbf{b}_L]^{T}
\end{equation}
over $E$ such that $\mathbf{b}_i H=\mathbf{c}_i$ for all $i\in [L]$. The matrix $B_{T_j}$ exhibits several properties when $\mathcal{S}$ achieves the cut-set bound with equality, as will be established in the following Lemma \ref{B}.

For a finite field $E$ and the subfield $F$ of $E$, an $m\times n$ matrix $A$ over $E$, denote $\mathcal{S}_F(A)$ the row space of the matrix $A$ over $F$. Specifically, if $A=[a_1,a_2,\ldots,a_m]^{T}$, then
\[\mathcal{S}_F(A)=\left\{k_1a_1+\cdots+k_ma_m\,|\,k_i\in F \text{ for all } i=1,\ldots,m\right\}.\]

\begin{lemma}\label{lem5}\cite[Lemma 5]{tamo2019repair}
	Let $E$ be an extension of a finite field of $K$. Let $A=(a_{i,j})$ be an $m\times n$ matrix over $E$. Then 
	\begin{equation}\label{sfa}
		\dim(\mathcal{S}_K(A))\leq \sum_{j=1}^{n}\dim_K(a_{1,j},a_{2,j},\ldots,a_{m,j}).
	\end{equation}
	Moreover, if \eqref{sfa} holds with equality, then for every $\mathcal{J}\subseteq [n]$,
	\begin{equation}
		\dim(\mathcal{S}_K(A_{\mathcal{J}}))=\sum_{j\in\mathcal{J}}\dim_K(a_{1,j},a_{2,j},\ldots,a_{m,j}),
	\end{equation}
	where $A_{\mathcal{J}}$ is the restriction of $A$ to the columns with indices in $\mathcal{J}$.
\end{lemma}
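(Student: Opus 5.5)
We prove the inequality first and then the equality statement.

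For the inequality, write $d_j=\dim_K(a_{1,j},\ldots,a_{m,j})$ for each column $j$. Let $V_j\subseteq E$ be the $K$-span of the entries of column $j$, so $\dim_K V_j=d_j$. Every $K$-linear combination $k_1a_1+\cdots+k_ma_m$ of the rows has $j$-th coordinate $\sum_i k_ia_{i,j}\in V_j$. Hence $\mathcal{S}_K(A)$ is a $K$-subspace of $V_1\times V_2\times\cdots\times V_n\subseteq E^n$. That product has $K$-dimension $\sum_j d_j$, which gives \eqref{sfa}. (Equivalently, the coordinate projections embed $\mathcal{S}_K(A)$ into $\bigoplus_j V_j$.)

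For the equality part, fix $\mathcal{J}\subseteq[n]$ and let $\bar{\mathcal{J}}=[n]\setminus\mathcal{J}$. Consider the $K$-linear restriction map
\[
\pi_{\mathcal{J}}:\mathcal{S}_K(A)\to\mathcal{S}_K(A_{\mathcal{J}}).
\]
It is surjective, because restricting a row combination gives the same combination of the restricted rows. Its kernel consists of the vectors of $\mathcal{S}_K(A)$ that vanish on $\mathcal{J}$, so the kernel embeds into $\prod_{j\in\bar{\mathcal{J}}}V_j$. Therefore
\[
\dim\mathcal{S}_K(A)\le \dim\mathcal{S}_K(A_{\mathcal{J}})+\sum_{j\in\bar{\mathcal{J}}}d_j.
\]
The first part, applied to the matrix $A_{\mathcal{J}}$, whose column spans are the same $V_j$, gives
\[
\dim\mathcal{S}_K(A_{\mathcal{J}})\le\sum_{j\in\mathcal{J}}d_j.
\]
If \eqref{sfa} holds with equality, then
\[
\sum_j d_j\le\dim\mathcal{S}_K(A_{\mathcal{J}})+\sum_{j\in\bar{\mathcal{J}}}d_j,
\]
so $\dim\mathcal{S}_K(A_{\mathcal{J}})\ge\sum_{j\in\mathcal{J}}d_j$. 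Combined with the upper bound, this is the claimed equality.

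No step here is a real obstacle. The only points needing care are that the restriction map is onto $\mathcal{S}_K(A_{\mathcal{J}})$ and that its kernel lies in the product of the complementary column spans; both follow directly from the definitions.
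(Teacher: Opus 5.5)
Your proof is correct: the embedding $\mathcal{S}_K(A)\subseteq V_1\times\cdots\times V_n$ gives \eqref{sfa}, and rank--nullity for the restriction map $\pi_{\mathcal{J}}$, whose kernel sits inside $\prod_{j\in\bar{\mathcal{J}}}V_j$, supplies the reverse inequality needed for the equality case. The paper does not prove this lemma itself but imports it from \cite[Lemma 5]{tamo2019repair}, and your argument (column-span embedding plus subadditivity of $\dim\mathcal{S}_K$ across the split $\mathcal{J}\cup\bar{\mathcal{J}}$) is the natural route and gives a complete self-contained justification.
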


\begin{lemma}\label{B}
	Let $\mathcal{C}$ be an $(n,k)$ scalar linear MDS code over symbol field $E$ with the base field $\mathbb{F}_q$. Denote the sub-packetization level by $L$, and let $r = n - k$. Suppose $\mathcal{S}$ is a PE repair scheme for $\mathcal{C}$ that achieves the cut-set bound with equality. Let $w$ be the largest integer such that $\sum_{i=1}^w t_i \leq k$. For each $j = 1, 2, \ldots, w$, consider the matrix $B_{T_j}$ defined as in \eqref{btj} with respect to the parity-check matrix $H = [M \mid I_r] = [h_1\ h_2\ \cdots\ h_n]$, where $I_r$ is the identity matrix of order $r$ and $T_j$ is specified in \eqref{ti}. For each $j = 1, 2, \ldots, w$, the matrix $B_{T_j}$ satisfies the following properties:
	\begin{itemize}
		\item[(i)] For $ i \in A_j \setminus \{T_j\} $, $ B_{T_j} h_i = 0 $, where $ 0 $ denotes the zero column vector of length $ L $.  
		\item[(ii)] For $ i \in [n] \setminus A_j $, the dimension of $ B_{T_j} h_i $ over $ \mathbb{F}_q $ is $ L / (r - t_j + 1) $.  
		\item[(iiii)] Each column of $ B_{T_j} $ has dimension $ L / (r - t_j + 1) $ over $ \mathbb{F}_q $.  
		\item[(iv)] There exists an $ L \times L $ invertible matrix $ P $ over $ \mathbb{F}_q $ such that 
		\begin{equation}\label{pbtj}
			P B_{T_j} = [ \mathrm{diag}(p_1, p_2, \dots, p_{r - t_j + 1}) \mid V ],
		\end{equation}
		where each $ p_i $ is a column vector of length $ L / (r - t_j + 1) $ over $ E $ for all $1\leq i\leq r-t_j+1$, and $ V $ is an $ L \times (t_j - 1) $ matrix over $ E $.
	\end{itemize}
\end{lemma}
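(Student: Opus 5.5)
Write $r_j:=r-t_j+1$. The matrix $C:=B_{T_j}H$ is the $L\times n$ matrix whose rows are the dual codewords $\mathbf{c}_1,\ldots,\mathbf{c}_L$ used to repair node $T_j$. Its column $i$ is $B_{T_j}h_i$, and the $\fq$-dimension of that column is exactly $\dim_{\fq}(c_{1,i},\ldots,c_{L,i})$. The plan is to squeeze $\dim\mathcal{S}_{\fq}(C)$ between an MDS lower bound and the bandwidth upper bound coming from Lemma~\ref{lem5}. Then the equality case of Lemma~\ref{lem5}, applied on subsets of columns, forces all the claimed structure.

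\textbf{Step 1 (property (i)).} After puncturing, the helper set is $\mathcal{R}_j=[n]\setminus A_j$. The nodes in $A_j\setminus\{T_j\}$ send nothing, so the repair equations cannot involve them. Hence I take the dual codewords to vanish on those coordinates, i.e.\ $B_{T_j}h_i=0$ for $i\in A_j\setminus\{T_j\}$. This is how the scheme's linear reconstruction is encoded in the $\mathbf{c}_i$, as in \cite{guruswami2017}.

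\textbf{Step 2 (properties (ii) and (iii)).} Let $\mathcal{J}\subseteq[n]\setminus A_j$ have $|\mathcal{J}|=k-t_j+1$. I claim the columns indexed by $\{T_j\}\cup\mathcal{J}$ together with $A_j\setminus\{T_j\}$ suffice to reconstruct $C$. These indices number $k+1$. For the $[n,n-k]$ MDS dual code, any codeword is determined by any $r$ coordinates, and $n-(k+1)=r-1$. So for $\mathcal{K}:=([n]\setminus A_j)\setminus\mathcal{J}$, which has size $r-1$, the map from $\mathcal{S}_{\fq}(C)$ onto the columns $\{T_j\}\cup\mathcal{K}$ is injective. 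Since the columns in $A_j\setminus\{T_j\}$ are zero, this map has rank at least $L$, because column $T_j$ has dimension $L$ by \eqref{dim}.

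Let $s_i:=\dim_{\fq}(B_{T_j}h_i)$. Lemma~\ref{lem5} applied to the columns $\{T_j\}\cup\mathcal{K}$ gives
\[
\dim\mathcal{S}_{\fq}\bigl(C_{\mathcal{K}}\bigr)\ \ge\ \dim\mathcal{S}_{\fq}\bigl(C_{\mathcal{K}\cup\{T_j\}}\bigr)-L.
\]
The intended conclusion is $\sum_{i\in\mathcal{K}}s_i\ge L$ for every $(r-1)$-subset $\mathcal{K}$ of $[n]\setminus A_j$. Equivalently, removing any $k-t_j+1$ helper columns still leaves enough to recover $T_j$. This is the standard cut-set argument: column $T_j$ is a function of any $r-1$ helper columns, since the $k-t_j$ other helper columns plus the $t_j-1$ zero columns plus $T_j$ total $k$.

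Now average over all such $\mathcal{K}$. There are $n-t_j$ helper columns in total, so
\[
\frac{r-1}{n-t_j}\sum_{i\notin A_j}s_i\ \ge\ L,
\]
with equality exactly when $\sum_i s_i$ equals the cut-set value $(n-t_j)L/(r_j+k-1)$. Equality in the average forces equality $\sum_{i\in\mathcal{K}}s_i=L$ for every $\mathcal{K}$, and hence all $s_i$ are equal. Here I must check the arithmetic carefully, since the denominator in \eqref{dim2} is $n-t_j-k+1=r_j$, not $r-1$. The correct statement is that any $r-t_j$ helper columns plus $T_j$ determine the codeword, using the zero columns as known symbols: these are $r$ known coordinates. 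So I will use subsets $\mathcal{K}$ of size $r_j-1$, and $\sum_{\mathcal{K}}s_i\ge L$. Averaging then gives $s_i=L/r_j$, which is (ii).

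For (iii), take $\mathcal{K}$ to be $r_j$ helper columns, $\mathcal{K}=\{k+1,\ldots,n\}\setminus A_j$-type columns, chosen to lie among the identity part of $H$ outside $A_j$. I expect the main obstacle to be bookkeeping here, making sure $A_j$ fits in the first $k$ columns via $j\le w$. On those columns $B_{T_j}h_i$ is literally a column of $B_{T_j}$. For the remaining $t_j-1$ columns of $B_{T_j}$, I will use a change of parity-check matrix within the same systematic form by permuting which $r$ positions are identity; any $r$ columns of $H$ are invertible, so this is allowed. This gives (iii).

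\textbf{Step 3 (property (iv)).} Pick the $r_j$ identity-part columns $\mathcal{K}$ above. Their dimensions sum to $L=\dim\mathcal{S}_{\fq}(B_{T_j})$, using injectivity from the MDS property. So Lemma~\ref{lem5} holds with equality on these columns, and every subset of them also has additive dimension. Then I choose $\fq$-bases row by row: for each column in $\mathcal{K}$, take $L/r_j$ rows whose entries in that column are independent and whose entries in the other columns of $\mathcal{K}$ are zero. Additivity guarantees that the row space decomposes as a direct sum of the subspaces supported on single columns. Stacking these rows gives the invertible $P$ and the block-diagonal form, with the other $t_j-1$ columns collected in $V$.
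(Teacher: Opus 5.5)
\textbf{Gap in Step 2.} Everything from (ii) onward rests on a lower bound for the helper dimensions $s_i$, and your Step 2 derives it from the wrong coordinate sets. The bound must use sets of $r$ coordinates of $C=B_{T_j}H$ that \emph{exclude} $T_j$. Both of your versions put $T_j$ among the $r$ ``known'' coordinates. But column $T_j$ alone already has $\fq$-dimension $L$, so injectivity of $\mathcal{S}_{\fq}(C)$ onto such a set is automatic and says nothing about the helper columns. Your first version also mixes the primal code, where $k$ coordinates determine a codeword, with the dual code, where $r$ do. Your corrected claim is that $\sum_{i\in\mathcal{K}}s_i\ge L$ for every $(r_j-1)$-subset $\mathcal{K}$ of helpers. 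This is false for every scheme meeting the cut-set bound. Averaging it gives $\sum_{i\notin A_j}s_i\ge (n-t_j)L/(r_j-1)$, which contradicts \eqref{dim2}, and it is incompatible with the value $s_i=L/r_j$ you want to prove.

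The fix is the paper's argument, and it proceeds in four steps.
\begin{enumerate}
\item Note that $\dim\mathcal{S}_{\fq}(B_{T_j})=L$. The matrix has $L$ rows and maps onto $\mathcal{S}_{\fq}(C)$, whose column $T_j$ has dimension $L$ by \eqref{dim}.
\item For each $r_j$-subset $\mathcal{K}\subseteq[n]\setminus A_j$, put $\mathcal{J}=(A_j\setminus\{T_j\})\cup\mathcal{K}$. Then $|\mathcal{J}|=r$ and $T_j\notin\mathcal{J}$, so $H_{\mathcal{J}}$ is invertible and $\dim\mathcal{S}_{\fq}(B_{T_j}H_{\mathcal{J}})=L$.
\item Lemma~\ref{lem5}, together with the zero columns from (i), gives $\sum_{i\in\mathcal{K}}s_i\ge L$.
\item Averaging over all $\binom{n-t_j}{r_j}$ choices of $\mathcal{K}$ returns exactly the cut-set value. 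So every such sum equals $L$, and exchanging one element of $\mathcal{K}$ gives $s_i=L/r_j$.
\end{enumerate}
Step 2 of this list is also the fact your Step 3 needs, namely that the chosen $r_j$ columns of $B_{T_j}$ have row space of dimension $L$.

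\textbf{Other points.} For (iii) there is nothing left over. Since $j\le w$, $A_j\subseteq[k]$, so all $r$ identity columns $h_{k+1},\dots,h_n$ are helper columns. Moreover $B_{T_j}h_{k+m}$ is the $m$-th column of $B_{T_j}$, so (ii) gives (iii) for every column at once. Your proposed change of parity-check matrix for ``the remaining $t_j-1$ columns'' would not work anyway: $B_{T_j}$ is defined relative to the fixed $H$ and changes when $H$ does.

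Your Step 1 is sound, and it is in fact the right source of (i). The paper's counting chain drops the $A_j\setminus\{T_j\}$ terms using an inequality that points the wrong way. So (i) has to come, as you say, from the fact that the dual codewords of a linear scheme with helper set $[n]\setminus A_j$ are supported on that set together with $T_j$.

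Your Step 3 reads off a direct-sum decomposition from the equality case of Lemma~\ref{lem5}. This is a valid and shorter alternative to the paper's column-by-column induction, once the corrected Step 2 is in place.
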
 
\begin{IEEEproof}
	For a fixed $j\in[w]$ and $T_j=\sum_{s=1}^{j-1}t_s+1$, we simplify the notation $B_{T_j}$ to $B$ for clarity and conciseness. We first assert that the $\mathbb{F}_{q}$-rank of the row space of $B$ is $L$. Indeed, suppose to the contrary that there exists a non-zero vector $w \in \mathbb{F}_{q}^{L}$ such that $wB = 0$. Therefore,
	\[
	wBH = w
	\begin{bmatrix}
		c_{1,1} & c_{1,2} & \cdots & c_{1,n} \\
		c_{2,1} & c_{2,2} & \cdots & c_{2,n} \\
		\vdots & \vdots & \vdots & \vdots \\
		c_{L,1} & c_{L,2} & \cdots & c_{L,n}
	\end{bmatrix}
	= 0.
	\]
	This implies that $w(c_{1,T_j}, c_{2,T_j}, \ldots, c_{L,T_j})^T = 0$, which contradicts \eqref{dim}. Thus we conclude that $B$ has $L$ linearly independent rows over $\mathbb{F}_q$.
	For a fixed $j \in [w]$, let $\mathcal{J} \subseteq [n]$ be a subset with $|\mathcal{J}| = r$ such that $A_j \setminus \{T_j\} \subseteq \mathcal{J}$. The existence of $\mathcal{J}$ can be guaranteed since $t_j \leq r$ for all $j \in [l]$. Since $H$ generates an $(n,r)$ MDS code, 
	the submatrix $H_{\mathcal{J}}$ has full rank. Therefore, the $L \times r$ matrix $BH_{\mathcal{J}}$ satisfies the conditions
	\begin{align}\label{LDIM}
		L = \dim(\mathcal{S}_{\mathbb{F}_q}(B)) = \dim(\mathcal{S}_{\mathbb{F}_q}(BH_{\mathcal{J}})) \leq \sum_{i \in \mathcal{J}} \dim_{\mathbb{F}_q}(Bh_i),
	\end{align}
	where the last inequality follows from Lemma \ref{lem5}. Summing both sides of \eqref{LDIM} over all subsets $\mathcal{J}$ such that $A_j\setminus \{T_j\}\subseteq\mathcal{J} \subseteq [n]\setminus \{T_j\}$ of size $|\mathcal{J}| = r$, we obtain that
	\begin{align*}
		L\binom{n - t_j}{r-t_j+1} &\leq \sum_{\substack{A_j\setminus \{T_j\}\subseteq\mathcal{J} \subseteq [n]\setminus \{T_j\} \\ |\mathcal{J}| = r}} \sum_{i \in \mathcal{J}} \dim_{\mathbb{F}_q}(Bh_i) \\
		&= \sum_{\substack{A_j\setminus \{T_j\}\subseteq\mathcal{J} \subseteq [n]\setminus \{T_j\} \\ |\mathcal{J}| = r}}\left(  \sum_{i \in A_j\setminus \{T_j\}} \dim_{\mathbb{F}_q}(Bh_i) +  \sum_{i \in \mathcal{J}\setminus A_j} \dim_{\mathbb{F}_q}(Bh_i)              \right)\\
		&\leq \sum_{\substack{A_j\setminus \{T_j\}\subseteq\mathcal{J} \subseteq [n]\setminus \{T_j\} \\ |\mathcal{J}| = r}}\sum_{i \in \mathcal{J}\setminus A_j} \dim_{\mathbb{F}_q}(Bh_i) \\
		&= \binom{n - t_j-1}{r - t_j} \sum_{i \in [n]\setminus A_j} \dim_{\mathbb{F}_q}(Bh_i)\\
		&\stackrel{(\ref{dim2})}{=} \binom{n - t_j-1}{r - t_j} \frac{(n - t_j)L}{r-t_j+1}\\
		&= L\binom{n - t_j}{r-t_j+1}, 
	\end{align*}
	where the set minus notation $ \mathcal{J}\setminus A_j$ means exclude the elements that both belongs to $\mathcal{J}$ and $A_j$ from $\mathcal{J}$. The inequality above is in fact an equality, therefore
	\begin{equation}\label{28}
		\sum_{i \in A_j\setminus \{T_j\}} \dim_{\mathbb{F}_q}(Bh_i)=0.
	\end{equation}
	Furthermore, for every subset $A_j\setminus \{T_j\}\subseteq\mathcal{J} \subseteq [n]\setminus \{T_j\}$ and $|\mathcal{J}| = r$, we have
	\begin{equation}\label{29}
		L = \sum_{i \in \mathcal{J}\setminus A_j} \dim_{\mathbb{F}_q}(Bh_i).
	\end{equation}
	The conclusion of (i) can be obtained by \eqref{28}.
	Noting that the size of $\mathcal{J}\setminus A_j$ is $r - t_j + 1$, it follows that the conclusion of (ii) can be derived from \eqref{29}, due to the arbitrary choice of $\mathcal{J}$.
	
	We will now proceed to prove conclusion (iii). It follows from (i) that 
	$$L = \dim(\mathcal{S}_{\mathbb{F}_q}(B)) = \dim(\mathcal{S}_{\mathbb{F}_q}(BH_{\mathcal{J}}))=\dim(\mathcal{S}_{\mathbb{F}_q}(BH_{\mathcal{J}\setminus A_j}).$$
	In other words, 
	for every $\mathcal{J} \subseteq [n]\setminus A_j$ and $|\mathcal{J}| = r-t_j+1$, we have
	\begin{equation}
		L = \dim(\mathcal{S}_{\mathbb{F}_q}(BH_{\mathcal{J}})).
	\end{equation}
	According to the second part of Lemma \ref{lem5}, 
	for every $\mathcal{J} \subseteq [n]\setminus A_j$ of size $|\mathcal{J}| \leq r-t_j+1$,
	\begin{equation}\label{31}
		\dim(\mathcal{S}_{\mathbb{F}_q}(BH_{\mathcal{J}})) = \sum_{i \in \mathcal{J}} \dim_{\mathbb{F}_q}(Bh_i) = \frac{|\mathcal{J}|L}{r-t_j+1}.
	\end{equation}
	Let us take $\mathcal{J}$ to be a subset of $\{k+1, k+2, \ldots, n\}$. 
	Since the last $r$ columns of $H$ form an identity matrix, \eqref{31} becomes
	\begin{equation}\label{33}
		\dim(\mathcal{S}_{\mathbb{F}_q}(B_\mathcal{J})) = \frac{|\mathcal{J}|L}{r-t_j+1}
		\quad \text{for all } \mathcal{J} \subseteq [r], \; |\mathcal{J}| \leq r-t_j+1.
	\end{equation}
	The conclusion of (iii) can be deduced by setting $|\mathcal{J}|=1$.
	
	Next we show that by performing elementary row operations over $\mathbb{F}_q$, 
	$B$ can be transformed into a form of \eqref{pbtj}.
	The proof proceeds by induction. More specifically, for $i = 1,2,\ldots,r-t_j+1$, 
	we can use elementary row operations over $\mathbb{F}_q$ to transform the first $i$ columns of $B$ 
	into the following form:
	\[
	\begin{bmatrix}
		p_1 & 0 & \cdots & 0 \\
		0 & p_2 & \cdots & 0 \\
		\vdots & \vdots & \vdots & \vdots \\
		0 & 0 & \cdots & p_i \\
		\mathbf{0} & \mathbf{0} & \cdots & \mathbf{0}
	\end{bmatrix},
	\]
	where each $\mathbf{0}$ in the last row of the above matrix is a column vector of length $L\!\left(1-\tfrac{i}{r-t_j+1}\right)$.
	
	Let $i=1$. According to \eqref{33}, each column of $B$ has dimension $\frac{L}{r-t_j+1}$ over $\mathbb{F}_q$. Thus the induction base holds trivially. Now assume that there is an $L \times L$ invertible matrix $P$ over $\mathbb{F}_q$ such that
	\[
	PB_{[i-1]} =
	\begin{bmatrix}
		p_1 & 0 & \cdots & 0 \\
		0 & p_2 & \cdots & 0 \\
		\vdots & \vdots & \vdots & \vdots \\
		0 & 0 & \cdots & p_{i-1} \\
		\mathbf{0} & \mathbf{0} & \cdots & \mathbf{0}
	\end{bmatrix},
	\]
	where each $\mathbf{0}$ in the last row of this matrix is a column vector of length 
	$L\!\left(1-\tfrac{i-1}{r-t_j+1}\right)$, $B_{[i-1]}$ is a submatrix of $B$ consisting of the first $i-1$ columns of $B$. Let us write the $i$-th column of $PB$ as $(v_1, v_2, \ldots, v_L)^T$. Since each column of $B$ has dimension $\frac{L}{r-t_j+1}$ over $\mathbb{F}_q$, $(v_1, v_2, \ldots, v_L)^T$ also has dimension $\frac{L}{r-t_j+1}$ over $\mathbb{F}_q$. Since the last $L\!\left(1-\tfrac{i-1}{r-t_j+1}\right)$ rows of the matrix $PB_{[i-1]}$ are all zero, we can easily deduce that
	\[
	\dim(\mathcal{S}_{\mathbb{F}_q}(PB_{[i]})) 
	\leq \frac{i-1}{r-t_j+1}L + \dim_{\mathbb{F}_q}(v_{\frac{(i-1)L}{r-t_j+1}+1}, v_{\frac{(i-1)L}{r-t_j+1}+2}, \ldots, v_L).
	\]
	By \eqref{33}, $\dim(\mathcal{S}_{\mathbb{F}_q}(PB_{[i]})) = \dim(\mathcal{S}_{\mathbb{F}_q}(B_{[i]})) = \tfrac{iL}{r-t_j+1}$. As a result,
	\[
	\dim_{\mathbb{F}_q}(v_{(i-1)L/(r-t_j+1)+1}, v_{(i-1)L/(r-t_j+1)+2}, \ldots, v_L) 
	\geq \frac{L}{r-t_j+1} = \dim_{\mathbb{F}_q}(v_1, v_2, \ldots, v_L).
	\]
	In other words, $(v_{(i-1)L/(r-t_j+1)+1}, v_{(i-1)L/(r-t_j+1)+2}, \ldots, v_L)$ contains a basis of the set $(v_1, v_2, \ldots, v_L)$ over $\mathbb{F}_q$. This implies that we can use elementary row operations on the matrix $PB$ to eliminate all the nonzero entries $v_m$ for $m \leq \frac{(i-1)L}{r-t_j+1}$, and thus obtain the desired block-diagonal structure for the first $i$ columns through row exchanges if necessary. This establishes the induction step, thereby establishing the conclusion of (iv).
\end{IEEEproof}
For a vector space $V$ over $\mathbb{F}_q$ and a set of vectors $S=\{\mathbf{v}_1,\ldots,\mathbf{v}_m\}\subseteq V$, we denote the span of $S$ over $\mathbb{F}_q$ as
$$\text{Span}_{\mathbb{F}_{q}}(S)=\left\{\sum_{i=1}^{m}\gamma_i \mathbf{v}_i \,: \,\gamma_i \in \mathbb{F}_{q}\right\}.$$
\begin{lemma}\label{H}
	Let $\mathcal{C}$ be an $(n,k)$ scalar linear MDS code over symbol field $E$ with base field $\mathbb{F}_q$. Denote the sub-packetization level by $L$, and let $r = n - k$. Suppose $\mathcal{S}$ is a linear repair scheme for $\mathcal{C}$ that achieves the cut-set bound with equality. Let $T_1=1$ and $T_i=\sum_{j=1}^{i-1}t_j+1$ for $i=2,\ldots,l$. Let $w$ be the largest integer such that $\sum_{i=1}^w t_i \leq k$. Assume that $w \geq 2$.
	Let $H=[M|I_r]=(h_{i,j})$ be the parity-check matrix of $\mathcal{C}$.
	There exist $\beta_1,\ldots,\beta_{w-1}\in E$ such that
	for every $j=1,2,\ldots,w-1$, 
	\begin{equation}\label{beta}
		\beta_{j}\notin \mathbb{F}_{q}(\beta_{i}:i\in \{1,2,\ldots,w-1\}\backslash\{j\}).
	\end{equation}
\end{lemma}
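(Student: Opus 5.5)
This mirrors Tamo--Ye--Barg's argument in \cite{tamo2019repair}, with the group leaders $T_1,\ldots,T_w$ playing the role of the first $w$ nodes. The natural choice is
\[
\beta_j = h_{1,T_j}/h_{2,T_j}.
\]
Since $H=[M\mid I_r]$ generates an MDS code, all entries of $M$ are nonzero, so these are well-defined nonzero elements of $E$. (If this specific ratio fails, the plan is to use a suitable ratio of entries of columns $T_j$ and $T_w$.) I use $T_j\le k$ for $j\le w$, which holds since $\sum_{i\le w}t_i\le k$, so these columns lie in the $M$-part.

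First, fix $j\in[w-1]$ and let $B=B_{T_j}$. By Lemma~\ref{B}(iv), after left multiplication by an invertible $P$ over $\mathbb{F}_q$, we have $PB=[\mathrm{diag}(p_1,\ldots,p_{r-t_j+1})\mid V]$. In particular the first two columns of $PB$ have disjoint row supports, each containing $L/(r-t_j+1)$ rows.

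Next, for a node $i\in[n]\setminus A_j$, Lemma~\ref{B}(ii) gives $\dim_{\mathbb{F}_q}(PBh_i)=L/(r-t_j+1)$. Note that $PBh_i=\sum_s h_{s,i}\,(\text{column } s \text{ of } PB)$. On the rows of the $p_1$-block this vector equals $h_{1,i}p_1$ plus contributions from $V$, and similarly on the $p_2$-block. I want to use Lemma~\ref{lem5} on pairs of nodes $i,i'$ to force strong structure. The goal is to show that the $\mathbb{F}_q$-span $W_1$ of the entries of $p_1$ satisfies $W_1\cdot\beta_i=W_2$ for every $i\ne j$ in $[w]$, while this fails for $i=j$.

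This is where I would reuse Lemma~\ref{B}(i). Node $T_j$ itself is the failed node: the entries of $Bh_{T_j}$ span all of $E$, which is the dimension-$L$ condition \eqref{dim}. For the other leaders $T_i$ with $i\neq j$, the full-dimension cut-set equality through Lemma~\ref{lem5} forces the entries to collapse. Concretely, $\dim(PBh_{T_i})=L/(r-t_j+1)$ forces the vectors $h_{1,T_i}p_1$ and $h_{2,T_i}p_2$ (modulo $V$-terms, handled via columns in $A_j\setminus\{T_j\}$ annihilated by $B$) to span a common space. This yields that $\beta_i$ stabilizes a proper subspace structure, i.e.\ $\beta_i\in$ the stabilizer field $F_j$ of the $\mathbb{F}_q$-subspace $W_1$. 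On the other hand, $\beta_j$ cannot lie in $F_j$: otherwise the entries of $Bh_{T_j}$ would span a space of dimension at most $(r-t_j+1)\dim W_1 / \ldots < L$, contradicting \eqref{dim}.

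Finally, $F_j$ is a field containing $\mathbb{F}_q(\beta_i:i\neq j)$, with $i$ ranging over $[w-1]\setminus\{j\}$, and $\beta_j\notin F_j$. This gives \eqref{beta}.

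The main obstacle is the middle step. There I must show rigorously that the cut-set equality makes the entry spaces of $h_{1,T_i}p_1$ and $h_{2,T_i}p_2$ coincide, and also control the $V$-block contributions coming from $h_{s,T_i}$ for $s\ge 3$. I would first try to choose the $r$-subset $\mathcal{J}$ in Lemma~\ref{lem5} to contain $T_i$ together with $r-t_j$ of the identity columns, so that the equality case pins down $Bh_{T_i}$ relative to $p_1,p_2$ exactly as in Tamo--Ye--Barg.
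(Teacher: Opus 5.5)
\textbf{Verdict: there is a genuine gap.} Your frame matches the paper's: the block form from Lemma~\ref{B}(iv), a stabilizer field of $W_1$, and directness at $T_j$. But the step you flag as the main obstacle is where the proof actually lives, and choosing $\mathcal{J}$ in Lemma~\ref{lem5} does not supply it.

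\textbf{The $V$-term is left undetermined.} For $t_j\ge 2$ the $u$-th block of $PBh_i$ is $h_{u,i}p_u+(Vh_i^{(2)})_u$, where $h^{(2)}$ is the last $t_j-1$ rows. An additivity statement about $BH_{\mathcal{J}}$ does not determine that $V$-term. The missing idea, which is the paper's route, is to compute it from Lemma~\ref{B}(i).
\begin{itemize}
\item For $e\in A_j\setminus\{T_j\}=\{e_1,\dots,e_{t_j-1}\}$, $PBh_e=0$ gives $Vh_{e_m}^{(2)}=-(h_{u,e_m}p_u)_u$.
\item The $h_{e_m}^{(2)}$ form a nonsingular square submatrix of $M$, since $A_j\subseteq[k]$, so you can write $h_i^{(2)}=\sum_m a_{m,i}h_{e_m}^{(2)}$.
\item Then $PBh_i$ has blocks $\alpha_{u,i}p_u$ with $\alpha_{u,i}=h_{u,i}-\sum_m a_{m,i}h_{u,e_m}$. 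These are nonzero for $i\in[k]\setminus A_j$ because every square submatrix of $M$ is nonsingular.
\end{itemize}
Only then does $\dim_{\mathbb{F}_q}(PBh_i)=L/(r-t_j+1)$ force $W_1\alpha_{1,i}=W_2\alpha_{2,i}$, and \eqref{dim} force $\sum_u W_u\alpha_{u,T_j}$ to be direct. So the quantity the cut-set equality controls is $\mu_i=\alpha_{1,i}/\alpha_{2,i}$, not $h_{1,i}/h_{2,i}$.

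\textbf{The unnormalized ratio does not land in the stabilizer.} Even with the right $\mu_i$, knowing $W_1\mu_{T_i}=W_2$ for all $i\neq j$ only places the ratios $\mu_{T_i}/\mu_{T_{i'}}$ in $\mathrm{Stab}(W_1)=\{\gamma\in E:\gamma W_1\subseteq W_1\}$. So your claim $\beta_i\in F_j$ does not follow. Your fallback is the necessary fix: set $\beta_i=\mu_{T_i}/\mu_{T_w}$ for $i\in[w-1]$, which is why only $w-1$ elements appear. For $i\neq j$ this lies in $\mathrm{Stab}(W_1)$. Meanwhile $W_1\alpha_{1,T_j}\cap W_2\alpha_{2,T_j}=\{0\}$ gives $\beta_j\notin\mathrm{Stab}(W_1)$.

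\textbf{Two further points that the paper's written proof also does not address.}

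First, $j$-dependence. The $\alpha_{u,i}$ come from eliminating against $A_j\setminus\{T_j\}$ with a row split depending on $t_j$, so the $\beta_i$ above depend on $j$. The lemma needs one family valid for every $j$. The paper only asserts that the $\alpha_{u,i}$ ``remain identical'' across $j$, and its normalization $\alpha_{1,i}=1$ is a $j$-dependent column scaling.
\begin{itemize}
\item Set $S_j=(A_j\setminus\{T_j\})\cup\{k+3,\dots,k+r-t_j+1\}$. Row $u\in\{1,2\}$ of the eliminated matrix is the dual codeword vanishing on $S_j\cup\{k+3-u\}$.
\item Hence $\mu_{T_i}/\mu_{T_w}$ is a cross-ratio of the columns $k+1,k+2,T_i,T_w$ seen from $S_j$.
\item For GRS codes it equals $\frac{(x_{T_i}-x_{k+2})(x_{T_w}-x_{k+1})}{(x_{T_i}-x_{k+1})(x_{T_w}-x_{k+2})}$, which is independent of $j$.
\item For a general MDS code it is not. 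Already for $r=3$, $t_j=2$, independence of the viewpoint is a conic condition.
\end{itemize}

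Second, the case $t_j=r$. Using a second block $p_2$ needs $r-t_j+1\ge 2$, and when $t_j=r$ the statement is false. Take any $[6,4]$ MDS code over $\mathbb{F}_q$ with $t_1=t_2=t_3=2$. Naive repair from the $4$ allowed helpers meets the cut-set bound $4L/1$ with $L=1$. Yet $w=2$, so the lemma would require some $\beta_1\in E=\mathbb{F}_q$ with $\beta_1\notin\mathbb{F}_q$.

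A correct proof therefore needs $t_j\le r-1$ for $j\le w-1$, together with either an argument for $j$-uniformity or a restriction to GRS codes.
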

\begin{IEEEproof}
	For each $j=1,\ldots,w$, according to Lemma \ref{B}, there exists a repair matrix $B_{T_j}$ for node $T_j$ that satisfying the property of (i)-(iv) of Lemma \ref{B}. For a fixed $j\in[w]$, we simplify the notation $B_{T_j}$ to $B$ for clarity and conciseness.  For each $1\leq i\leq n$, let $h_i$ be the $i$-th column of $H$, where $h_i^{(1)}$ be the first $r-t_j+1$ rows of $h_i$ and $h_i^{(2)}$ be the last $t_j-1$ rows of $h_i$.
	Let $P$ be the $L\times L$ invertible matrix over $\fq$ such that 
	\begin{equation}\label{matrix}
		PB=\left[ \begin{matrix}
			\left. \begin{matrix}
				\,\,\, 	{{p}_{1}} & 0 & \cdots  & 0  \\
				0 & {{p}_{2}} & \cdots  & 0  \\
				0 & 0 & \ddots  & \vdots   \\
				0 & 0 & 0 & {{p}_{r-{{t}_{j}}+1}}  \\
			\end{matrix}\,\,\, \right| & V \\
		\end{matrix} \,\,\,  \right],
	\end{equation}
	where $p_i$ is a column vector of length $\frac{L}{r-t_j+1}$ over $E$ and $V$ is an $L\times (t_j-1)$ matrix over $E$.
	According to the property (i) of Lemma \ref{B}, for each $i\in A_j\setminus\{T_j\}$, we have $PBh_{i}=0$. Therefore, 
	\begin{equation}
		Vh_i^{(2)}=\begin{bmatrix}
			-h_{1,i}p_1 \\
			-h_{2,i} p_2\\
			\vdots \\
			-h_{r-t_j+1,i} p_{r-t_j+1}\\
		\end{bmatrix}.
	\end{equation}
	Denote $A_j^{*}=A_j\setminus\{T_j\}=\{e_1,\ldots,e_{t_j-1}\}$. Since $H$ is a generator matrix of an MDS code and $t_j-1<r$ for all $j\in[l]$, every $(t_j - 1) \times (t_j - 1)$ submatrix of $H$ is invertible.
	Therefore, for any $i\in [n]\setminus A_j^{*}$, we have $h_i^{(2)}\in \text{Span}_E(\{h_{e_1}^{(2)},\ldots,h_{e_{t_j-1}}^{(2)}\})$.
	Consequently, for any $i\in [n]$, let 
	\begin{equation}\label{37}
		h_i^{(2)}=\sum_{m=1}^{t_j-1}a_{m,i}h_{e_m}^{(2)},
	\end{equation}
	where $a_{m,i}\in E$ for all $m\in [t_j-1]$. Therefore, for each $i\in [n]\setminus A_j$,
	\begin{equation}
		Vh_i^{(2)}=\sum_{m=1}^{t_j-1}a_{m,i}Vh_{e_m}^{(2)}=\begin{bmatrix}
			-\sum_{m=1}^{t_j-1}a_{m,i}h_{1,e_m} p_1\\
			-\sum_{m=1}^{t_j-1}a_{m,i}h_{2,e_m} p_2\\
			\vdots \\
			-\sum_{m=1}^{t_j-1}a_{m,i}h_{r-t_j+1,e_m} p_{r-t_j+1}\\
		\end{bmatrix}.
	\end{equation}
	Hence, for each $i\in [n]\setminus A_j$, we have 
	\begin{equation}\label{39}
		PBh_i=\begin{bmatrix}
			p_1(h_{1,i}-\sum_{m=1}^{t_j-1}a_{m,i}h_{1,e_m})\\
			p_2(h_{2,i}-\sum_{m=1}^{t_j-1}a_{m,i}h_{2,e_m})\\
			\vdots \\
			p_{r-t_j+1}(h_{{r-t_j+1},i}-\sum_{m=1}^{t_j-1}a_{m,i}h_{{r-t_j+1},e_m})\\
		\end{bmatrix}.
	\end{equation}
	For each $u\in[r-t_j+1]$ and $i\in [n]\setminus A_j$, let $\alpha_{u,i}=h_{u,i}-\sum_{m=1}^{t_j-1}a_{m,i}h_{u,e_m}$.
	Note that these elements $\alpha_{u,i}$ are independent of the choice of matrix $B$, and thus remain identical when considering the repair of any nodes $T_j$ for $j\in [w]$.
	We claim that for each $u\in[r-t_j+1]$ and for each $i\in [n]\setminus A_j^{*}$, $\alpha_{u,i}$ is a nonzero element in $E$. 
	Since $t_j\leq r$ for all $j\in [l]$, it follows that any $t_j$ rows of the matrix $[h_i, h_{e_1}, \ldots, h_{e_{t_j-1}}]$ are $E$-linearly independent. Suppose, for contradiction, that $\alpha_{u,i} = 0$ for some $u \in [r - t_j + 1]$ and some $i\in [n]\setminus A_j^{*}$. Then, $h_{u,i}=\sum_{m=1}^{t_j-1}a_{m,i}h_{u,e_m}$ for such $u \in [r - t_j + 1]$ and $i\in [n]\setminus A_j^{*}$.
	Combining with \eqref{37}, it follows that the vectors formed by selecting the $u$-th row and the last $t_j - 1$ rows from the matrix $[h_i, h_{e_1}, \ldots, h_{e_{t_j-1}}]$ are linearly dependent over $E$, contradicting their linear independence. Hence, $\alpha_{u,i} \ne 0$ for all $u \in [r - t_j + 1]$ and $i\in [n]\setminus A_j^{*}$.
	For $u \in [r-t_j+1]$, let $P_u$ be the vector space spanned by the entries of $p_u$ over $\mathbb{F}_q$. According to (ii) of Lemma \ref{B}, for all $i \in [n]\setminus A_j$,  
	\[
	\dim_{\mathbb{F}_q}(PBh_i) = \dim_{\mathbb{F}_q}(Bh_i) = \frac{L}{r-t_j+1}.
	\]
	According to \eqref{39}, we have
	\[
	\begin{aligned}
		\dim_{\mathbb{F}_q}(PBh_i) =\dim_{\mathbb{F}_q}(P_1 \alpha_{1,i} + \cdots + P_{r-t_j+1} \alpha_{r-t_j+1,i}), i \in [n]\setminus A_j.
	\end{aligned}
	\]
	It follows that for all $i \in [n]\setminus A_j$,
	\begin{equation}\label{24}
		\dim_{\mathbb{F}_q}(P_1 \alpha_{1,i} + \cdots + P_{r-t_j+1} \alpha_{r-t_j+1,i}) = \frac{L}{r-t_j+1}.
	\end{equation}
	Since each column of $B$ has dimension $\frac{L}{r-t_j+1}$ over $\mathbb{F}_q$, $P_u$ also has dimension $ \frac{L}{r-t_j+1}$ over $\mathbb{F}_q$ for every $u \in [r-t_j+1]$. Recall that $\alpha_{u,i} \neq 0$ for all $u \in [r-t_j+1]$ and all $i \in  [n]\setminus A_j^*$. Thus
	\begin{equation}\label{dim3}
		\dim_{\mathbb{F}_q}(P_u \alpha_{u,i}) = \frac{L}{r-t_j+1}
	\end{equation}
	for all $u = 1, \ldots, r-t_j+1$ and $i \in  [n]\setminus A_j^*$. Therefore, combining \eqref{24} and \eqref{dim3}, we have
	\begin{equation}\label{p11}
		P_1 \alpha_{1,i} = P_2 \alpha_{2,i} = \cdots = P_{r-t_j+1} \alpha_{r-t_j+1,i},\text{ for all } i \in  [n]\setminus A_j.
	\end{equation}
	Since $\alpha_{1,i}$ is a nonzero element determined solely by the parity-check matrix $H$ for each $i \in [n]\setminus A_j^{*}$, without loss of generality, we may assume that $\alpha_{1,i} = 1$ for all $i \in [k]\setminus A_j^{*}$ due to the code equivalence.
%
%
%
Therefore, we obtain
	\begin{equation}\label{p12}
		P_1=P_1\alpha_{1,i_1}=P_1\alpha_{1,i_2}, \text{ for any } i_1,i_2\in [k]\setminus A_j^{*}.
	\end{equation}
	Combining \eqref{p11} and \eqref{p12}, we obtain that
	\begin{equation}
		P_u\alpha_{u,i_1}=P_u\alpha_{u,i_2}, \text{ for all } u \in [r-t_j+1] \text{ and }i_1,i_2\in [k]\setminus A_j.
	\end{equation}
	Let $\beta_i=\frac{\alpha_{2,T_i}}{\alpha_{2,T_w}}$, where $i=1,2,\ldots,w-1$.
	Therefore, for all $i \in \{1,2,\ldots, w-1\}\setminus\{j\}$,
	\[P_2 \beta_i = P_2.\]
	By definition $P_2$ is a vector space over $\mathbb{F}_q$, so
	\begin{equation}\label{p22}
		P_2 \gamma =P_2 \quad \text{for all } \gamma \in \mathbb{F}_q(\{\beta_i : i \in \{1,2,\ldots, w-1\}\setminus \{j\}\}). 
	\end{equation}
	On the other hand,
	\begin{align}\label{43}
		\dim_{\mathbb{F}_q}(P_1 \alpha_{1,T_j} + \cdots + P_{r-t_j+1} \alpha_{r-t_j+1,T_j})= \dim_{\mathbb{F}_q}\{c_{1,T_j}, c_{2,T_j}, \ldots, c_{l,T_j}\} = L,
	\end{align}
	while according to \eqref{dim3} 
	\begin{equation}\label{44}
		\dim_{\mathbb{F}_q}(P_u \alpha_{u,T_j}) =  \frac{L}{r-t_j+1} , \quad u = 1,2,\ldots,r-t_j+1.
	\end{equation}
	It follows from \eqref{43} and \eqref{44} that the vector spaces $P_1 \alpha_{1,T_j}, P_2 \alpha_{2,T_j}, \ldots, P_{r-t_j+1} \alpha_{r-t_j+1,T_j}$ are pairwise disjoint. According to \eqref{p11} and \eqref{p12}, we have $P_u \alpha_{u,i}=P_1 \alpha_{1,T_j}=P_1$ for all $u \in [r-t_j+1]$ and $i \in [k] \setminus A_j$, it follows that
	\[
	P_u \alpha_{u,i} \cap P_2 \alpha_{2,T_j} = \{0\}, \quad \text{for all } u \in [r-t_j+1] \text{ and } i \in [k] \setminus A_j.
	\]
	Setting $u = 2$, it follows that $P_2 \alpha_{2,T_i} \cap P_2 \alpha_{2,T_j} = \{0\}$ for all $i \in \{1,2,\ldots, w\} \setminus \{j\}$.
	Taking $i = w$, we deduce $P_2 \beta_j \neq P_2$. Then, by \eqref{p22}, we conclude that $\beta_j \notin \mathbb{F}_q(\{\beta_i : i \in \{1,2,\ldots, w-1\} \setminus \{j\}\})$.
\end{IEEEproof}

\begin{theorem}\label{our_bound}
	Let $F=\mathbb{F}_{q}$ and $E=\mathbb{F}_{q^{L}}$ for a prime power $q$. Let $\mathcal{C} \subseteq E^{n}$ be an $(n, k)$ scalar linear MDS code over the base field $F$. 
	Suppose $\mathcal{S}$ is a PE linear repair scheme for $\mathcal{C}$ that achieves the cut-set bound with equality. 
	Let the exclusion sets for $\mathcal{S}$ be $A_{1}, A_{2},\ldots,A_{l}$, where each set $A_{i}$ has cardinality $t_{i}$ for $1\leq t_i\leq \min\{k,n-k\}$ and $i \in [l]$. Assume that $t_1\leq \cdots\leq t_l$. Let $w$ be the greatest integer such that $\sum_{i=1}^{w}t_i\leq k$.
	Then the sub-packetization level $L$ is at least 
	\begin{equation}\label{sub-packbound}
		L	\ge \prod\limits_{i=1}^{w-1}p_i,
	\end{equation}
	where $p_{i}$ is the $i$-th smallest prime, and the product equals $1$ when $w = 1$.
\end{theorem}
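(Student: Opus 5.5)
The bound follows from Lemma \ref{H} together with the standard counting argument for subfields of a finite field, in the same way as the conventional bound of \cite{tamo2019repair}. If $w=1$ the product is empty and the claim $L\ge 1$ is trivial, so we assume $w\ge 2$. By Lemma \ref{H} there are elements $\beta_1,\ldots,\beta_{w-1}\in E$ such that $\beta_j\notin F_j:=\mathbb{F}_q(\beta_i : i\in[w-1]\setminus\{j\})$ for every $j\in[w-1]$. Set $K=\mathbb{F}_q(\beta_1,\ldots,\beta_{w-1})\subseteq E$, and write $[K:\mathbb{F}_q]=m$, so that $m\mid L$ and hence $L\ge m$.

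We next attach a prime to each index $j$. For each $j$, the field $F_j$ is a proper subfield of $K$, because $\beta_j\in K\setminus F_j$. Since $K=F_j(\beta_j)$ and $K\neq F_j$, the degree $[K:F_j]>1$. Choose a prime $s_j$ dividing $[K:F_j]$. All subfields of $K$ containing $\mathbb{F}_q$ are of the form $\mathbb{F}_{q^{d}}$ with $d\mid m$, so $F_j=\mathbb{F}_{q^{m_j}}$ with $m_j\mid m$, and $s_j$ divides $m/m_j$.

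The key step is to show that the primes $s_1,\ldots,s_{w-1}$ are pairwise distinct, or at least that $m$ has $w-1$ distinct prime divisors. A cleaner route is to work with the $q$-exponents directly. Let $d_i=[\mathbb{F}_q(\beta_i):\mathbb{F}_q]$. Then $m=\mathrm{lcm}(d_1,\ldots,d_{w-1})$ and $m_j=\mathrm{lcm}(d_i: i\neq j)$. The condition $\beta_j\notin F_j$ means $d_j\nmid m_j$. Hence there is a prime $s_j$ whose exponent in $d_j$ strictly exceeds its exponent in every $d_i$ with $i\neq j$; in other words, $d_j$ is the unique maximizer of the $s_j$-adic valuation among $d_1,\ldots,d_{w-1}$. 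If $s_j=s_{j'}$ for some $j\neq j'$, then both $d_j$ and $d_{j'}$ would be unique strict maximizers of the same valuation, which is impossible. So the $s_j$ are distinct primes, each of which divides $m$.

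It follows that $m$ is divisible by $w-1$ distinct primes, so $L\ge m\ge \prod_{j=1}^{w-1}s_j\ge\prod_{i=1}^{w-1}p_i$, because the product of any $w-1$ distinct primes is at least the product of the first $w-1$ primes. The whole argument rests on Lemma \ref{H}. Given that lemma, the only step that needs care is the lcm/valuation argument that extracts the distinct primes, and that argument is elementary.
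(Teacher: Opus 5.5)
Your proof is correct and follows the same route as the paper: invoke Lemma~\ref{H}, pass to the degrees $d_i$ and the lcm describing the compositum, and extract one distinct prime per index $j$. Your valuation formulation of that last step is the right one. The paper asserts a prime dividing $d_j$ and coprime to $\bar{d}_j$, which does not follow from $d_j\nmid\bar{d}_j$: for $(d_1,d_2,d_3)=(4,6,5)$ all three non-divisibility conditions hold, yet the only prime dividing $d_1$ also divides $\bar{d}_1=30$. Your ``unique strict maximizer of the $s_j$-adic valuation'' condition is exactly what $d_j\nmid\bar{d}_j$ gives, and it is what makes the distinctness argument go through.
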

\begin{IEEEproof}
	For $w=1$, the inequality is always satisfied. 
	Therefore, it suffices to consider the case where $w \geq 2$. 
	Let $m=w-1$. 
	According to Lemma \ref{H}, 
	there exist $\beta_1,\ldots,\beta_{m}\in E$ such that
	for each $j=1,2,\ldots,m$, 
	\begin{equation}\label{beta1}
		\beta_{j}\notin \mathbb{F}_{q}(\beta_{i}:i\in [m]\backslash\{j\}).
	\end{equation}
	For $1\leq i\leq m$, let $d_{i}$ be the degree of $\beta_{i}$ over $\fq$. 
	Condition \eqref{beta1} immediately yields $d_i > 1$ for all $i$.
	The extension degree of $\fq(\beta_1,\beta_2,\ldots,\beta_m)$ over $\fq$ is $d=\mathrm{lcm}(d_1,d_2,\ldots,d_m).$
	Correspondingly, for $1 \le j \le m$, we have 
	\[[\mathbb{F}_q(\beta_i : i\in [m]\backslash\{j\}): \mathbb{F}_q] = \bar{d}_j = \operatorname{lcm}(d_i : i\in [m]\backslash\{j\}).\]
	 The condition \eqref{beta1} implies that for each $1\leq j\leq m$, 
	$d_j \nmid \bar{d}_j$.
	By considering the prime factorization of $d_j$, it follows that for each $1\leq j\leq m$, there exists a prime number $p_j$ that divides $d_j$ and is coprime to $\bar{d}_j$.
	We assert that the primes $\{p_j\,:\, 1\leq j\leq m\}$ are pairwise distinct. Suppose, to the contrary,
	that $p_i=p_j$ for some $i\neq j$. Observe that $p_j\, |\, d_j$ and $d_j \, |\,\bar{d}_i$, hence $p_j\,|\,\bar{d}_i$.
	By the assumed equality, $p_i\,|\,\bar{d}_i$ follows. This, however, contradicts the given fact that $\text{gcd}(p_i,\bar{d}_i)=1$.
	Therefore, the primes are distinct.
    Since each prime divides $d$, we have $d \geq \prod_{i=1}^{m} p_i$. 
    Since $\fq(\beta_1,\beta_2,\ldots,\beta_m)$ is a subfield of $E$, it follows that $L\geq d\geq \prod_{i=1}^{m} p_i$.
    Furthermore, the lower bound continues to hold when each $p_i$ is taken as the $i$-th smallest prime.
\end{IEEEproof}

In the case where exclusion sets with a uniform size, we can directly derive the following lower bound from Theorem~\ref{our_bound}.
\begin{theorem}\label{cor_bound}
	Let $F=\mathbb{F}_{q}$ and $E=\mathbb{F}_{q^{L}}$ for a prime power $q$. Let $\mathcal{C} \subseteq E^{n}$ be an $(n, k)$ scalar linear MDS code over the base field $F$. 
	Suppose $\mathcal{S}$ is a PE repair scheme for $\mathcal{C}$ that achieves the cut-set bound with equality.
	Let the exclusion sets for $\mathcal{S}$ be $A_{1}, A_{2},\ldots,A_{l}$, where all sets have an equal cardinality $t$ for $1\leq t\leq \min\{k,n - k\}$.
	Then the sub-packetization level $L$ is at least 
	\begin{equation}\label{sub-packbound_1}
		L	\ge \prod\limits_{i=1}^{\left\lfloor k/t \right\rfloor-1}p_i,
	\end{equation}
	where $p_{i}$ is the $i$-th smallest prime, and the product equals $1$ when $\left\lfloor k/t \right\rfloor = 1$.
\end{theorem}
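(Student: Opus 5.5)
This is a direct specialization of Theorem~\ref{our_bound}, and I would prove it by computing the parameter $w$ in the uniform case and substituting. Under the hypotheses, all exclusion sets $A_1,\ldots,A_l$ have the same cardinality $t$, so $t_1=t_2=\cdots=t_l=t$. The ordering assumption $t_1\le\cdots\le t_l$ of Theorem~\ref{our_bound} therefore holds trivially. The constraint $1\le t_i\le\min\{k,n-k\}$ is exactly the stated range of $t$. Hence Theorem~\ref{our_bound} applies verbatim to $\mathcal{C}$ and $\mathcal{S}$.

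The one step with content is identifying $w$. In Theorem~\ref{our_bound}, $w$ is the greatest integer with $\sum_{i=1}^{w}t_i\le k$, which here means $wt\le k$. This gives $w=\lfloor k/t\rfloor$, provided $w$ does not exceed the number of groups $l$. To check this, note that the groups partition $[n]$, so $lt=n>k$, and therefore $l>k/t\ge\lfloor k/t\rfloor$. Thus $\lfloor k/t\rfloor$ is an admissible index, and it is the largest one with $wt\le k$. Also $t\le k$ gives $\lfloor k/t\rfloor\ge1$, so the product in \eqref{sub-packbound} is well defined.

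Substituting $w=\lfloor k/t\rfloor$ into \eqref{sub-packbound} gives
\[
L\ \ge\ \prod_{i=1}^{\lfloor k/t\rfloor-1}p_i .
\]
The convention that the empty product equals $1$ when $\lfloor k/t\rfloor=1$ is exactly the case $w=1$ of Theorem~\ref{our_bound}.

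The only place needing care is the boundary check $w\le l$ in the second step, which relies on the fact that the exclusion sets form a partition of $[n]$. Everything else is direct substitution, so I expect no real obstacle.
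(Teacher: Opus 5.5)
Your proposal is correct and matches the paper, which obtains this bound by specializing Theorem~\ref{our_bound} directly to the uniform case. Your identification $w=\lfloor k/t\rfloor$, together with the check $w\le l$ via $lt=n>k$ (which the paper leaves implicit), is exactly what that specialization requires.
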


By the Prime Number Theorem \cite{iwaniec2021analytic}, for the case where all exclusion sets have a given equal size~$t$, the lower asymptotic bound on $L$ is $e^{(1+o(1))\lfloor k/t \rfloor \log(\lfloor k/t \rfloor)}$. Furthermore, if we set $t$ to be a quantity related to $k$, then the lower bound of $L$ turns into a constant. For instance, if $3t\leq k<4t$, then the lower bound becomes $L\geq 6$.

\section{A Trade-Off for PE Repair Schemes}\label{tradeoffsec}
For clarity, in this section, we assume a uniform value $t_i = t$ for all $i\in[n]$.
Consider an $(n,k)$ scalar MDS code under the PE repair framework with flexibility $t$. 
The \textit{smallest possible sub-packetization level} refers to a lower bound of the sub-packetization level for which MDS codes can achieve the cut-set bound for single-node repair.
According to Theorem~\ref{cor_bound}, an increase in the flexibility $t$ leads to a decrease in the smallest possible sub-packetization level $L$.
However, increasing $t$ also gives rise to a higher minimum repair bandwidth.
This is because a larger $t$ also causes a decrease in the maximum allowable repair locality $d_{max}$, which in turn increases the minimum repair bandwidth.
More specifically, the cut-set bound indicates that the minimum possible repair bandwidth to repair a failed node with accessing $d$ helper nodes is $\frac{dL}{d - k+1}$. This value decreases as $d$ increases. 
Therefore, the repair bandwidth can be minimized in a two-layer manner when it reaches the cut-set bound under the condition that $d$ attains its maximum value. 

To precisely characterize how $t$ influences the repair performance of scalar MDS codes, one should examine both the smallest possible sub-packetization level $L$ and the minimum repair bandwidth jointly. Since the value of $L$ influences the repair bandwidth, a fair comparison requires normalizing the repair bandwidth relative to $L$. We thus introduce the concept of \textit{normalized repair bandwidth}. For a repair scheme with repair bandwidth $\beta$ and sub-packetization level $L$, the normalized repair bandwidth of the repair scheme is given by
$\bar{\beta}=\frac{\beta}{L}$, which represents the number of bits transmitted per repaired bit. For a given flexibility $t$, the \textit{minimum normalized repair bandwidth} equals $\bar{\beta}_{min}=\frac{n-t}{n-t-k+1}$. 

We have demonstrated the trade-off curve for practical $(14,10)$~RS codes over the base field $\mathbb{F}_2$ in Fig.~1.
In the scenario of $(14,10)$ RS codes, the range of allowable $t$ is narrow, obscuring the trend of $L$ and $\bar{\beta}_{min}$ at intermediate points (corresponds to new repair modes). To capture that progression,  we therefore also plot the trade-off curve for the $(20,10)$ scalar MDS codes in Fig. 2, where the evaluation of $t$ ranges from 1 to 10.

\begin{figure}[ht]%
	\centering
	\label{tradeoff_curve_for_20}
	\includegraphics[width=0.65\textwidth]{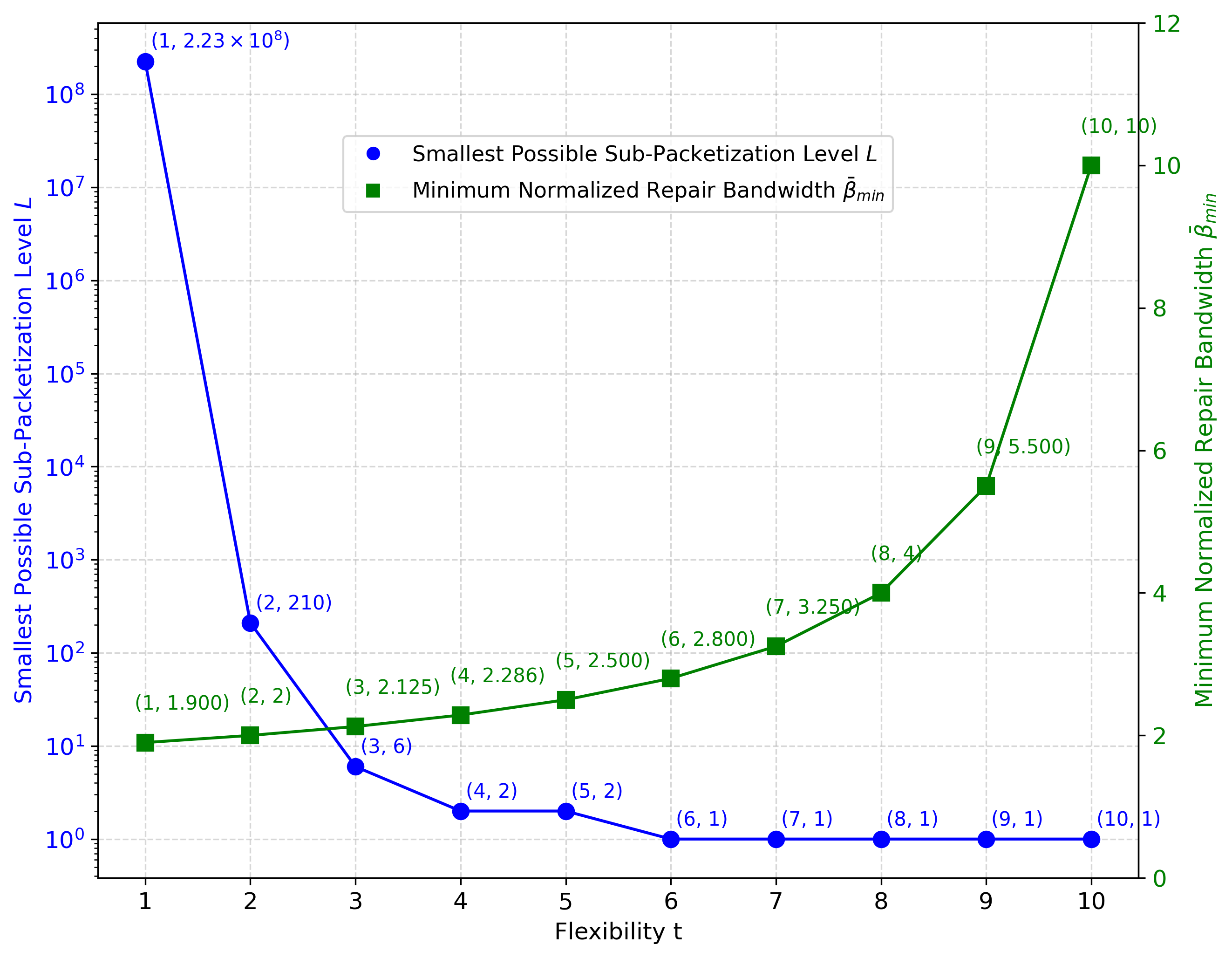}
	\caption{The trade-off among the flexibility $t$, the smallest possible sub-packetization level $L$ and the minimum normalized repair bandwidth $\bar{\beta}_{min}$, for $(20,10)$ RS codes over a base field $\mathbb{F}_q$.}
\end{figure}

Fig. 2 indicates that when sub-packetization is the primary consideration, $t = 6$ is a more favorable option. 
When the repair bandwidth is prioritized, $t = 3$ or 4 becomes a more preferable choice.
Consequently, in a specific practical scenario, one may identify the appropriate $t$ by plotting its particular trade-off curve.

Although the lower bound of the PE repair scheme with  flexibility $t\geq 2$ is strictly smaller than that of the conventional repair framework, the practical advantage of the PE repair framework remains elusive because it is unclear whether the lower bound is achievable.
To explicitly demonstrate that the PE repair scheme can indeed yield codes with better repair performance than the conventional one, in the subsequent two sections, we present two generic constructions of RS codes that achieve the cut-set bound under the PE repair scheme, with sub-packetization levels significantly less than the lower bound of sub-packetization level established for the conventional repair framework.

\section{The First Construction of RS Codes with Optimal Repair Bandwidth}\label{construction1}
In this section, we construct a class of RS codes that admit a nontrivial optimal repair scheme achieving the cut-set bound~\eqref{cut-set-bound} for any single failed node.
We begin by specifying the symbol field and the base field of the constructed RS codes, as well as the associated finite fields that will be utilized in the repair scheme.
\subsection{The symbol field and the base field}
Let $\mathbb{F}_q$ be a finite field, where $q$ is a power of a prime.
Let $s\geq 2$ be a positive integer, the value of which is determined by the dimension of the constructed RS code and the repair locality of the equipped repair scheme, as will be specified subsequently.
Let $l\geq 2$ be a positive integer and $p_1,p_2,\ldots,p_l$ be $l$ distinct primes such that
\begin{equation}\label{p}
p_i\equiv 1 \bmod s
\end{equation}
for all $1\leq i\leq l$.
There are infinitely many such primes according to Dirichlet's prime number theorem. For a fixed $i \in [l]$, let $\mathcal{R}_i$ be a nonempty subset of $[l] \setminus \{i\}$. Define three notations as follows:
\begin{equation}\label{u}
\bar{u}_i = \prod_{j \in \mathcal{R}_i} p_j, \quad u_i = \prod_{j \in [l] \setminus \{i\}} p_j, \quad u = \prod_{j \in [l]} p_j.
\end{equation}
It is straightforward to verify that $\bar{u}_i$ is a divisor of $u_i$, and $u_i$ is a divisor of $u$. According to subfield criterion theorem \cite[Theorem 2.6]{lidl97}, it can be deduced that for each $i \in [l]$, the finite field $\mathbb{F}_{q^{p_i}}$ is an extension field of $\mathbb{F}_q$. Moreover, the intersection of any two distinct extension fields $\mathbb{F}_{q^{p_i}}$ and $\mathbb{F}_{q^{p_j}}$ is precisely the finite field $\mathbb{F}_q$, as $\gcd(p_i, p_j) = 1$ for all distinct indices $i,j\in [l]$.
Additionally, for each $i \in [l]$, the following inclusion relationship holds:
\begin{equation}\label{field_notation}
\mathbb{F}_q \subseteq \mathbb{F}_{q^{\bar{u}_i}} \subseteq \mathbb{F}_{q^{u_i}} \subseteq \mathbb{F}_{q^u} \subseteq \mathbb{F}_{q^{us}}.
\end{equation}

We set that the symbol field of constructed RS codes is $\mathbb{F}_{q^{us}}$ and the base field of constructed RS codes is $\mathbb{F}_{q}$.

\begin{example}\label{exam1}
Set the base field as $\mathbb{F}_2$ and $l=4$. Assuming that $s=2$. The smallest four prime numbers that satisfy condition \eqref{p} are $3,\,5,\,7$, and $11$.
Thus, let the symbol field be $\mathbb{F}_{2^{2310}}$. All the subfields of the finite field $\mathbb{F}_{2^{1155}}$ can be determined by listing all positive divisors of $1155$, which has been shown in Figure~\ref{subfields}.
\begin{figure}[h]%
\centering
\label{subfields}
\caption{Inclusion Relationships Among Associated Subfields of the Symbol Field $\mathbb{F}_{2^{2310}}$ }
\includegraphics[width=0.4\textwidth]{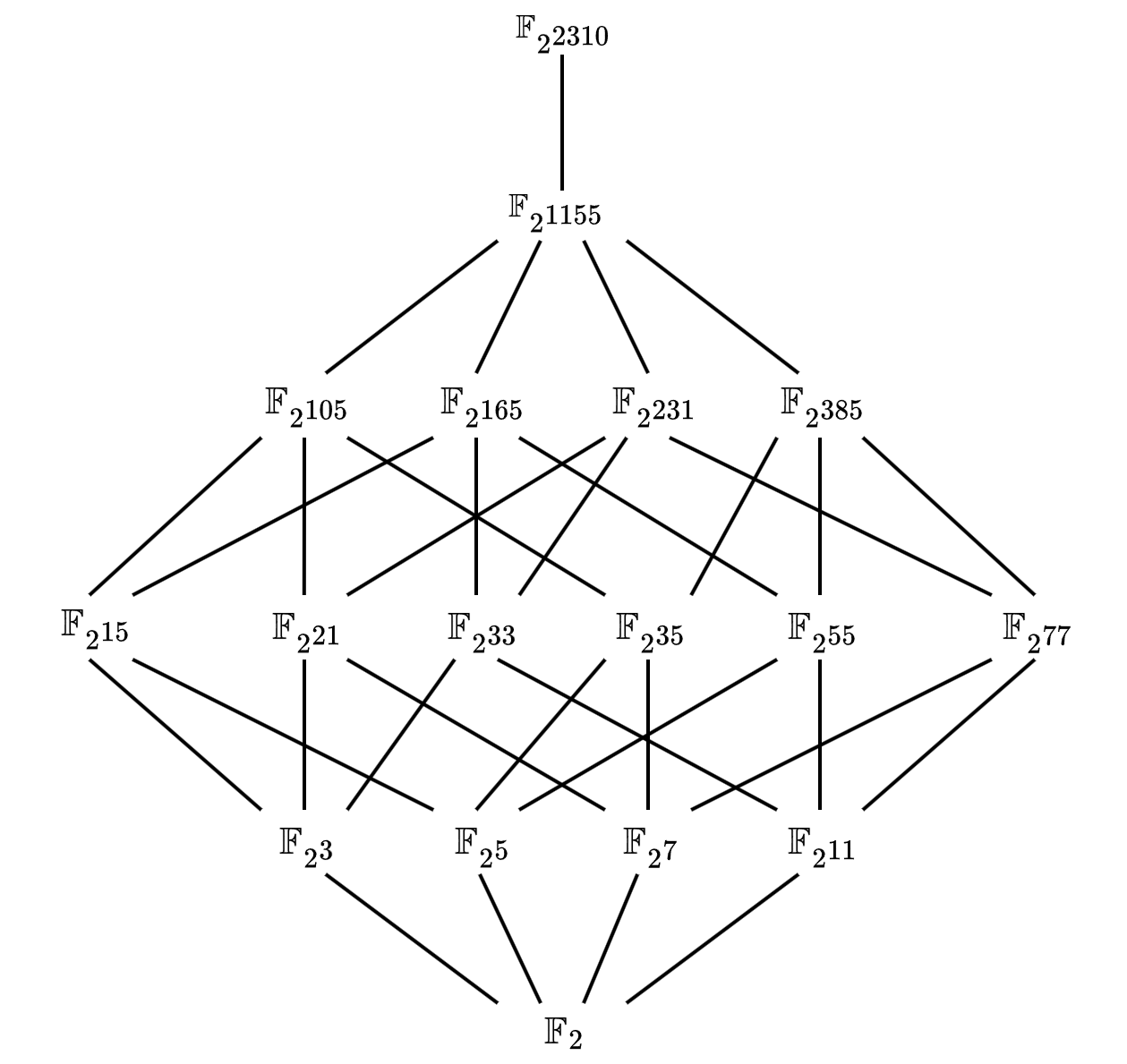}
\end{figure}
\end{example}
\subsection{Construction of RS codes}
Let $L=us$ be the sub-packetization level of RS codes to be constructed. To construct the RS codes, we must specify their evaluation points. 
We design these points to be elements in a finite field that possess both the properties of being a primitive element and a defining element.
Recall that a \textit{primitive element} of a finite field $\mathbb{F}_q$ is a generator of its multiplicative group.
For any subset $M$ of $\mathbb{F}_{q^L}$, the field $\fq(M)$ is defined as the intersection of all subfields of $\mathbb{F}_{q^L}$
containing both $\fq$ and $M$ and is called the extension field of $\fq$ obtained by adjoining the elements in $M$. If $M$ consists of a single element $\alpha\in \mathbb{F}_{q^L}$
and $\fq(\alpha)$ is exactly $\mathbb{F}_{q^L}$, then $\alpha$ is called a \textit{defining element} of $\mathbb{F}_{q^L}$ over $\fq$.
By \cite[Theorem 2.10]{lidl97}, for any positive integer $p$, any primitive element of a finite extension field $\mathbb{F}_{q^p}$ is also a defining element of $\mathbb{F}_{q^p}$ over $\mathbb{F}_q$.

For each $1 \leq i \leq l$, we select $t_i$ distinct primitive elements from $\mathbb{F}_{q^{p_i}}$. Note that the number of primitive elements in a finite field $\fq$ is given by $\phi(q-1)$, where $\phi(x)$ denotes Euler's totient function, which counts the number of positive integers up to $x$ that are coprime to $x$. Consequently, for each $i \in [l]$, the following inequality must be satisfied: $$1\leq t_i \leq \phi(q^{p_i}-1).$$
For each $i \in [l]$, let $\alpha_{i,1}, \alpha_{i,2}, \ldots, \alpha_{i,t_i}$ denote $t_i$ distinct primitive elements of $\mathbb{F}_{q^{p_i}}$. For each $1\leq i\leq l$, let
\[A_i = \{\alpha_{i,j_i}\, :\, 1 \leq j_i \leq t_i\}.\]

For each $1\leq i\leq l$ and $1 \leq j_i \leq t_i$, since $\alpha_{i,j_i}$ is a primitive element of $\mathbb{F}_{q^{p_i}}$, all other elements in $A_i$ lie within $\fq(\alpha_{i,j_i})$.
Consequently, $\fq(\alpha_{i,j_i})=\fq(A_i)$. By \cite[Theorem 2.10]{lidl97}, the primitive element $\alpha_{i,j_i}$ is also a defining element of $\mathbb{F}_{q^{p_i}}$ over $\fq$. 
Hence, $$\mathbb{F}_{q^{p_i}}=\fq(\alpha_{i,j_i})=\fq(A_i).$$
For any distinct $i_1$, $i_2\in [l]$, we claim that 
\begin{equation}\label{two}
	\mathbb{F}_{q^{p_{i_1}p_{i_2}}}=\fq(A_{i_1},A_{i_2})=\mathbb{F}_{q^{p_{i_1}}}(A_{i_2})=\mathbb{F}_{q^{p_{i_1}}}(\alpha_{i_2,j_{i_2}})
\end{equation}
holds for any $1\leq j_{i_2}\leq t_{i_2}$. 
The equality $\mathbb{F}_{q^{p_{i_1}}}(A_{i_2})=\mathbb{F}_{q^{p_{i_1}}}(\alpha_{i_2,j_{i_2}})$ follows because $\alpha_{i_2,j_{i_2}}$ is a primitive element of $\mathbb{F}_{q^{p_{i_2}}}$ and $A_{i_2}\subseteq \mathbb{F}_{q^{p_{i_2}}}$.
By the subfield criterion theorem \cite[Theorem 2.6]{lidl97}, we deduce that $\mathbb{F}_{q^{p_{i_1}p_{i_2}}}=\fq(A_{i_1},A_{i_2})$. 
Since $[\mathbb{F}_{q^{p_{i_1}p_{i_2}}}:\mathbb{F}_{q^{p_{i_1}}}]=p_{i_2}$ and $A_{i_2}$ is contained in $\mathbb{F}_{q^{p_{i_1}p_{i_2}}}$,
therefore the extension degree of $\mathbb{F}_{q^{p_{i_1}}}(A_{i_2})$ over $\mathbb{F}_{q^{p_{i_1}}}$ is a divisor of $p_{i_2}$.
Combining the fact that $p_{i_2}$ is prime and $A_{i_2}$ is not contained in $\mathbb{F}_{q^{p_{i_1}}}$, we have $[\mathbb{F}_{q^{p_{i_1}}}(A_{i_2}):\mathbb{F}_{q^{p_{i_1}}}]=p_{i_2}$,
yielding $\mathbb{F}_{q^{p_{i_1}p_{i_2}}}=\mathbb{F}_{q^{p_{i_1}}}(A_{i_2})$. This argument \eqref{two} can generalize inductively to any set of distinct primes. 
Therefore,
for each $i\in[l]$ and $1\leq j_i\leq t_i$, we have $\alpha_{i,j_i}$ is a defining element of $\mathbb{F}_{q^{u}}$ over $\mathbb{F}_{q^{u_i}}$, \textit{i.e.},
\begin{equation}\label{any}
	\mathbb{F}_{q^{u}}=\mathbb{F}_{q^{u_i}}(\alpha_{i,j_i}).
\end{equation}

\noindent\textbf{Construction 1.} Let $ l\geq 2 $ be a positive integer, and let $ t_1, t_2, \ldots, t_l $ denote $l$ positive integers. Let $n=\sum_{i=1}^{l}t_i$ and $t=\max_{i\in[l]}t_i$.
Let $k$ be a positive integer satisfying $1\leq k\leq n-t$. Let $d$ be a positive integer such that $k\leq d\leq n-t$. 
Denote $s=d-k+1$. Let $p_1,p_2,\ldots,p_l$ be $l$ distinct primes such that $p_i\equiv 1 \bmod s$ and $ t_i\leq \phi(q^{p_i}-1)$ for all $1\leq i\leq l$. For each $i \in [l]$, let $\alpha_{i,1}, \alpha_{i,2}, \ldots, \alpha_{i,t_i}$ denote $t_i$ distinct primitive elements of the finite field $\mathbb{F}_{q^{p_i}}$, let $A_i = \{\alpha_{i,1}, \alpha_{i,2}, \ldots, \alpha_{i,t_i}\}$ be the $i$-th exclusion set.
Denote the symbol field $\mathbb{F}_{q^{us}}$ as $E$, where $u$ is defined as in \eqref{u}. Let $A = \{\alpha_{i,j_i} : 1 \leq i \leq l, 1 \leq j_i \leq t_i\}$. Define the Reed-Solomon code $RS(n,k,A)_{E}$ as
\[\left\{(f(\alpha_{i,j_i}),1 \leq i \leq l, 1 \leq j_i \leq t_i)\, : \, f(x)\in E[x],\, \deg(f)<k \right\}.\]

\begin{remark}
The code length of Construction~1 is $n=\sum_{i=1}^{l}t_i$, where $ 1\leq t_i\leq \phi(q^{p_i}-1)$ for each $1\leq i\leq l$.
We remark that this construction affords considerable flexibility in the choice of code length over a fixed symbol field $\mathbb{F}_{q^L}$, because each $t_i$ may be selected from a wide range. To illustrate, even when $q$ takes its minimum value $2$, the bound $\phi(2^{p}-1)$ can be large. The number $2^{p}-1$ is a Mersenne number. If $p=5$, $2^{5}-1$ is a prime $31$ and $\phi(2^{5}-1)$ is equal to $30$.
If $p=7$, $2^{7}-1$ is a prime $127$ and $\phi(2^{7}-1)$ is equal to $126$. For larger primes $p$, the value of $\phi(q^{p_i}-1)$ grows dramatically, making the range of admissible $t_i$ (and thus of $n$) highly variable. 
\end{remark}
\begin{example}\label{exam2}
	As demonstrated in Example~\ref{exam1}, consider the base field $\mathbb{F}_2$, $s=2$ and $l=4$, and let the symbol field be $\mathbb{F}_{2^{2310}}$.  The parameters $t_1,\ldots,t_4$ satisfy $1\leq t_1\leq 6$, $1\leq t_2\leq 30$, $1\leq t_3\leq 126$ and $1\leq t_4\leq 1936$. Construction 1 yields RS codes whose lengths $n = \sum_{i=1}^4 t_i$ can be flexibly chosen within the range $4 \leq n \leq 2098$.
\end{example}

\subsection{Repair Scheme for a Single Erasure}\label{single}
Unless otherwise specified, the notations $\bar{u}_i$, $u_i$, and $u$ are defined as in \eqref{u}.

\begin{lemma}\cite[Lemma 1]{tamo2019repair}\label{1}
Let $\beta$ be a generator element of $\mathbb{F}_{q^{us}}$ over $\mathbb{F}_{q^{u}}$. Let $\alpha_i$ be a defining element of $\mathbb{F}_{q^{u}}$ over the field $\mathbb{F}_{q^{u_i}}$ for each $i\in [l]$. Given $i\in [l]$, define the following vector spaces over $\mathbb{F}_{q^{u_i}}$:
\begin{align*}
 & S_i^{(1)}=\text{Span}_{\mathbb{F}_{q^{u_i}}}(\beta^{\mu}\alpha_i^{\mu+\varsigma s},\mu=0,1,\ldots,s-1;
\,\varsigma=0,1,\ldots,\frac{p_i-1}{s}-1),\\
&S_i^{(2)} =\text{Span}_{\mathbb{F}_{q^{u_i}}}(\sum\limits_{t=0}^{s-1}\beta^t\alpha_i^{p_i-1}),\\
&S_i=S_i^{(1)}+S_i^{(2)}.
\end{align*}
Then
\begin{equation*}
\dim_{\mathbb{F}_{q^{u_i}}}S_i=p_i,\quad S_{i}+\alpha_{i}S_{i}+\cdots+\alpha_{i}^{s-1}S_{i}=\mathbb{F}_{q^{us}}.
\end{equation*}
\end{lemma}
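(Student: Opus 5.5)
The plan is to work with an explicit basis of $\mathbb{F}_{q^{us}}$ over $K:=\mathbb{F}_{q^{u_i}}$ and count dimensions, writing $\alpha=\alpha_i$ and $p=p_i$. Since $\alpha$ is a defining element of $\mathbb{F}_{q^u}$ over $K$ and $[\mathbb{F}_{q^u}:K]=p$, the powers $1,\alpha,\ldots,\alpha^{p-1}$ form a $K$-basis of $\mathbb{F}_{q^u}$. Since $\beta$ generates $\mathbb{F}_{q^{us}}$ over $\mathbb{F}_{q^u}$, the powers $1,\beta,\ldots,\beta^{s-1}$ form an $\mathbb{F}_{q^u}$-basis of $\mathbb{F}_{q^{us}}$. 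Hence the products $\{\beta^\mu\alpha^\nu: 0\le \mu\le s-1,\ 0\le\nu\le p-1\}$ form a $K$-basis of $\mathbb{F}_{q^{us}}$, and $\dim_K\mathbb{F}_{q^{us}}=ps$.

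\emph{Dimension of $S_i$.} For fixed $\mu$, the exponent $\mu+\varsigma s$ with $0\le \varsigma\le \frac{p-1}{s}-1$ is at most $p-2$. The pairs $(\mu,\mu+\varsigma s)$ are pairwise distinct, because $\mu$ is recovered as the residue mod $s$. So the generators of $S_i^{(1)}$ are $p-1$ distinct basis vectors, each with $\alpha$-exponent at most $p-2$. The generator of $S_i^{(2)}$ is $\sum_t\beta^t\alpha^{p-1}$. It involves only basis vectors with $\alpha$-exponent $p-1$, which occur in no generator of $S_i^{(1)}$, so it is independent of them. Therefore $\dim_K S_i=p$.

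\emph{Spanning.} Since $\dim_K \sum_{m=0}^{s-1}\alpha^m S_i\le sp=\dim_K\mathbb{F}_{q^{us}}$, it suffices to show the sum is everything. For fixed $\mu$, the exponents $\mu+\varsigma s+m$ with $0\le m\le s-1$ and $\varsigma$ in its range fill the interval $[\mu,\mu+p-2]$ exactly. Hence the sum contains
\[W_\mu:=\beta^\mu\,\mathrm{Span}_K(\alpha^{\mu},\ldots,\alpha^{\mu+p-2}),\]
which is a hyperplane in $\beta^\mu\mathbb{F}_{q^u}$. Some of these powers may exceed $p-1$, but they are still $K$-independent, since multiplication by $\alpha^\mu$ is injective. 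The sum $W:=\bigoplus_\mu W_\mu$ has dimension $s(p-1)$. The quotient $\mathbb{F}_{q^{us}}/W$ is $s$-dimensional, with basis the classes of $\beta^t\alpha^{t+p-1}$ for $t=0,\ldots,s-1$.

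It remains to show that the $s$ vectors $x_m:=\alpha^m\sum_t\beta^t\alpha^{p-1}=\sum_t\beta^t\alpha^{p-1+m}$ from the $\alpha^mS_i^{(2)}$ span this quotient. Write $\beta^t\alpha^{p-1+m}\equiv c_{t,m}\,\beta^t\alpha^{t+p-1}\pmod{W_t}$. If $m=t$, then $c_{t,m}=1$. If $m<t$, then $p-1+m\in[t,t+p-2]$, using $p-1\ge s>t$, which holds because $p\equiv 1 \bmod s$ forces $p>s$; so $c_{t,m}=0$. Thus the matrix $(c_{t,m})$ is triangular with unit diagonal, hence invertible, and the $x_m$ span the quotient. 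This gives $S_i+\alpha S_i+\cdots+\alpha^{s-1}S_i=\mathbb{F}_{q^{us}}$.

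The main obstacle is this last bookkeeping step: reductions of $\alpha^{p-1+m}$ for $m>t$ produce uncontrolled coefficients. The triangular structure sidesteps them, so the argument never needs the minimal polynomial of $\alpha$.
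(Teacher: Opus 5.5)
Your proof is correct and is essentially the argument of the cited source; the paper itself gives no proof and only refers to \cite{tamo2019repair}. That argument uses the $K$-basis $\{\beta^\mu\alpha^\nu\}$, notes that $\sum_m\alpha^m S_i^{(1)}$ fills the windows $\beta^\mu\,\mathrm{Span}_K(\alpha^\mu,\ldots,\alpha^{\mu+p-2})$, and then obtains the missing directions $\beta^t\alpha^{t+p-1}$ from the translates $\alpha^m S_i^{(2)}$. Your unit upper-triangular matrix $(c_{t,m})$ is just a non-inductive packaging of the induction on $t$ used there, where terms with $\beta$-index above the shift lie in the windows and terms below it are covered by the induction hypothesis.
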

\begin{lemma}\cite[Theorem 1.86]{lidl97}\label{5}
	Let $\theta \in F$ be algebraic of degree $n$ over $K$ and let $g$ be the minimal polynomial of $\theta$ over K. Then:
	\begin{enumerate}
		\item[(i)] $K(\theta)$ is isomorphic to $K[x]/(g)$.
		\item[(ii)] $[K(\theta):K]=n$ and $\{1,\theta,\ldots,\theta^{n-1}\}$ is a basis of $K(\theta)$ over $K$.
		\item[(iii)] Every $\alpha \in K(\theta)$ is algebraic over $K$ and its degree over $K$ is a divisor of $n$.
	\end{enumerate}
\end{lemma}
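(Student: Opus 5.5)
The plan is to prove the three parts in order, all from the evaluation homomorphism $\varphi:K[x]\to F$, $h(x)\mapsto h(\theta)$, together with the fact that the minimal polynomial $g$ is monic and irreducible of degree $n$ (since $\theta$ has degree $n$).

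For (i), I will first show that $\ker\varphi=(g)$. Because $K[x]$ is a principal ideal domain, $\ker\varphi$ is generated by a nonzero polynomial of least degree vanishing at $\theta$. That polynomial is the monic $g$, by definition of the minimal polynomial. Since $g$ is irreducible, $(g)$ is a maximal ideal, so $K[x]/(g)$ is a field. The first isomorphism theorem then identifies $K[x]/(g)$ with the image $K[\theta]=\varphi(K[x])$. So $K[\theta]$ is a subfield of $F$ that contains $K$ and $\theta$. It is also contained in every such subfield, so $K[\theta]=K(\theta)$, which proves (i).

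For (ii), I will show that $\{1,\theta,\ldots,\theta^{n-1}\}$ spans and is independent.
\begin{itemize}
\item Spanning: every element of $K(\theta)=K[\theta]$ has the form $h(\theta)$. Dividing, $h=qg+\rho$ with $\deg\rho<n$, so $h(\theta)=\rho(\theta)$, a $K$-combination of $1,\theta,\ldots,\theta^{n-1}$.
\item Independence: a nontrivial relation $\sum_{i<n}c_i\theta^i=0$ would give a nonzero polynomial of degree less than $n$ in $\ker\varphi=(g)$. This is impossible, since every nonzero element of $(g)$ has degree at least $n$.
\end{itemize}
Hence $[K(\theta):K]=n$.

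For (iii), take $\alpha\in K(\theta)$. Since $K(\theta)$ has dimension $n$ over $K$, the $n+1$ elements $1,\alpha,\ldots,\alpha^n$ are $K$-linearly dependent, which yields a nonzero polynomial over $K$ vanishing at $\alpha$; so $\alpha$ is algebraic over $K$. Let $m$ be its degree. Applying (ii) to $\alpha$ gives $[K(\alpha):K]=m$. Because $K\subseteq K(\alpha)\subseteq K(\theta)$, the tower law gives $n=[K(\theta):K(\alpha)]\cdot m$, so $m$ divides $n$.

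No step is a real obstacle; this is standard field theory. The only point that needs care is in (i): identifying $\ker\varphi$ with $(g)$ and using the irreducibility of $g$ to conclude that the image is a field, hence equal to $K(\theta)$ rather than merely the ring $K[\theta]$.
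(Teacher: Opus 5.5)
Your proof is correct and is essentially the standard argument from the cited source (Lidl--Niederreiter, Theorem 1.86), which the paper quotes without proof. That argument uses the evaluation homomorphism with kernel $(g)$ and the irreducibility of $g$ to identify $K[\theta]=K(\theta)\cong K[x]/(g)$, division by $g$ plus minimality of $\deg g$ for the basis, and finiteness of $K(\theta)/K$ together with the tower law for part (iii).
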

\begin{theorem}\label{vector_space}
Let $t_i$ and $\alpha_{i,j_i}$ be as defined in Construction 1 for each $i\in[l]$ and $j_i\in[ t_i]$.
For each $i\in[l]$, let $\mathcal{R}_{i}$ be a subset of $[l]\setminus\{i\}$.
The notations $\bar{u}_i$, $u_i$ and $u$ are defined as in \eqref{u}.	
Let $e$ be a positive integer such that $1\leq e< q^{p_i}-1$ and $\frac{q^{p_i}-1}{q-1}\nmid e$ for all $i\in[l]$.
Then there exists a subspace $S_{\mathcal{R}_{i}}$ of $\mathbb{F}_{q^{us}}$ over $\mathbb{F}_{q^{\bar{u}_i}}$ such that for every primitive element $\alpha_{i,j_i}$ in $\mathbb{F}_{q^{p_i}}$ with $1\leq j_i\leq t_i$, the following holds:
\begin{equation}\label{sr1}
	\dim_{\mathbb{F}_{q^{\bar{u}_i}}}S_{\mathcal{R}_{i}}=\frac{u}{\bar{u}_i},\quad S_{\mathcal{R}_{i}}+\alpha_{i,j_i}^eS_{\mathcal{R}_{i}}+\cdots+\alpha_{i,j_i}^{e(s-1)}S_{\mathcal{R}_{i}}=\mathbb{F}_{q^{us}}.
\end{equation}
\end{theorem}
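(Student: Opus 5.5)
The plan is to reduce the statement to a problem inside the small field $\mathbb{F}_{q^{p_is}}$, and then solve that problem uniformly in the evaluation point.

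\emph{Step 1: each $\alpha_{i,j_i}^e$ still generates $\mathbb{F}_{q^{p_i}}$.} Put $\gamma_{j_i}=\alpha_{i,j_i}^e$. Since $\alpha_{i,j_i}$ is primitive in $\mathbb{F}_{q^{p_i}}$, we have $\gamma_{j_i}\in\mathbb{F}_q$ iff $\gamma_{j_i}^{q-1}=1$. This holds iff $(q^{p_i}-1)\mid e(q-1)$, that is, iff $\frac{q^{p_i}-1}{q-1}\mid e$, which the hypothesis excludes. Because $p_i$ is prime, the only subfields of $\mathbb{F}_{q^{p_i}}$ are $\mathbb{F}_q$ and $\mathbb{F}_{q^{p_i}}$, so $\mathbb{F}_q(\gamma_{j_i})=\mathbb{F}_{q^{p_i}}$. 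The compositum argument behind \eqref{two}--\eqref{any} then gives $\mathbb{F}_{q^{u_i}}(\gamma_{j_i})=\mathbb{F}_{q^u}$, so each $\gamma_{j_i}$ is a defining element of $\mathbb{F}_{q^u}$ over $\mathbb{F}_{q^{u_i}}$. For a single $j_i$, Lemma~\ref{1} applied with $\alpha_i=\gamma_{j_i}$ already gives a suitable space. The real content of the theorem is that one space $S_{\mathcal{R}_i}$ must work for all $t_i$ points at once.

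\emph{Step 2: tensor reduction.} Every $p_j\equiv 1 \bmod s$ and $p_j>s$, so no $p_j$ divides $s$, and $\gcd(u_i,p_is)=1$. Hence $\mathbb{F}_{q^{us}}$ is the compositum of $\mathbb{F}_{q^{u_i}}$ and $\mathbb{F}_{q^{p_is}}$. Any $\mathbb{F}_q$-basis of $\mathbb{F}_{q^{p_is}}$ is therefore an $\mathbb{F}_{q^{u_i}}$-basis of $\mathbb{F}_{q^{us}}$. It then suffices to find an $\mathbb{F}_q$-subspace $S'\subseteq\mathbb{F}_{q^{p_is}}$ with $\dim_{\mathbb{F}_q}S'=p_i$ and
\[
S'+\gamma_{j_i}S'+\cdots+\gamma_{j_i}^{s-1}S'=\mathbb{F}_{q^{p_is}}\quad\text{for all } j_i\in[t_i].
\]
Given such an $S'$, set $S_{\mathcal{R}_i}=\mathbb{F}_{q^{u_i}}\cdot S'$. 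Its dimension over $\mathbb{F}_{q^{u_i}}$ is $p_i$, so over $\mathbb{F}_{q^{\bar u_i}}$ (using $\bar u_i\mid u_i$) it is $p_iu_i/\bar u_i=u/\bar u_i$. Since the powers $\gamma^m$ lie in $\mathbb{F}_{q^{p_is}}$, the spanning identity scales up by $\mathbb{F}_{q^{u_i}}$ to give \eqref{sr1}.

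\emph{Step 3: a uniform $S'$ (the main obstacle).} Write $\mathbb{F}_{q^{p_is}}=\bigoplus_{\mu=0}^{s-1}\beta^\mu\mathbb{F}_{q^{p_i}}$, with $\beta$ of degree $s$ over $\mathbb{F}_{q^{p_i}}$. Counting dimensions, a naive block-diagonal choice $S'=\bigoplus_\mu\beta^\mu X_\mu$ would need $\dim X_\mu\ge\lceil p_i/s\rceil$ for each $\mu$. That totals more than $p_i$ because $p_i\equiv1\bmod s$. So, as in Lemma~\ref{1}, $S'$ must contain one ``diagonal'' vector of the form $\sum_{\mu}\beta^\mu y$.

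The first attempt will mimic Lemma~\ref{1} with $\gamma$-independent generators. Take $X_\mu=X$ to be a fixed $\mathbb{F}_q$-subspace of $\mathbb{F}_{q^{p_i}}$ of dimension $(p_i-1)/s$, and add the vector $y\sum_\mu\beta^\mu$ with $y\notin X+\gamma X+\cdots+\gamma^{s-1}X$. The spanning condition then reduces to two facts, which I would check via Lemma~\ref{5}: $X+\gamma X+\cdots+\gamma^{s-1}X$ is a hyperplane of $\mathbb{F}_{q^{p_i}}$, and the extra vector supplies the missing direction in each $\beta^\mu$-component. For this I would take $X$ to be the span of $1,\theta^{s},\theta^{2s},\ldots$ for some fixed defining element $\theta$ and test it against every $\gamma\notin\mathbb{F}_q$.

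If such a uniform choice fails for small $q$, the fallback is a counting or union-bound argument. For each of the finitely many $\gamma_{j_i}$, bound the number of $p_i$-dimensional subspaces of $\mathbb{F}_{q^{p_is}}$ that are bad for that $\gamma$, and show that the total over all $j_i$ is smaller than the number of all such subspaces.
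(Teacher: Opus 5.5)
Your Steps 1 and 2 are correct, and the coprimality $\gcd(u_i,p_is)=1$ does make the lift $S_{\mathcal{R}_i}=\mathbb{F}_{q^{u_i}}S'$ work. The paper's proof, however, consists only of your Step 1 plus the remark you make in passing. It applies Lemma~\ref{1} with $\alpha_i=\alpha_{i,j_i}^e$ for the given point, then rereads the resulting $p_i$-dimensional $\mathbb{F}_{q^{u_i}}$-space as a $u/\bar u_i$-dimensional $\mathbb{F}_{q^{\bar u_i}}$-space. That space depends on $j_i$. This is all Theorem~\ref{repair_scheme} needs, because $S_{\mathcal{R}_{i^*}}$ is chosen after the failed node $(i^*,j^*)$ is fixed. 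You instead target the uniform reading. There Step 3 carries all the content and is not proved, and the candidate you propose cannot work.

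The failure is structural. Since $\gamma\in\mathbb{F}_{q^{p_i}}$, multiplication by $\gamma^m$ acts coordinatewise on $\bigoplus_\mu\beta^\mu\mathbb{F}_{q^{p_i}}$. Take $X_\mu=X$ for all $\mu$, and set $W=X+\gamma X+\cdots+\gamma^{s-1}X$ and $Z=\mathrm{span}_{\fq}(y,\gamma y,\ldots,\gamma^{s-1}y)$. Then every element of $\sum_m\gamma^mS'$ has the form $\sum_\mu\beta^\mu(w_\mu+z)$ with $w_\mu\in W$ and a single common $z\in Z$. Hence
\[
\dim_{\fq}\sum_{m=0}^{s-1}\gamma^mS'\le s\dim W+\dim\bigl((Z+W)/W\bigr)\le (s-1)(p_i-1)+p_i<sp_i
\]
for every $s\ge 2$ and every choice of $X$, $y$, $\gamma$. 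Modulo $\bigoplus_\mu\beta^\mu W$, the vectors $\gamma^m y\sum_\mu\beta^\mu$ all land on the diagonal, so they supply one missing direction instead of $s$.

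The hyperplane claim fails as well. For $s=2$, $p_i=5$, $\gamma=\theta^2$ you get $W=\mathrm{span}(1,\theta^2,\theta^4)$, which has dimension $3$. Lemma~\ref{1} avoids both problems only by shifting, $X_\mu=\alpha^\mu X_0$, which makes the diagonal vectors triangular modulo $\bigoplus_\mu\beta^\mu W_\mu$. That shift is tied to one $\alpha$. The union-bound fallback is unsupported: $t_i$ can be as large as $\phi(q^{p_i}-1)$, and you give no estimate of the proportion of bad subspaces for a fixed $\gamma$.

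A uniform choice that does work is the Frobenius-twisted diagonal $S'=\{\sum_{\mu=0}^{s-1}\beta^\mu x^{q^\mu}:x\in\mathbb{F}_{q^{p_i}}\}$, which has $\fq$-dimension $p_i$. Suppose $\sum_m\gamma^m\sum_\mu\beta^\mu x_m^{q^\mu}=0$. Then $\sum_m\gamma^m x_m^{q^\mu}=0$ for each $\mu$. Applying $\sigma^{-\mu}$, with $\sigma(x)=x^q$, gives $\sum_m\sigma^{-\mu}(\gamma)^m x_m=0$. This is a Vandermonde system in the conjugates $\sigma^{-\mu}(\gamma)$, $0\le\mu\le s-1$, which are distinct because $\gamma\notin\fq$ and $s<p_i$. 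So all $x_m=0$, and $\sum_m\gamma^mS'$ is direct of dimension $sp_i$ for every $\gamma\in\mathbb{F}_{q^{p_i}}\setminus\fq$ simultaneously. Combined with your Steps 1--2, this proves the uniform statement.
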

\begin{IEEEproof}
The simple algebraic extension field $\mathbb{F}_{q^{u_i}}(\alpha_{i,j_i})$ satisfies that $\alpha_{i,j_i}^e \in \mathbb{F}_{q^{u_i}}(\alpha{i,j_i})$ for any positive integer $e$. 
Since \eqref{any} establishes that $\mathbb{F}_{q^{u}} = \mathbb{F}_{q^{u_i}}(\alpha_{i,j_i})$ for each $i$ and $j_i$, it follows from Lemma~\ref{5} that the degree of $\alpha_{i,j_i}^e$ over $\mathbb{F}_{q^{u_i}}$ divides $p_i$. 
Since $p_i$ is prime, the degree of $\alpha_{i,j_i}^e$ must be either $1$ or $p_i$.
If the degree of $\alpha_{i,j_i}^e$ over $\mathbb{F}_{q^{u_i}}$ is $1$, then $\alpha_{i,j_i}^e\in \mathbb{F}_{q^{u_i}}$, implying that $\alpha_{i,j_i}^{e(q^{u_i}-1)}=1$. 
Since $\alpha_{i,j_i}$ is a primitive element of $\mathbb{F}_{q^{p_i}}$, it follows that $q^{p_i}-1$ divides $e(q^{u_i}-1)$. Given that the greatest common divisor of $q^{p_i}-1$ and $q^{u_i}-1$ is $q-1$, we conclude that $\frac{q^{p_i}-1}{q-1}| e$.
This contradicts the assumption that $\frac{q^{p_i}-1}{q-1} \nmid e$. Consequently, the degree of $\alpha_{i,j_i}^e$ over $\mathbb{F}_{q^{u_i}}$ is $p_i$, implying that $\alpha_{i,j_i}^e$ serves as a defining element of $\mathbb{F}_{q^{u}}$ over $\mathbb{F}_{q^{u_i}}$. According to Lemma \ref{1}, there exists a subspace $S_{\mathcal{R}_{i}}$ of $\mathbb{F}_{q^{us}}$ over $\mathbb{F}_{q^{u_i}}$ such that
\begin{equation}\label{sr}
	\dim_{\mathbb{F}_{q^{u_i}}}S_{\mathcal{R}_{i}}=p_i,\quad S_{\mathcal{R}_{i}}+\alpha_{i,j_i}^eS_{\mathcal{R}_{i}}+\cdots+\alpha_{i,j_i}^{e(s-1)}S_{\mathcal{R}_{i}}=\mathbb{F}_{q^{us}}.
\end{equation}
Observing that $\mathbb{F}_{q^{u_i}}$ can be regarded as an extension field of $\mathbb{F}_{q^{\bar{u}_i}}$, by adjoining the elements $\alpha_{\kappa,j_{\kappa}}$ for all $\kappa \in [l] \setminus (\{i\} \cup \mathcal{R}_i)$ and $j_{\kappa} \in [t_{i_{\kappa}}]$. The extension degree of $\mathbb{F}_{q^{u_i}}$ over $\mathbb{F}_{q^{\bar{u}_i}}$ is $\frac{u_i}{\bar{u}_i}$. Consequently, a subspace of $\mathbb{F}_{q^{us}}$ over $\mathbb{F}_{q^{u_i}}$ with dimension $p_i$ is also a subspace of $\mathbb{F}_{q^{us}}$ over $\mathbb{F}_{q^{\bar{u}_i}}$ with dimension $\frac{u}{\bar{u}_i}$. Therefore, the conclusion follows from \eqref{sr}.
\end{IEEEproof}


\begin{remark}
	For completeness and theoretical insight, the theorem above presents a result stronger than the $e=1$ case required for single-node repair. 
	Specifically, it establishes a method to reconstruct any symbol in $\mathbb{F}_{q^{us}}$ via an $\mathbb{F}_{q^{\bar{u}_i}}$-linear combination of certain helper symbols.
	The rationale for operating over the field $\mathbb{F}_{q^{\bar{u}_i}}$ rather than $\mathbb{F}_{q^{u_i}}$ is to minimize the size of each required symbol. While the total number of bits transmitted during repair process is identical in both cases (\textit{i.e.}, over $\mathbb{F}_{q^{\bar{u}_i}}$ and $\mathbb{F}_{q^{u_i}}$), the probability of successful symbol transmission over $\mathbb{F}_{q^{\bar{u}_i}}$ exceeds that over $\mathbb{F}_{q^{u_i}}$ when the bit error probability is given. 
\end{remark}

For $i^{*} \in [l]$ and $j^{*}\in [t_{i^{*}}]$, if the node $\alpha_{i^{*},j^{*}}$ fails, we propose a repair scheme to recover the symbol stored in the failed node by communicating with helper nodes selected from the nodes excluded from the exclusion set $A_{i^{*}}$. This requires that the first subscript of each helper node differs from $i^{*}$.
We use the subscript index $ (i,j_i)$ of the evaluation point to index the corresponding node $\alpha_{i,j_i}$ of the storage system encoded by $RS(n,k,A)_E$.

\begin{theorem}\label{repair_scheme}
Let $\mathcal{C}$ be an $RS(n,k,A)_{E}$ code constructed from Construction~1.
Let $d$ be a positive integer satisfying $k\leq d\leq n-t$. The code $\mathcal{C}$ achieves the cut-set bound \eqref{cut-set-bound} for the repair of any single failed node, using any set of $d$ helper nodes selected from the nodes excluded from the exclusion set of the failed node.
\end{theorem}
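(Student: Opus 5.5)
Fix the failed node $\alpha_{i^*,j^*}$ and a helper set $\mathcal{R}$ of $d$ nodes with $\mathcal{R}\cap A_{i^*}=\emptyset$. Put $s=d-k+1$, $r=n-k$, and let $\mathcal{R}_{i^*}\subseteq[l]\setminus\{i^*\}$ be the set of group indices that occur among the helpers. The plan follows the Tamo--Ye--Barg template.

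\textbf{Reduction and dual-codeword identity.} Puncture the code to the $d+1$ coordinates $\mathcal{R}\cup\{\alpha_{i^*,j^*}\}$. The result is a GRS code of length $d+1$ and dimension $k$, so its dual is a $GRS(d+1,s,\cdot,\mathbf{v}^*)$ code with multipliers $v^*_\alpha\neq 0$. For every $\gamma\in E$ and $m=0,\ldots,s-1$, the polynomial $g(x)=\gamma x^m$ has degree $<s$, so it gives a parity check
\[
v^*_{\alpha_{i^*,j^*}}\gamma\alpha_{i^*,j^*}^m f(\alpha_{i^*,j^*})=-\sum_{\alpha\in\mathcal{R}} v^*_\alpha\gamma\alpha^m f(\alpha).
\]
Let $K=\mathbb{F}_{q^{\bar u_{i^*}}}$. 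Apply the trace $Tr_{E/K}$ to this identity, with $\gamma$ running over a $K$-basis of $v^{*-1}_{\alpha_{i^*,j^*}}S_{\mathcal{R}_{i^*}}$. Here $S_{\mathcal{R}_{i^*}}$ is the subspace given by Theorem~\ref{vector_space} with $e=1$. The choice $e=1$ is admissible because $q^{p_i}-1>1$ and $\frac{q^{p_i}-1}{q-1}\nmid 1$ whenever $p_i\geq 2$.

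\textbf{Recovering the failed symbol.} The left-hand sides run over $Tr_{E/K}(\theta f(\alpha_{i^*,j^*}))$ with $\theta\in\sum_{m}\alpha_{i^*,j^*}^m S_{\mathcal{R}_{i^*}}$, and by \eqref{sr1} this sum is all of $E$. So we learn the traces of $f(\alpha_{i^*,j^*})$ against a spanning set of $E$ over $K$. A dual-basis expansion \eqref{dual_bases} then recovers $f(\alpha_{i^*,j^*})$ as a $K$-linear combination. Expressing $K$-symbols over $\mathbb{F}_q$ via the trace $K\to\mathbb{F}_q$ turns this into an $\mathbb{F}_q$-linear scheme in the sense of Definition~\ref{definition}. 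The total dimension count is $u s/\bar u_{i^*}=L/\bar u_{i^*}$ over $K$, which matches $s\cdot\dim_K S_{\mathcal{R}_{i^*}}=s\,u/\bar u_{i^*}$. Hence the sum in \eqref{sr1} is direct.

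\textbf{Bandwidth from a helper, and the main obstacle.} For a helper $\alpha=\alpha_{i,j}$ with $i\in\mathcal{R}_{i^*}$, we have $\alpha\in\mathbb{F}_{q^{p_i}}\subseteq K$ because $p_i\mid\bar u_{i^*}$. Therefore
\[
Tr_{E/K}(v^*_\alpha\gamma\alpha^m f(\alpha))=\alpha^m\,Tr_{E/K}(v^*_\alpha\gamma f(\alpha)),
\]
so for each $\gamma$ the helper only needs to send the single value $Tr_{E/K}(v^*_\alpha\gamma f(\alpha))$, independent of $m$. That is $\dim_K S_{\mathcal{R}_{i^*}}=u/\bar u_{i^*}$ symbols of $K$, or $\bar u_{i^*}\cdot u/\bar u_{i^*}=u=L/s$ symbols of $\mathbb{F}_q$. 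Summing over the $d$ helpers gives bandwidth $dL/s=dL/(d-k+1)$, which is \eqref{cut-set-bound} with equality.

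This step is the main obstacle. It needs every helper's evaluation point to lie in the field $K$, and it needs the repair subspace to work over that same $K$ at the failed point. The first is exactly why $\mathcal{R}_{i^*}$ is defined as the set of helper groups and why $A_{i^*}$ is excluded, since $\alpha_{i^*,j^*}\notin K$. The second is supplied by Theorem~\ref{vector_space} together with \eqref{any}. I would check both carefully, including the degenerate case where $\mathcal{R}_{i^*}$ is all of $[l]\setminus\{i^*\}$ so that $K=\mathbb{F}_{q^{u_{i^*}}}$. The remaining verification is the $\mathbb{F}_q$-linearity of all maps, which is routine.
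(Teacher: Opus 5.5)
Your proposal is correct and follows essentially the same argument as the paper. Both use parity checks of the form $x^m h(x)$ (your $\gamma x^m$ on the punctured code is the same thing) and take the trace onto $\mathbb{F}_{q^{\bar u_{i^*}}}$, which contains every helper point, so $\alpha^m$ factors out. Both then use the subspace $S_{\mathcal{R}_{i^*}}$ from Theorem~\ref{vector_space}, whose $\alpha_{i^*,j^*}$-multiples span $E$.

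Your puncturing step is a genuine improvement. It handles an arbitrary helper set directly. The paper's annihilator $h(x)$ vanishes on whole non-helper groups, so its ``without loss of generality'' $d=\sum_{i\in\mathcal{R}_{i^*}}t_i$ quietly assumes the helpers form complete groups. Your explicit absorption of the dual multipliers $v^*$ and your reduction of $K$-symbols to $\mathbb{F}_q$-queries also fill in details the paper leaves implicit.
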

\begin{IEEEproof}
Assume that the $(i^{*},j^{*})$-th node has been erased.
For a given value $ d $, let $ \mathcal{R}_{i^{*}} $ be a nonempty subset of $ [l] \setminus \{i^{*}\} $ such that $ |\mathcal{R}_{i^{*}}| \leq d \leq \sum_{i \in \mathcal{R}_{i^{*}}} t_i $, where $ |\mathcal{R}_{i^{*}}| $ denotes the cardinality of $ \mathcal{R}_{i^{*}} $.
The nodes indexed by $\mathcal{R}_{i^{*}}$ are denoted as $(i,j_i)$ with $i \in \mathcal{R}_{i^{*}}$ and $1 \leq j_i \leq t_i$, resulting in a total of $\sum_{i \in \mathcal{R}_{i^{*}}} t_i$ nodes.
A set of $d$ helper nodes is selected from the set $\left\{(i,j_i) : i \in \mathcal{R}_{i^{*}},\, 1 \leq j_i \leq t_i\right\}$ in such a way that the $z$-th helper node is chosen from the $1+(z-1)\bmod{|\mathcal{R}_{i^{*}}|}$-th group and is different from the previously chosen nodes.
This selection strategy ensures that the set of the first subscript index $i$ of the helper nodes is exactly the set $\mathcal{R}_{i^{*}}$. To clarify the presentation, without loss of generality, we may assume that the number $d$ of helper nodes satisfies $d = \sum_{i \in \mathcal{R}_{i^{*}}} t_i$.
According to Theorem~\ref{vector_space}, there exists a subspace $S_{\mathcal{R}_{i^{*}}}$ of $E$ over $\mathbb{F}_{q^{\bar{u}_{i^{*}}}}$ such that
\begin{align}
&\dim_{\mathbb{F}_{q^{\bar{u}_{i^{*}}}}}S_{\mathcal{R}_{i^{*}}}=\frac{u}{\bar{u}_{i^{*}}}, \label{s1}\\
&S_{\mathcal{R}_{i^{*}}}+\alpha_{i^{*},j^{*}}S_{\mathcal{R}_{i^{*}}}+\cdots+\alpha_{i^{*},j^{*}}^{s-1}S_{\mathcal{R}_{i^{*}}}=E, \label{s2}
\end{align}
where $\alpha S_{\mathcal{R}_{i^{*}}}=\{\alpha\beta:\beta\in S_{\mathcal{R}_{i^{*}}}\}$, and the operation $+$ is 
defined as $T_1+T_2=\{\beta_1+\beta_2:\beta_1\in T_1, \beta_2\in T_2\}$.
Let $h(x)$ be the annihilator polynomial of evaluation points $\alpha_{i,j_i}$ for all $i\notin\mathcal{R}_{i^{*}}$ and $1\leq j_i\leq t_i$ excluding the point $\alpha_{i^{*},j^{*}}$, that is
\[h(x)=\frac{\prod\limits_{i\notin\mathcal{R}_{i^{*}}}\prod\limits_{j_i\in[t_i]}(x-\alpha_{i,j_i})}{(x-\alpha_{i^{*},j^{*}})}.\]
Clearly, the degree of $h(x)$ is $n-d-1$. Let $g_w(x)=x^w h(x)$, where $w=0,1,\ldots,s-1$. It follows that the degree of $g_w(x)$ is less than or equal to $n-k-1$ for all $w=0,1,\ldots,s-1$ since
$s=d-k+1$. So for any $w\in \{0,1,\ldots,s-1\}$, $g_w(x)$ is a parity-check polynomial of the code $RS(n,k,A)_E$. Let
$(c_{1,1},\ldots,c_{1,t_1},c_{2,1},\ldots,c_{l,t_l})$ be a codeword of $RS(n,k,A)_E$. Recall that the dual code of
$RS(n,k,A)_E$ is the GRS code $GRS(n,n-k,A,\mathbf{v})_E$, where $\mathbf{v}\in (E^{*})^{n}$ are all nonzero coordinates. We ignore the nonzero coefficients in the repair processing which do not affect the repair process. Therefore, we have
\begin{equation}\label{check1}
\sum_{i\in[l]}\sum_{j_i\in t_i}\alpha_{i,j_i}^{w}h(\alpha_{i,j_i})c_{i,j_i}=0 \text{ for all } w=0,1,\ldots,s-1.
\end{equation}
Let $e_1,e_2,\ldots,e_{\frac{u}{\bar{u}_{i^{*}}}}$ be an arbitrary basis of the subspace $S_{\mathcal{R}_{i^{*}}}$ over the field $\mathbb{F}_{q^{\bar{u}_{i^{*}}}}$. From (\ref{check1}) we obtain the following $\frac{su}{\bar{u}_{i^{*}}}$ equations:
\begin{equation}\label{check2}
\sum_{i\in[l]}\sum_{j_i\in t_i} e_m\alpha_{i,j_i}^{w}h(\alpha_{i,j_i})c_{i,j_i}=0,
\end{equation}
for all $w=0,1,\ldots,s-1$ and $ m=1,2,\ldots,\frac{u}{\bar{u}_{i^{*}}}$.
Let $Tr_{i^{*}}=Tr_{E/\mathbb{F}_{q^{\bar{u}_{i^{*}}}}}$ be the trace map onto the subfield $\mathbb{F}_{q^{\bar{u}_{i^{*}}}}$. Performing $Tr_{i^{*}}$ to equation (\ref{check2}) yields
\begin{equation}\label{check3}
\sum_{i\in[l]}\sum_{j_i\in t_i}Tr_{i^{*}}(e_m\alpha_{i,j_i}^{w}h(\alpha_{i,j_i})c_{i,j_i})=0
\end{equation}
for all $w=0,1,\ldots,s-1$ and $m=1,2,\ldots,\frac{u}{\bar{u}_{i^{*}}}$.
Applying the annihilator property of $h(x)$ to \eqref{check3}, we obtain
\begin{align}
Tr_{i^{*}}(e_m\alpha_{i^{*},j^{*}}^{w}h(\alpha_{i^{*},j^{*}})c_{i^{*},j^{*}}) 
&=-\sum_{i\in\mathcal{R}_{i^{*}}}\sum_{j_i\in t_i}Tr_{i^{*}}(e_m\alpha_{i,j_i}^{w}h(\alpha_{i,j_i})c_{i,j_i}) \\
& =-\sum_{i\in\mathcal{R}_{i^{*}}}\sum_{j_i\in t_i} \alpha_{i,j_i}^{w} Tr_{i^{*}}(e_m h(\alpha_{i,j_i})c_{i,j_i})\label{12}
\end{align}
for all $w=0,1,\ldots,s-1$ and $m=1,2,\ldots,\frac{u}{\bar{u}_{i^{*}}}$, where the equality \eqref{12} follows from the fact that the trace map $Tr_{i^{*}}$ is $\mathbb{F}_{q^{\bar{u}_{i^{*}}}}$-linear and $\alpha_{i,j_i}^{w}\in \mathbb{F}_{q^{\bar{u}_{i^{*}}}}$ for all $i\in\mathcal{R}_{i^{*}}$, $j_i\in t_i$ and $w=0,1,\ldots,s-1$.
According to Theorem~\ref{vector_space}, the set $$\{e_m\alpha_{i^{*},j^{*}}^{w}: 0\leq w\leq s-1,\, 1\leq m\leq \frac{u}{\bar{u}_{i^{*}}}\}$$ forms a basis of $E$ over $\mathbb{F}_{q^{\bar{u}_{i^{*}}}}$ and so does the set $$\{e_m\alpha_{i^{*},j^{*}}^{w}h(\alpha_{i^{*},j^{*}}): 0\leq w\leq s-1,\, 1\leq m\leq \frac{u}{\bar{u}_{i^{*}}}\},$$ since $h(\alpha_{i^{*},j^{*}})\neq 0$. 
Therefore, similar to the Guruswami-Wootters' repair scheme \cite{guruswami2017}, 
the symbol $c_{i^{*},j^{*}}$ can be reconstructed from the parity-check equations presented in~\eqref{12} by utilizing the property of the trace dual basis presented in \eqref{dual_bases}.
To recover $c_{i^{*},j^{*}}$, the repair process downloads the $\frac{u}{\bar{u}_{i^{*}}}$ symbols of $\mathbb{F}_{q^{\bar{u}_{i^{*}}}}$ from each helper node indexed by $\mathcal{R}_{i^{*}}$, which is equivalent to downloading a total of $ud$ $\mathbb{F}_q$-symbols.
Therefore, the repair bandwidth of the repair scheme achieves the cut-set bound~\eqref{cut-set-bound}.
\end{IEEEproof}

In the following, we present an explicit $(12,8)$ RS code over $\mathbb{F}_{2^{2310}}$ and exhibit the detailed repair scheme for the single-node repair.
Notably, the recovery computation is executed on a standard computer (Intel Core i9-14900HX, 64 GB RAM, Windows 11) with negligible latency.
\begin{example}\label{example1}
	Let $s=2$, $p_1=3$, $p_2=5$, $p_3=7$ and $p_4=11$. 
	Let $E=\mathbb{F}_{2^{2310}}$ be an extension field of $\mathbb{F}_2$ with a generator $g$, where the minimal polynomial of $g$ over $\mathbb{F}_2$ is $x^{2310}+x^8+x^5+x^2+1$. Let $F=\mathbb{F}_{2^{1155}}$ and $K=\mathbb{F}_{2^{385}}$ be subfields of $E$. It can be verified that $g$ is a defining element of $E$ over $K$. Let $F_1=\mathbb{F}_{2^3}$, $F_2=\mathbb{F}_{2^5}$, $F_3=\mathbb{F}_{2^7}$ and $F_4=\mathbb{F}_{2^{11}}$ be subfields of $E$ with primitive elements $g_1$, $g_2$, $g_3$ and $g_4$, respectively. The minimal polynomials of $g_1$, $g_2$, $g_3$ and $g_4$ are $x^3+x^2+1$, $x^5+x^4+x^3+x+1$, $x^7+x^6+x^5+x^2+1$ and $x^{11}+x^9+x^7+x^4+x^3+x^2+1$, respectively. 
	Define four sets:
	\begin{align*}
		A_1&=\{g_1, g_1^2, g_1^3\}, A_2=\{g_2, g_2^2, g_2^3\},\\
	    A_3&=\{g_3, g_3^2, g_3^3\}, A_4=\{g_4, g_4^2, g_4^3\}.
	\end{align*}
	For each primitive element $g_{i}$ of $F_i$, where $1\leq i\leq 4$, the power $g_{i}^{k}$ is a primitive element of $F_i$ if and only if $\gcd(k, |F_i|-1)=1$. Verify that each element in $A_i$ is a primitive element of $F_i$ for all $i = 1, \ldots, 4$. Let $A$ be the union set of $A_1,A_2,A_3$ and $A_4$.
	 Define an RS code over $E$ as follows:
	\[RS(12,8,A)=\left\{(f(\alpha), \alpha\in A)\, : \, f(x)\in E[x],\, \deg(f)<8 \right\}.\]
	We present an explicit repair scheme of the code $RS(12,8,A)$ that achieves the cut-set bound for a single failed node. Without loss of generality, assume that the first node $g_1$ of $A_1$ has failed. The repair of node $g_1$ requires downloading data from the nodes in a set $\mathcal{R}_{g_1}$, which is set as the union of sets $A_2$, $A_3$, and $A_4$. Note that the dual code of $RS(12,8,A)$ is the code 
	$GRS(12,4,A,v)$, where $v\in (E^{*})^{12}$ can be determined by
	\begin{equation}\label{v}
		v_{i}=\prod_{\alpha_j\in A\setminus \{\alpha_i\}}(\alpha_i-\alpha_j)^{-1},
	\end{equation}
	where $\alpha_i$ and $\alpha_j$ denote different evaluation points of $A$ for $1\leq j\neq i\leq 12$. The following outlines the step-by-step procedure for repairing the failed node $g_1$.
	\begin{enumerate}
		\item[Step 1.] Determine a subspace $S_1$ of $E$ and a basis for $S_1$ over $K$, where $S_1$ corresponds to $S_{\mathcal{R}_{i^{*}}}$ in \eqref{s1}. Let $$S_1=\text{Span}_{K}(gg_1, g^2g_{1}^{2}, g_{1}^{2}+gg_{1}^{2}).$$ 
		Verify that $S_1$ is a three-dimensional subspace of $E$ over $K$, and that $gg_1, g^2g_{1}^{2}, g_{1}^{2}+gg_{1}^{2}$ are $K$-linearly independent. Hence, the set $\{gg_1, g^2g_{1}^{2}, g_{1}^{2}+gg_{1}^{2}\}$ forms a basis of $S_1$.
		Verify that $\text{Span}_{K}(gg_{1}, g^{2}g_{1}^{2}, g_{1}^{2}(1+g), gg_{1}^{2}, g^{2}g_{1}^{3}, g_{1}^{3}(1+g))$ forms a full vector space $E$, a result consistent with Theorem~\ref{vector_space}.
		Therefore, the set 
		\begin{equation}\label{basis}
			\{gg_{1}, g^{2}g_{1}^{2}, g_{1}^{2}(1+g), gg_{1}^{2}, g^{2}g_{1}^{3}, g_{1}^{3}(1+g)\}
		\end{equation}
	  forms a basis of $E$ over $K$.
	  \item[Step 2.] Download symbols from helper nodes $\mathcal{R}_{g_1}$. Let $h(x)=(x-g_{1}^{2})(x-g_{1}^{3})$. Since $h(g_1)$ and $v_1$ are nonzero elements, it follows from \eqref{basis} that $$B=\{gg_{1}h(g_{1})v_1, g^{2}g_{1}^{2}h(g_{1})v_1, g_{1}^{2}(1+g)h(g_{1})v_1, gg_{1}^{2}h(g_{1})v_1, g^{2}g_{1}^{3}h(g_{1})v_1, g_{1}^{3}(1+g)h(g_{1})v_1\}$$ also forms a basis of $E$ over $K$. For simplicity, we denote the elements of set $B$ as $b_1, b_2, b_3, b_4, b_5,$ and $b_6$. The replacement node of the failed node $g_1$ needs to download three $K$-symbols from each helper node $g_{i}^j$, where $i \in \{2, 3, 4\}$ and $j \in \{1, 2, 3\}$. The symbols downloaded from each helper node $g_i^j$ are listed in the following table.
		\begin{table}[h]
			\centering
			\caption{Download Three $K$-symbols from Each Helper Node $g_{i}^j$, where $i \in \{2, 3, 4\}$ and $j \in \{1, 2, 3\}$}
			\begin{tabular}{|l|l|}
				\hline
				The First $K$-symbol $\beta_{i,j}^{(1)} $  & $Tr_{E/K}(gg_1 h(g_{i}^j) f(g_{i}^j))$ \\ 
				The Second $K$-symbol $\beta_{i,j}^{(2)}$   & $Tr_{E/K}(g^2 g_1^2 h(g_{i}^j) f(g_{i}^j))$ \\ 
				The Third $K$-symbol  $\beta_{i,j}^{(3)}$  & $Tr_{E/K}((g_1^2 + g g_1^2) h(g_{i}^j) f(g_{i}^j))$ \\ \hline
			\end{tabular}
		\end{table}
		
		\item[Step 3.] Compute the trace dual basis of $B$ over $K$. Compute a trace matrix $A$, defined as  
		\[ T=\left[ \begin{matrix}
			T{{r}_{E/K}}({{b}_{1}}{{b}_{1}}) & T{{r}_{E/K}}({{b}_{1}}{{b}_{2}}) & \cdots  & T{{r}_{E/K}}({{b}_{1}}{{b}_{6}})  \\
			T{{r}_{E/K}}({{b}_{2}}{{b}_{1}}) & T{{r}_{E/K}}({{b}_{2}}{{b}_{2}}) & \cdots  & T{{r}_{E/K}}({{b}_{2}}{{b}_{6}})  \\
			\vdots  & \vdots  & \ddots  & \vdots   \\
			T{{r}_{E/K}}({{b}_{6}}{{b}_{1}}) & T{{r}_{E/K}}({{b}_{6}}{{b}_{2}}) & \cdots  & T{{r}_{E/K}}({{b}_{6}}{{b}_{6}})  \\
		\end{matrix} \right].
		\]	
		Denote $I_6$ be the identity matrix of order $6$ over $K$. Consider the system of linear equations 
		\begin{equation}\label{equ}
			TX=I_6,
		\end{equation}
		solve the unique solution $X=[x_{ij}]_{1\leq i,j\leq 6}$. Denote the dual basis of $B$ as $B^{\perp}=\{b_i^{\perp}| 1\leq i\leq 6\}$.
		Then 
		\begin{equation}\label{perp}
			b_i^{\perp}=\sum_{j=1}^{6}x_{ij}b_j.
		\end{equation}

		\item[Step 4.] Recover the symbol $f(g_{1})$.
		As shown in \eqref{12},  by computing six summations over the downloaded symbols $\beta_{i,j}^{(m)}$:  
		\[Tr_{E/K}(b_{m(w+1)}f(g_{1})))=-\sum_{i=2}^4\sum_{j=1}^3 g_{i}^{jw} \beta_{i,j}^{(m)},\]
		where $w=0,1$ and $m=1,2,3$,
		the replacement node obtains six $K$-symbols:
		\begin{equation}\label{recover}
			Tr_{E/K}(b_{1}f(g_{1}))), \cdots, Tr_{E/K}(b_{6}f(g_{1}))),
		\end{equation}
		where $\{b_{1}, b_{2},\ldots, b_{6}\} \in B$. 
		Therefore, by combining \eqref{perp} and \eqref{recover}, we obtain $$\tilde{f} = \sum_{i=1}^{6} Tr_{E/K}(b_i f(g_1)) b_i^{\perp}.$$  
		The replacement node successfully recovers the erased symbol $f(g_1)$ for  $f(g_1)= \tilde{f}$.
	\end{enumerate}
	
	The repair bandwidth of recovering one $E$-symbol is $27$ $K$-symbols, achieving the cut-set bound~\eqref{cut-set-bound}. More precisely, reconstructing $2310$ bits of data requires the transmission of 10395 bits in total, which is significantly lower than the repair bandwidth of the naive repair approach, which amounts to 18480 bits. 
	Suppose $f(x)=x^3+x^2+1$. We executed the Magma program \cite{magma97} to confirm that the result $\tilde{f}$ computed in Step 4 is consistent with the original symbol $f(g_1)$.
\end{example}

\begin{theorem}\label{sub_packetization}
	The sub-packetization level of the constructed codes in Construction~1 is $[E:\fq]=s\prod_{i=1}^{l}p_i$, where the $p_i$'s are the smallest distinct primes satisfying~\eqref{p}, $l$ is the number of exclusion sets of the associated PE repair scheme.
\end{theorem}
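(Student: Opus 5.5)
This statement is essentially a bookkeeping consequence of how Construction~1 fixes the symbol field, so the plan is to compute $[E:\fq]$ directly from the definitions. By Construction~1 the symbol field is $E=\mathbb{F}_{q^{us}}$, where $u=\prod_{j\in[l]}p_j$ as in \eqref{u} and $s=d-k+1$. Since $E$ is the unique extension of $\fq$ of degree $us$ inside a fixed algebraic closure, the degree is
\[
[E:\fq]=us=s\prod_{i=1}^{l}p_i .
\]
This value is the sub-packetization level $L$ of Definition~\ref{definition}, with base field $K=\fq$. The repair scheme of Theorem~\ref{repair_scheme} treats each $E$-symbol as $L$ sub-symbols over $\fq$; downloading $u/\bar u_{i^*}$ symbols of $\mathbb{F}_{q^{\bar u_{i^*}}}$ is counted as $u$ symbols of $\fq$. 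So no smaller base field is implicitly in use.

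The second step justifies the phrase ``the smallest distinct primes satisfying~\eqref{p}''. The only constraints on $p_1,\ldots,p_l$ in Construction~1 are the following:
\begin{itemize}
\item they are distinct primes;
\item $p_i\equiv 1 \bmod s$;
\item $t_i\le \phi(q^{p_i}-1)$, so that $t_i$ distinct primitive elements of $\mathbb{F}_{q^{p_i}}$ exist.
\end{itemize}
Dirichlet's theorem supplies infinitely many primes $\equiv 1 \bmod s$. Moreover, $\phi(q^{p}-1)$ grows with $p$, so the third condition holds for all sufficiently large admissible $p$. Hence one may take $p_1<\cdots<p_l$ to be the $l$ smallest primes meeting these conditions. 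Since the product $\prod_i p_i$ is monotone in each factor, this choice minimizes $us$ over all admissible instantiations of Construction~1 with the given $l$ and $s$.

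I would also note that the exclusion sets $A_i$ are in bijection with the primes $p_i$. Indeed, $A_i\subseteq\mathbb{F}_{q^{p_i}}$ and these fields intersect pairwise in $\fq$, so the $A_i$ are disjoint, and the number of exclusion sets equals $l$, the number of prime factors of $u$.

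The only mildly delicate point is the interaction between ``smallest primes $\equiv 1 \bmod s$'' and the requirement $t_i\le\phi(q^{p_i}-1)$. The smallest such primes might not accommodate a large $t_i$. I would handle this by reading the statement as referring to the smallest primes satisfying all the constraints of Construction~1, i.e.\ \eqref{p} together with the primitive-element count. Alternatively, one can remark that the sets $A_i$ may be assigned to primes in an order that respects the sizes $t_i$. Beyond this point, the proof is a direct computation of the field degree.
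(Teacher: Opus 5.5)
Your proposal is correct and matches the paper, which gives no separate proof because the result is immediate: Construction~1 fixes $E=\mathbb{F}_{q^{us}}$ with base field $\fq$, so $L=[E:\fq]=us=s\prod_{i=1}^{l}p_i$. Your extra discussion of choosing the $p_i$ smallest goes beyond what the paper justifies, including your caveat that the constraint $t_i\le\phi(q^{p_i}-1)$ turns this into an assignment problem, but it is not needed for the degree computation itself.
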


\begin{remark}\label{3}
	We compare our constructions with these existing ones that achieve or asymptotically approach the cut-set bound. 
	We provide the explicit typical $(12,8)$ RS codes derived from Construction 1 and those from the related constructions in \cite{Chowdhury2021ImprovedSF,ye2016explicit,tamo2017optimal,tamo2019repair} in Table~\ref{comparison}. This type of RS codes is employed in industry cloud storage \cite{lai2015atlas}. 
	As shown in Table~\ref{comparison}, our constructions achieve significantly lower sub-packetization levels, which enables the repair scheme to operate on regular commercial computers.
	
\end{remark}

For an $(n,k)$ scalar MDS code with a conventional linear repair scheme, if the repair bandwidth of the scheme achieves the cut-set bound for the repair of any single
node from any $d$ helper nodes for $k\leq d\leq n-1$, it has been proven in \cite[Theorem~8]{tamo2019repair} that
the sub-packetization level $L$ is at least
\begin{equation}\label{L}
	L\geq \prod_{i=1}^{k-1} p_i.
\end{equation}

\begin{remark}\label{2}
	According to Theorem~\ref{sub_packetization}, if the dimension $k$ is strictly greater than $l+3$, then the sub-packetization level in our construction is strictly smaller than the the lower bound presented in~\eqref{L} for conventional repair schemes. This indicates that, from the perspective of sub-packetization level for RS codes achieving the cut-set bound, our construction outperforms all the existing and potential constructions designed for conventional repair schemes. 
\end{remark}

\begin{example}
	For the $(12,8)$ RS code presented in Example \ref{example1}, the achieved sub-packetization level is $2310$.  In contrast, the lower bound of the sub-packetization level for any $(12,8)$ RS code under the conventional repair framework is $L\geq 510510$. 
	The exceedingly high lower bound demonstrates that implementing such codes under the conventional repair framework is impractical.
	In comparison, the achieved sub-packetization level of Construction 1 can be significantly lower than that of constructions for conventional repair schemes. 
	The achievable sub-packetization level enable the efficient implementation of the PE repair scheme for $(12,8)$ RS codes on a regular commercial computer.
\end{example}
\section{The Second Construction of RS codes with Optimal Repair Bandwidth}\label{construction2}
Construction 1 enables the same value of $t_i$ for every $i$, a choice that can raise the code rate given the relation $k \leq n - \max_{i\in[l]} t_i$ with $n=\sum_{i=1}^{l}t_i$. At the same time, the construction imposes the condition $p_i \equiv 1 \pmod{s}$ for all $i$, which in turn yields a high sub-packetization level, particularly when $s > 2$.
To eliminate this requirement and further reduce the sub-packetization level, we propose a new construction, albeit with stricter constraints on the values of $t_i$ for each $i\in [l]$.

\noindent \textbf{Construction 2.} Let $\mathbb{F}_q$ be a finite field. For a positive integer $l\geq 2$, let $r$ be a positive integer and let $p_1,p_2,\ldots,p_l$ be $l$ prime numbers such that for each $1\leq i\leq l$,
\begin{equation}\label{con_p}
	\phi(q^{p_i}-1)\geq r-p_i+1\geq 2,
\end{equation}
where $\phi$ is the Euler's totient function.
Let 
\begin{equation}\label{t}
	t_i=r-p_i+1,
\end{equation}
where $1\leq i\leq l$. For each prime extension field $\mathbb{F}_{q^{p_i}}$ of $\mathbb{F}_q$, let $\alpha_{i,j_i}$ be a primitive element of $\mathbb{F}_{q^{p_i}}$, where $1\leq j_i\leq t_i$. Note that for each extension field $\mathbb{F}_{q^{p_i}}$, it is possible to select $t_i$ distinct primitive elements of $\mathbb{F}_{q^{p_i}}$, since $t_i\leq \phi(q^{p_i}-1)$.
For each $1\leq i\leq l$, let the $i$-th exclusion set be
\begin{equation}
	A_i=\{\alpha_{i,j_i}\,:\,1\leq j_i\leq t_i\}.
\end{equation}
Define two notations as follows:
\begin{equation}\label{n-u}
	u_i = \prod_{j \in [l] \setminus \{i\}} p_j, \quad u = \prod_{j \in [l]} p_j.
\end{equation}
Let the evaluation point set be $A=\bigcup_{i=1}^{l}A_i$. Let $n=\sum_{i=1}^{l}t_i$ and $k=n-r$. Let $E$ be the algebraic extension field of $\mathbb{F}_q$ by adjoining the elements of $A$. 
Define the  code $RS(n,k,A)_{E}$ as
\[\left\{(f(\alpha_{i,j_i}),1 \leq i \leq l, 1 \leq j_i \leq t_i)\, : \, f(x)\in E[x],\, \deg(f)<k \right\}.\]

We use the evaluation point to index the node. We propose an optimal repair scheme of $RS(n,k,A)_{E}$ for repairing any single failed node by leveraging partially information of the nodes in groups different from the group that contains the failed node.
Notably, compared to Construction 1, Construction 2 reduces the sub-packetization level of RS codes, albeit at the cost of reduced flexibility of the parameters $t_i$ for all $i\in[l]$.
\begin{lemma}\label{lemma7}
	For each evaluation point $\alpha_{i,j_i}\in A$, where $i$ and $j_i$ are fixed integers such that $i\in[l] $ and $j_i\in [t_i]$, let $F_i$ be the algebraic extension field of $\mathbb{F}_q$ that containing $\alpha_{i^{'},j_{i^{'}}}\in A$ for all $i^{'}\in [l]\setminus\{i\}$ and $j_{i^{'}}\in [t_{i^{'}}]$. Then $\{1,\alpha_{i,j_i},\ldots,\alpha_{i,j_i}^{p_i-1}\}$ is a basis of $E$ over $F_i$.
\end{lemma}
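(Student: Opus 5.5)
We first identify the fields concretely. Each $\alpha_{i',j_{i'}}$ is a primitive element, hence a defining element, of $\mathbb{F}_{q^{p_{i'}}}$ over $\mathbb{F}_q$ (by \cite[Theorem 2.10]{lidl97}). Therefore $F_i=\mathbb{F}_q(\alpha_{i',j_{i'}} : i'\neq i)$ is the compositum of the fields $\mathbb{F}_{q^{p_{i'}}}$ with $i'\neq i$. Similarly, $E$ is the compositum of all the $\mathbb{F}_{q^{p_{i'}}}$, $i'\in[l]$.

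A compositum of finite subfields $\mathbb{F}_{q^{m_1}},\ldots,\mathbb{F}_{q^{m_r}}$ inside an algebraic closure is $\mathbb{F}_{q^{\mathrm{lcm}(m_1,\ldots,m_r)}}$. This follows from the subfield criterion \cite[Theorem 2.6]{lidl97}: the smallest $\mathbb{F}_{q^m}$ containing all of them is the one whose $m$ is divisible by every $m_j$. We need the primes $p_1,\ldots,p_l$ to be distinct. Construction 2 does not state this explicitly, but it is implicit, since the $A_i$ are meant to be disjoint and $n=\sum t_i$. Granting distinctness, we get $F_i=\mathbb{F}_{q^{u_i}}$ and $E=\mathbb{F}_{q^{u}}$ with $u_i,u$ as in \eqref{n-u}. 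Hence $[E:F_i]=u/u_i=p_i$.

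Next, we show $F_i(\alpha_{i,j_i})=E$. This is exactly the argument already given for \eqref{two} and \eqref{any} in Construction 1:
\begin{itemize}
\item By Lemma~\ref{5}(iii), the degree of $\alpha_{i,j_i}$ over $F_i$ divides $[E:F_i]=p_i$, so it is $1$ or $p_i$.
\item If it were $1$, then $\alpha_{i,j_i}\in F_i\cap\mathbb{F}_{q^{p_i}}=\mathbb{F}_{q^{\gcd(u_i,p_i)}}=\mathbb{F}_q$, because $p_i\nmid u_i$.
\item But $\alpha_{i,j_i}$ generates $\mathbb{F}_{q^{p_i}}\neq\mathbb{F}_q$ since $p_i\geq 2$, a contradiction.
\end{itemize}
So $\alpha_{i,j_i}$ has degree $p_i$ over $F_i$, and $F_i(\alpha_{i,j_i})$ is a subfield of $E$ of degree $p_i$ over $F_i$, hence equal to $E$.

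Finally, Lemma~\ref{5}(ii) applied with $K=F_i$ and $\theta=\alpha_{i,j_i}$ gives that $\{1,\alpha_{i,j_i},\ldots,\alpha_{i,j_i}^{p_i-1}\}$ is a basis of $F_i(\alpha_{i,j_i})=E$ over $F_i$.

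The only delicate point is the identification of the composita as $\mathbb{F}_{q^{u_i}}$ and $\mathbb{F}_{q^u}$, which needs the pairwise distinctness of the $p_i$. If that were not intended, the conclusion would fail when $p_i=p_{i'}$ for some $i'\neq i$: then $\alpha_{i,j_i}\in F_i$, so $[E:F_i]=1$ and $1,\alpha_{i,j_i},\ldots,\alpha_{i,j_i}^{p_i-1}$ cannot all be independent. So I will state distinctness as an assumption of Construction 2.
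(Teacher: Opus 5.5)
Your proof is correct and follows essentially the same route as the paper. Both arguments identify $F_i=\mathbb{F}_{q^{u_i}}$ via the subfield criterion and the distinctness of the $p_i$, which the paper's proof also invokes explicitly. Both then show that the degree of $\alpha_{i,j_i}$ over $F_i$ divides $p_i$ and is not $1$, and finish with Lemma~\ref{5}(ii). You are slightly more explicit than the paper in two places: you note that $E=\mathbb{F}_{q^u}$, which is what lets you conclude $F_i(\alpha_{i,j_i})=E$, and you justify $\alpha_{i,j_i}\notin F_i$ through $F_i\cap\mathbb{F}_{q^{p_i}}=\mathbb{F}_q$.
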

\begin{IEEEproof}
	It is a straightforward consequence of the properties of finite fields. We include the proof here for completeness. We first demonstrate that $F_i$ is a subfield of $E$ with cardinality $q^{u_i}$, where $u_i$ is defined as in \eqref{n-u}. Assume that $F_i$ has cardinality $q^{n}$. On one hand, since $p_{i^{'}}$ divides $u_i$ for each $i^{'}\in [l]\setminus\{i\}$, it follows from the subfield criterion theorem that $\alpha_{i^{'},j_{i^{'}}}$ is an element in $\mathbb{F}_{q^{u_i}}$ for all $i^{'}\in [l]\setminus\{i\}$ and $j_{i^{'}}\in [t_{i^{'}}]$. It follows that $\mathbb{F}_{q^{u_i}}$ is an extension field of $F_i$. By the subfield criteria theorem, we have $n$ divides $u_i$. 
	On the other hand, since $\fq(\alpha_{i,j_i})=\mathbb{F}_{q^{p_i}}$ is a subfield of $F_i$, it follows that $q^{p_{i^{'}}}$ divides $q^{n}$ for all $i^{'}\in [l]\setminus\{i\}$, which implies that $p_{i^{'}}$ divides $n$ for all such $i^{'}$. Given that $p_i$ are distinct prime numbers for all $i\in[l]$, and $u_i = \prod_{j \in [l] \setminus \{i\}} p_j$, it follows that $u_i$ divides~$n$. Consequently, we obtain that $n$ equals to $u_i$ and $F_i=\mathbb{F}_{q^{u_i}}$.
	
   For each $i\in [l]$ and $j_{i}\in [t_{i}]$, consider the simple algebraic extension field $\mathbb{F}_{q^{u_i}}(\alpha_{i,j_i})$ of $\mathbb{F}_{q^{u_i}}$. The cardinality of $\mathbb{F}_{q^{u_i}}(\alpha_{i,j_i})$ is $q^{u_i[\mathbb{F}_{q^{u_i}}(\alpha_{i,j_i}):\mathbb{F}_{q^{u_i}}]}$, which divides $q^u$ since $\mathbb{F}_{q^{u_i}}(\alpha_{i,j_i})$ is a subfield of $E$. Therefore, $[\mathbb{F}_{q^{u_i}}(\alpha_{i,j_i}):\mathbb{F}_{q^{u_i}}]$ divides $p_i$. Since $p_i$ is a prime, it follows that $[\mathbb{F}_{q^{u_i}}(\alpha_{i,j_i}):\mathbb{F}_{q^{u_i}}]$ is $1$ or $p_i$. Since $\alpha_{i,j_i}\notin \mathbb{F}_{q^{u_i}}$, it follows that $[\mathbb{F}_{q^{u_i}}(\alpha_{i,j_i}):\mathbb{F}_{q^{u_i}}]$ is $p_i$. According to Lemma \ref{5} (ii), we obtain $\{1,\alpha_{i,j_i},\ldots,\alpha_{i,j_i}^{p_i-1}\}$ is a basis of $E$ over $F_i$.
\end{IEEEproof}

\begin{theorem}\label{repair_scheme2}
	The code $RS(n,k,A)_{E}$, presented in the Construction 2, achieves the cut-set bound \eqref{cut-set-bound} for the repair of any single failed node, by downloading symbols from all surviving nodes that are excluded from the exclusion set of the failed node.
\end{theorem}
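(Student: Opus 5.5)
The plan is to adapt the trace-based argument of Theorem~\ref{repair_scheme}. The key simplification is that in Construction~2 the parameters are tuned so that the number of parity-check polynomials needed, namely $s=d-k+1$, equals the degree $p_{i^{*}}$ of $E$ over the subfield $F_{i^{*}}$ of Lemma~\ref{lemma7}. As a result, no auxiliary subspace $S_{\mathcal{R}_{i^{*}}}$ of the kind given by Theorem~\ref{vector_space} is needed. The monomial basis $\{1,\alpha_{i^{*},j^{*}},\ldots,\alpha_{i^{*},j^{*}}^{p_{i^{*}}-1}\}$ already plays that role.

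\textbf{Parameters.} Fix the failed node $\alpha_{i^{*},j^{*}}$ and take as helpers all $d=n-t_{i^{*}}$ nodes outside $A_{i^{*}}$. Since $k=n-r$ and $t_{i^{*}}=r-p_{i^{*}}+1$, we get $s=d-k+1=r-t_{i^{*}}+1=p_{i^{*}}$. From the proof of Lemma~\ref{lemma7}, $E=\mathbb{F}_{q^{u}}$, so $L=u$. Also $F_{i^{*}}=\mathbb{F}_{q^{u_{i^{*}}}}$ contains every helper evaluation point, and $[E:F_{i^{*}}]=p_{i^{*}}$.

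\textbf{Parity-check polynomials.} Let
\[h(x)=\prod_{\alpha\in A_{i^{*}}\setminus\{\alpha_{i^{*},j^{*}}\}}(x-\alpha),\]
which has degree $t_{i^{*}}-1$. Set $g_w(x)=x^{w}h(x)$ for $w=0,1,\ldots,p_{i^{*}}-1$. Each $g_w$ has degree at most $t_{i^{*}}+p_{i^{*}}-2=r-1<n-k$, so it is a parity-check polynomial. With dual weights $v_\alpha$ as in \eqref{v}, we obtain
\[v_{\alpha^{*}}\alpha^{*w}h(\alpha^{*})c_{\alpha^{*}}=-\sum_{\alpha\in A\setminus A_{i^{*}}}\alpha^{w}v_\alpha h(\alpha)c_\alpha,\]
where $\alpha^{*}=\alpha_{i^{*},j^{*}}$. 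The terms indexed by the other points of $A_{i^{*}}$ vanish because $h$ annihilates them.

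\textbf{Applying the trace.} Apply $Tr_{E/F_{i^{*}}}$ to this identity. Each helper point $\alpha$ lies in $F_{i^{*}}$, so $\alpha^{w}$ can be pulled out of the trace. Consequently every equation is an $F_{i^{*}}$-linear combination of the single quantity $Tr_{E/F_{i^{*}}}(v_\alpha h(\alpha)c_\alpha)\in F_{i^{*}}$ from each helper. This means each helper transmits exactly one $F_{i^{*}}$-symbol, i.e.\ $u_{i^{*}}$ symbols of $\mathbb{F}_q$.

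\textbf{Recovery.} The replacement node thereby learns $Tr_{E/F_{i^{*}}}(b_w c_{\alpha^{*}})$ for $b_w=v_{\alpha^{*}}h(\alpha^{*})\alpha^{*w}$, $w=0,\ldots,p_{i^{*}}-1$. By Lemma~\ref{lemma7}, $\{\alpha^{*w}\}$ is a basis of $E$ over $F_{i^{*}}$. Multiplying by the nonzero scalar $v_{\alpha^{*}}h(\alpha^{*})$ preserves this, so $\{b_w\}$ is also a basis. Hence $c_{\alpha^{*}}$ is recovered by \eqref{dual_bases} using the trace dual basis over $F_{i^{*}}$.

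\textbf{Bandwidth.} The total download is
\[(n-t_{i^{*}})u_{i^{*}}=\frac{(n-t_{i^{*}})L}{p_{i^{*}}}=\frac{dL}{d-k+1},\]
which is exactly the cut-set bound \eqref{cut-set-bound}.

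The only genuinely delicate step is the degree bookkeeping: confirming both that $s=p_{i^{*}}$ and that $\deg g_w\le n-k-1$. Both follow directly from the defining relation \eqref{t} $t_i=r-p_i+1$, which is precisely why Construction~2 imposes it. The remaining steps are routine and follow the pattern of Theorem~\ref{repair_scheme}.
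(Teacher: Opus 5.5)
Your proposal is correct and follows the paper's proof essentially step for step. It uses the same annihilator $h$ on $A_{i^{*}}\setminus\{\alpha_{i^{*},j^{*}}\}$, the $p_{i^{*}}$ shifted parity-check polynomials $x^{w}h(x)$, the trace to $\mathbb{F}_{q^{u_{i^{*}}}}$ with Lemma~\ref{lemma7} supplying the basis, and one subfield symbol per helper; your bandwidth check via $d-k+1=r-t_{i^{*}}+1=p_{i^{*}}$ is the right one (the paper's printed denominator $n-t_{i^{*}}-k-1$ is a sign typo), and keeping the dual weights $v_\alpha$ explicit is a harmless refinement.
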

\begin{IEEEproof}
	Assume that the $(i^{*},j^{*})$-th node has been erased.
	Denote the set of the helper nodes responsible for repairing the $(i^{*},j^{*})$-th node as 
	$$\mathcal{R}=\left\{(i,j_i) : i \in [l] \setminus \{i^{*}\},\, 1 \leq j_i \leq t_i\right\}.$$ 
	Let $h(x)$ be the annihilator polynomial of evaluation points $A_{i^{*}}$ apart from the point $\alpha_{i^{*},j^{*}}$, that is
	\[h(x)=\prod\limits_{j_{i^{*}}\in[t_{i^{*}}]\setminus \{j^{*}\}}(x-\alpha_{i^{*},j_{i^{*}}}).\]
	Clearly, the degree of $h(x)$ is $t_{i^{*}}-1$. Let $g_w(x)=x^w h(x)$, where $w=0,1,\ldots,p_{i^{*}}-1$. It follows that the degree of $g_w(x)$ is less than or equal to $p_{i^{*}}+t_{i^{*}}-2$, which equals to $r-1$ according to the definition of $t_i$ in \eqref{t}. So for any $w\in \{0,1,\ldots,p_{i^{*}}-1\}$, $g_w(x)$ is a parity-check polynomial of the code $RS(n,k,A)_E$. Let
	$(c_{1,1},\ldots,c_{1,t_1},c_{2,1},\ldots,c_{l,t_l})$ be a codeword of $RS(n,k,A)_E$. Recall that the dual code of
	$RS(n,k,A)_E$ is the GRS code $GRS(n,n-k,A,\mathbf{v})_E$, where the coordinates of $\mathbf{v}\in (E^{*})^{n}$ are all nonzero. We ignore the nonzero coefficients in the repair processing which do not affect the repair process. Therefore, we have
	\begin{equation}\label{check5}
		\sum_{i\in[l]}\sum_{j_i\in t_i}\alpha_{i,j_i}^{w}h(\alpha_{i,j_i})c_{i,j_i}=0 \text{ for all } w=0,1,\ldots,p_{i^{*}}-1.
	\end{equation}
	Let $Tr_{i^{*}}=Tr_{E/\mathbb{F}_{q^{u_{i^{*}}}}}$ be the trace map onto the subfield $\mathbb{F}_{q^{u_{i^{*}}}}$. Performing the trace map to (\ref{check5}) yields
	\begin{equation}\label{check6}
		\sum_{i\in[l]}\sum_{j_i\in t_i}Tr_{i^{*}}(\alpha_{i,j_i}^{w}h(\alpha_{i,j_i})c_{i,j_i})=0
	\end{equation}
	for all $w=0,1,\ldots,p_{i^{*}}-1$.
	Applying the annihilator property of $h(x)$ to equation \eqref{check6}, we obtain
	\begin{align}
		Tr_{i^{*}}(\alpha_{i^{*},j^{*}}^{w}h(\alpha_{i^{*},j^{*}})c_{i^{*},j^{*}}) 
		&=-\sum_{i\in[l] \setminus \{i^{*}\}}\sum_{j_i\in t_i}Tr_{i^{*}}(\alpha_{i,j_i}^{w}h(\alpha_{i,j_i})c_{i,j_i}) \\
		& =-\sum_{i\in[l] \setminus \{i^{*}\}}\sum_{j_i\in t_i} \alpha_{i,j_i}^{w} Tr_{i^{*}}( h(\alpha_{i,j_i})c_{i,j_i})\label{check7}
	\end{align}
	for all $w=0,1,\ldots,p_{i^{*}}-1$, where the equality \eqref{check7} follows from the fact that the trace map $Tr_{i^{*}}$ is $\mathbb{F}_{q^{u_{i^{*}}}}$-linear and $\alpha_{i,j_i}^{w}\in \mathbb{F}_{q^{u_{i^{*}}}}$ for all $i\in[l] \setminus \{i^{*}\}$, $j_i\in t_i$ and $w=0,1,\ldots,p_{i^{*}}-1$.
	Since $\alpha_{i^{*},j^{*}}$ is a primitive element of $\mathbb{F}_{q^{p_i}}$, it follows from Lemma~\ref{lemma7} that 	
	the set $$\{\alpha_{i^{*},j^{*}}^{w}: 0\leq w\leq p_{i^{*}}-1\}$$ forms a basis of $E$ over $\mathbb{F}_{q^{u_{i^{*}}}}$ and so does the set $$\{\alpha_{i^{*},j^{*}}^{w}h(\alpha_{i^{*},j^{*}}): 0\leq w\leq p_{i^{*}}-1\},$$ since $h(\alpha_{i^{*},j^{*}})\neq 0$. Similar to the proof of Theorem \ref{repair_scheme}, the symbol $c_{i^{*},j^{*}}$ can be reconstructed from the parity-check equations presented in~\eqref{check7} by utilizing the property of the trace dual basis presented in \eqref{dual_bases}.

	To recover $c_{i^{*},j^{*}}$, the repair process downloads one single symbol of $\mathbb{F}_{q^{u_{i^{*}}}}$ from each helper node indexed by $\mathcal{R}_{i^{*}}$. Therefore, the repair bandwidth is $(n-t_{i^{*}})u_{i^{*}}$ symbols of $\mathbb{F}_q$, which meets the cut-set bound $\frac{(n-t_{i^{*}})u}{n-t_{i^{*}}-k-1}$, since $n-t_{i^{*}}-k-1=p_{i^{*}}$ and $u= u_{i^{*}}p_{i^{*}}$.
\end{IEEEproof}

In the following, we present an explicit $(17,9)$ RS code over $\mathbb{F}_{2^{60}}$ and exhibit the detailed repair scheme for the single-node repair.
\begin{example}\label{example2}
	Let $q=4$, $r=8$, $l=3$, $p_1=2$, $p_2=3$ and $p_3=5$. Let $g_1,g_2$ and $g_3$ be primitive elements of $\mathbb{F}_{4^2},\mathbb{F}_{4^3}$ and $\mathbb{F}_{4^5}$ whose primitive polynomials over $\mathbb{F}_2$ are $x^4+ x + 1, x^6 + x^4 + x^3+x+ 1$ and $x^{10} + x^6 + x^5 + x^3+ x^2 + x + 1$, respectively.
	Define three exclusion sets of evaluation points as follows:
	\begin{align*}
		A_1&=\{g_1,g_1^2,g_1^4,g_1^7,g_1^8,g_1^{11},g_1^{13}\},\\
		A_2&=\{g_2,g_2^2,g_2^4,g_2^5,g_2^8,g_2^{10}\},\\
		A_3&=\{g_3,g_3^2,g_3^4,g_3^5\}.
	\end{align*}
	Note that all the elements in these three sets are primitive elements over the corresponding finite fields. Denote $t_1=7$, $t_2=6$, $t_3=4$. For simplicity, we use the notation $g_{i,j_i}$ to denote the $j_i$-th evaluation point of the group $A_i$, where $1\leq i\leq 3$ and $1\leq j_i\leq t_i$.
	Let $A$ be the union set of $A_i$ for $1\leq i\leq 3$. Let $E=\mathbb{F}_{4^{30}}$ be the extension field of $\mathbb{F}_{4}$ containing $\mathbb{F}_{4^2},\mathbb{F}_{4^3}$ and $\mathbb{F}_{4^5}$ as subfields.
	Define a $(17,9)$ RS code over $E$ as follows:
	\begin{equation*}
		RS(17,9,A)=\left\{(f(g_{i,j_i}),g_{i,j_i} \in A \text{ for }1 \leq i \leq 3, 1 \leq j_i \leq t_i)\, : \, f(x)\in E[x],\, \deg(f)<9 \right\}.
	\end{equation*}
	Suppose that $f(g_{1,1})=f(g_1)$ has been erased. Let $K$ be a subfield of $E$ with cardinality $4^{15}$. We present an optimal repair scheme to exactly recover the symbol $ f(g_1)$ by downloading a single $K$- symbol from the helper nodes in $ A_2, A_3$.
	\begin{itemize}
		\item[ Step 1.]  Determine the dual code of the RS code. The dual code of $RS(17,9,A)$ is the code $GRS(17,8,A,v)$, where $v=(v_{1,1},\ldots,v_{1,7},\ldots,v_{3,4})\in (E^*)^{17}$ can be determined similar to \eqref{v}.
		\item[Step 2.] Download a single $K$-symbol from each helper node $g_{i,j_i}$ for $2\leq i\leq 3$ and $1\leq j_i\leq t_i$. Let $h(x)=(x-g_1^2)(x-g_1^4)(x-g_1^7)(x-g_1^8)(x-g_1^{11})(x-g_1^{13})$. The download symbol is $$Tr_{E/K}(h(g_{i,j_i})c(g_{i,j_i})v_{i,j_i}).$$
		\item[Step 3.]  Compute the trace dual basis of $\{h(g_{1})v_{1,1}, g_1h(g_{1})v_{1,1}\}$. According to Lemma \ref{lemma7}, the set $\{1,g_1\}$ is a basis of $E$ over $K$. Since 
		$h(g_{1})$ and $v_1$ are nonzero element in $E$, it follows that $\{h(g_{1})v_{1,1}, g_1h(g_{1})v_{1,1}\}$ is also a basis of $E$ over $K$. Similar to Step 3 of Example~\ref{example1}, the dual basis $\{h(g_{1})v_{1,1}, g_1h(g_{1})v_{1,1}\}$, which we denote as $B^{\perp}=\{b_1^{\perp},b_2^{\perp}\}$, can be determined after solving a system of linear equations similar to \eqref{equ}.
		\item[Step 4.] Recover $f(g_{1})$. According to \eqref{check7}, the replacement node perform summations of the downloaded symbols:
		$$Tr_{E/K}(g_1^{w} h(g_{1})v_{1,1}f(g_{1}))=-\sum_{i\in\{2,3\}}\sum_{j_i\in t_i} \alpha_{i,j_i}^{w} Tr_{E/K}( h(\alpha_{i,j_i})v_{i,j_i}f_{i,j_i}), \text{ for } w=0,1.$$
		Since $\{h(g_{1})v_{1,1}, g_1h(g_{1})v_{1,1}\}$ forms a basis of $E$ over $F$, the repaired symbol $\tilde{f}$ is equal to 
		\begin{equation}\label{c}
			Tr_{E/K}(h(g_{1})v_{1,1}f(g_{1}))b_1^{\perp}+Tr_{E/K}(g_1h(g_{1})v_{1,1}f(g_{1}))b_2^{\perp}.
		\end{equation}
	\end{itemize}

	The repair bandwidth required to recover one $E$-symbol varies across different nodes, with each node achieving the cut-set bound~\eqref{cut-set-bound}.
	Suppose that the encoder polynomial is $f(x)=x^3+x^2+x+1$. We implemented the repair procedure and verified via Magma \cite{magma97} that the repaired symbol given by \eqref{c} coincides exactly with the original symbol $f(g_{1})$.
\end{example}

\begin{remark}\label{comparison2}
Construction 2 lowers the sub-packetization level relative to Construction 1, while requires the stricter parameter condition $t_i=n-k-p_i+1$ for all $1\leq i\leq l$, which restricts the achievable RS code parameters. 
Notably, in Construction~2, 
the repair locality for each node varies across different groups. This leads to the nodes from different exclusion sets exhibit different repair bandwidths.
For the explicit $(17,9)$ RS codes presented in Example~\ref{example2}, nodes in $A_1$ have the repair locality $d_1=10$, those in $A_2$ have $d_2=11$, and those in $A_3$ have $d_3=13$, giving an average
$$d=\frac{|A_1|d_1+|A_2|d_2+|A_3|d_3}{|A_1|+|A_2|+|A_3|}>11.$$ For consistency, Table~\ref{comparison_table} lists explicit codes constructed via \cite{tamo2017optimal,tamo2019repair} and Construction~1, all of which have $d = 11$. To meet the requirement $d = 11$, the flexibility of the $(17,9)$ RS code constructed via Construction~1 is set as $t=n-d=6$.
As shown in Table~\ref{comparison_table}, even when Construction 2 maintains a lower average normalized repair bandwidth per node,
it achieves a significantly lower sub-packetization level than both Construction 1 and the existing counterpart constructions.

\end{remark}
	\begin{table}[ht]
	\centering
	\caption{Comparison of Explicit $(17,9)$ RS codes that Attain the Cut-Set Bound over Base Field $\mathbb{F}_4$}
	\resizebox{\textwidth}{!}{
		\begin{tabular}{ccccc}
			\toprule[0.8pt]
			Constructions  & Sub-Packetization Level & Repair Bandwidth (Bits)$^{\mathrm{a}}$  & Repair Locality & Normalized Repair Bandwidth$^{\mathrm{b}}$ \\
			\midrule
			\cite{tamo2017optimal,tamo2019repair}  &  $\approx 2.75\times 10^{30}$ & $\approx 2.02\times 10^{31} $ & 11 &   $11/3$ \\
			Construction 1 (Sec. IV)  &  $ 5187 $ & $ 38038$  &  11 & $11/3$ \\
			 & & Nodes in $A_1$: 300 & 10 & 5\\
			\multirow{-1}{*}{Construction 2 (Sec. V)}   & \multirow{-1}{*}{$30$} &Nodes in $A_2$: 220 & 11 & \multirow{-1}{*}{$11/3$}\\
			 & &  Nodes in $A_3$: 156 & 13 & 2.6\\
			\bottomrule[0.8pt]
		\end{tabular}
	}
	\label{comparison_table}
	
	\vspace{0.1cm}
	\footnotesize
	\raggedright
	\textbf{Note:} $\mathrm{a}$. The repair bandwidth column measures the total number of bits required to repair a failed node; $\mathrm{b}$. The normalized repair bandwidth is defined as the number of bits required per repaired bit. 
\end{table}

\begin{theorem}\label{sp2}
	The sub-packetization level of the constructed codes in Construction~2 is $[E:\fq]=\prod_{i=1}^{l}p_i$, where the $p_i$'s are the smallest distinct primes satisfying~\eqref{con_p}, $l$ is the number of exclusion sets of evaluation points.
\end{theorem}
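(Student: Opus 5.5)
The plan is to identify $E$ explicitly as a finite field and compute its degree over $\fq$ by the subfield criterion, as was done in the proof of Lemma~\ref{lemma7}. By definition, $E$ is the smallest extension of $\fq$ containing all evaluation points in $A=\bigcup_{i=1}^{l}A_i$. Since every $\alpha_{i,j_i}$ is a primitive element of $\mathbb{F}_{q^{p_i}}$, \cite[Theorem 2.10]{lidl97} gives $\fq(\alpha_{i,j_i})=\mathbb{F}_{q^{p_i}}$. Hence
\[
E=\fq(A)=\fq\bigl(\mathbb{F}_{q^{p_1}},\ldots,\mathbb{F}_{q^{p_l}}\bigr),
\]
the compositum of the fields $\mathbb{F}_{q^{p_i}}$, all taken inside a fixed algebraic closure.

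Next I will show $[E:\fq]=u=\prod_{i=1}^{l}p_i$ by proving divisibility in both directions. Write $|E|=q^{m}$.

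For $u\mid m$: each $\mathbb{F}_{q^{p_i}}$ is a subfield of $E$, so the subfield criterion \cite[Theorem 2.6]{lidl97} gives $p_i\mid m$ for every $i$. The primes $p_1,\ldots,p_l$ are distinct, so their product $u$ divides $m$.

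For $m\mid u$: since $p_i\mid u$, the subfield criterion places every $\mathbb{F}_{q^{p_i}}$ inside $\mathbb{F}_{q^{u}}$. Thus $\mathbb{F}_{q^u}$ contains $A$. By minimality of $E$ we get $E\subseteq \mathbb{F}_{q^u}$, and hence $m\mid u$.

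Together these give $m=u$. Equivalently, the argument for $F_i=\mathbb{F}_{q^{u_i}}$ in Lemma~\ref{lemma7} can be rerun with the index set $[l]$ in place of $[l]\setminus\{i\}$. The sub-packetization level is therefore $L=[E:\fq]=\prod_{i=1}^{l}p_i$.

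The only point needing care is the phrase ``smallest distinct primes satisfying~\eqref{con_p}''. This is a design choice and not part of the degree computation. The computation above holds for any admissible distinct primes $p_1,\ldots,p_l$. To minimize $L$, one takes the $l$ smallest primes with $\phi(q^{p_i}-1)\geq r-p_i+1\geq 2$, and the formula applies to them. I expect no real obstacle here. The one thing to check is that distinctness of the $p_i$ is used, so that the least common multiple of the $p_i$ equals their product; this is guaranteed by Construction~2.
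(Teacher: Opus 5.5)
Your proof is correct and is essentially the paper's argument: the paper states this theorem without a separate proof, but your two-way subfield-criterion divisibility is exactly the first paragraph of the proof of Lemma~\ref{lemma7}, run with $[l]$ in place of $[l]\setminus\{i\}$ (and you state the criterion more cleanly, as $p_i\mid m$). One small correction: Construction~2 as written does not explicitly require the $p_i$ to be distinct, so distinctness should be taken from the theorem's own hypothesis (``distinct primes''), just as the paper implicitly assumes it in Lemma~\ref{lemma7}.
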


\begin{remark}\label{sp_compare_2}
	According to Theorem~\ref{sp2}, if the dimension $k$ is strictly greater than $l+1$, then the sub-packetization level of RS codes in Construction 2 is strictly less than the lower bound of sub-packetization level for conventional repair schemes.
\end{remark}

\begin{example}
	For the $(17,9)$ RS code in Example \ref{example2} that is generated by Construction 2, the achieved sub-packetization level is $30$, while the lower bound of the sub-packetization level of $(17,9)$ RS codes under conventional repair schemes is $L=9699690$. 
\end{example}

\section{Conclusions}\label{conclusion}
In this manuscript, we introduce a PE repair scheme for scalar linear codes by introducing a novel parameter termed the  flexibility $t$, where $t\geq 1$. The conventional repair scheme corresponds to the case $t=1$, in which it has been shown that a huge cost at the sub- packetization level is necessary to attain the cut-set bound for the single-node repair. Traditionally, research efforts have centered on pursuing the conventional repair scheme with repair bandwidth achieving the cut-set bound or less than that of the naive repair approach, especially in a scenario where the repair locality is taken to its maximum value $n-1$. 
Our work identifies an intrinsic trade-off among the flexibility, the smallest possible sub-packetization level, and the minimum normalized repair bandwidth. It demonstrates that pursuing the repair scheme with repair locality $n-1$ for scalar MDS codes at the expense of extremely high sub-packetization level is not a wise strategy. Instead, a more favorable approach lies in selecting suitable flexibility for intermediate points~$1<t<\min\{n,n-k\}$, as these values avoid the exorbitantly high sub-packetization level burden while still maintaining a relatively low repair overhead. The derived trade-off curve provides a practical reference for selecting the optimal
flexibility $t$ tailored to specific scenarios.

In detail, we derive a lower bound of sub-packetization level for scalar MDS linear codes that achieve the cut-set bound for single-node repair under the PE repair framework. Furthermore, we analyze the trade-off between the flexibility, the smallest possible sub-packetization level, and the minimum normalized repair bandwidth.
We also construct two classes of RS codes achieving the cut-set bound for single-node repair under the PE repair framework, with a reasonable sub-packetization level that can be strictly lower than the lower bound of sub-packetization level established for the conventional repair framework \cite{tamo2017optimal,tamo2019repair}. This advantage enables the repair procedure of RS codes with optimal repair bandwidth to run efficiently on standard commercial computers. We provide two explicit RS codes and detail the corresponding repair schemes that achieve the cut-set bound. To validate the feasibility of our constructions, we implement the repair process via Magma.

For further investigation, the current lower bound on sub-packetization level for the PE repair schemes needs to be improved.
Additionally, the investigation of multi-node failure scenarios under the PE repair framework is also a worthwhile topic for further exploration.
\bibliographystyle{IEEEtran}
\bibliography{IEEEabrv,RS}
\end{document}